%% file: CAL-ECC_arxiv.tex
\documentclass{article} 
\usepackage{iclr2027_conference,times}

\input{math_commands.tex}

\PassOptionsToPackage{capitalize, noabbrev}{cleveref}

\usepackage[utf8]{inputenc}
\usepackage[T1]{fontenc}
\usepackage{hyperref}
\usepackage{url}
\usepackage{booktabs}
\usepackage{amsfonts,amssymb,amsmath,amsthm}
\usepackage{mathtools}
\usepackage{nicefrac}
\usepackage{microtype}
\usepackage{algorithm}
\usepackage{xcolor}
\usepackage{graphicx}
\usepackage{cleveref}
\usepackage{enumitem}
\usepackage{subcaption}
\usepackage{siunitx}
\usepackage{algpseudocode}
\usepackage{bm}
\usepackage{bbm}
\usepackage{multirow}
\usepackage{kotex}
\usepackage{rotating}
\usepackage{graphicx}

\newtheorem{theorem}{Theorem}[section]
\newtheorem{lemma}[theorem]{Lemma}

\newtheorem{proposition}[theorem]{Proposition}
\theoremstyle{definition}

\crefname{appendix}{Appendix}{Appendices}
\Crefname{appendix}{Appendix}{Appendices}
\crefname{proposition}{Proposition}{Propositions}
\Crefname{proposition}{Proposition}{Propositions}
\crefname{remark}{Remark}{Remarks}
\Crefname{remark}{Remark}{Remarks}
\crefname{corollary}{Corollary}{Corollaries}
\Crefname{corollary}{Corollary}{Corollaries}
\crefname{definition}{Definition}{Definitions}
\Crefname{definition}{Definition}{Definitions}
\crefname{assumption}{Assumption}{Assumptions}
\Crefname{assumption}{Assumption}{Assumptions}
\crefname{lemma}{Lemma}{Lemmas}
\Crefname{lemma}{Lemma}{Lemmas}

\newcommand{\Pinc}{\Phi_{\mathrm{inc}}}
\newcommand{\einc}{\hat{\mathbf{e}}_{\mathrm{inc}}}
\newcommand{\Finc}{F_{\mathrm{inc}}}

\usepackage{hyperref}
\usepackage{url}

\title{Certified Adaptive List Decoding for \\Error-Correcting Codes}

\author{Jaeho Hwang$^{1}$ \qquad Byung-Sun Won$^{2,*}$ \qquad Uiwon Hwang$^{3,}$\thanks{Corresponding authors (\{bswon, uiwon.hwang\}@ewha.ac.kr)}\\
	$^1$Human-Centered Artificial Intelligence Research Institute\\
	$^2$Institute of Mathematical Sciences\\
	$^3$Department of Computer Science and Engineering\\
	Ewha Womans University\\
}

\iclrfinalcopy 

\begin{document}

\maketitle

\begin{abstract}
Maximum-likelihood (ML) decoding of linear block codes is NP-hard, so neural
decoders are evaluated by how closely they approach it. We observe that proving
a candidate optimal can be far easier than finding it, and build a decoder that
does both. We cast decoding as a search over an absorbing-state discrete
diffusion model, whose sequential commitment of coordinates turns generation
into a tree, and run a probability beam search over it. A dual bound from a
linear-programming relaxation and a distance-based test, both derived from the
code, then certify per frame that the current candidate is the global ML
solution; because both are theorems, certified frames incur no
accuracy-computation trade-off. Across $15$ codes and five operating points we
certify up to $99.9\%$ of frames, the two tests dividing the space as their
derivations predict, one strong for sparse parity-check matrices and the other
for codes with a large minimum distance. Certification also behaves as a
per-frame reliability signal, with $96.7$-$100\%$ of all frame errors falling
on uncertified frames. Against an exact ML reference at $5$~dB the decoder is
within one standard deviation on three of the four codes where the reference is
uncensored, and against four neural baselines it attains the best bit-error rate
in $71$ of $75$ cells.
\end{abstract}

\section{Introduction}
\label{sec:intro}

Reliable digital communication rests on error-correcting codes, which let a
receiver recover a transmitted message from a noisy observation. The optimal
rule is maximum-likelihood (ML) decoding, but it is NP-hard for general linear
block codes~\citep{berlekamp1978}, so practical decoding is a search for good
approximations. Classical iterative decoders, most notably belief
propagation~\citep{gallager1962,richardson2001}, perform approximate
marginalization on the code's Tanner graph. Deep learning has since produced
\emph{model-based} decoders that unroll message passing into a trainable
network~\citep{nachmani2016,lugosch2017,nachmani2018,buchberger2021} and
\emph{model-free} decoders that learn the decoding map from reliability and
syndrome features~\citep{bennatan2018}. Model-free decoders currently report the strongest benchmark results, with masked-attention transformers~\citep{choukroun2022ecct,park2024crossmpt} and with generative variants that recast decoding as iterative denoising~\citep{choukroun2022ddecc,sbecc2026,lei2025}.

Each of these decoders tries to find the ML solution, but none of them can tell whether its output is that solution. Finding $\arg\min_{\mathbf{e}} \sum_j e_j |\ell_j|$ subject to the syndrome constraint is hard.
\emph{Verifying} that a particular $\mathbf{e}$ attains that minimum, for a particular
received word, can be much easier. Classical coding theory has long exploited this asymmetry, in LP decoding~\citep{feldman2005} and in ordered-statistics decoding~\citep{fossorier1995,kaneko1994}. But such certificates have always been tied to a decoder whose structure makes the bound computable: a convex relaxation solved to optimality, or a search in a fixed, prescribed order. A learned decoder is neither.

We propose CAL-ECC (Certified Adaptive List decoding for Error-Correcting Codes), which uses an absorbing-state discrete diffusion model over
error vectors~\citep{austin2021,shi2024} as its candidate generator. Because absorbing-state diffusion commits coordinates \emph{one at a time}, generation becomes a search tree whose width adapt to
the difficulty of the received word, and every node holds a partial
solution that can be tested for optimality. The diffusion model supplies the branching rule, and two bounds derived from the code, a dual bound from an LP relaxation and a distance-based test, decide when the search may stop. Both are theorems, so a frame passing either has provably attained the global ML solution.

Certification is decided frame by frame, so the decoder returns each estimate with a statement about it. That statement also predicts correctness, because nearly all frame errors occur on uncertified frames. On the codes with an exact ML reference we do not
approach the ceiling but reach it, on BCH$(63,51)$ at $6$~dB with an average of
$0.6$ function evaluations per frame.

\paragraph{Contributions.}
\begin{enumerate}[leftmargin=1.6em,itemsep=1pt,topsep=2pt]
	\item We introduce CAL-ECC, a neural decoder that combines adaptive list search with per-frame ML certification. Absorbing-state discrete diffusion constructs a search tree over error vectors, and probability beam search adapts its width to the difficulty of each received frame.
	\item We derive two complementary optimality certificates, an LP dual bound and a distance-based test. We prove that satisfying either condition establishes global ML optimality, enabling certified early termination rather than stopping at the first syndrome-consistent candidate.
	\item We show that certification reflects code structure, with the dual bound effective for sparse parity-check matrices and the distance-based test for codes with a large known minimum distance. Certification rates reach 99.9\%, and 96.7--100\% of frame errors fall on uncertified frames in the reliability analysis.
	\item Across 15 codes and five operating points, CAL-ECC achieves the lowest bit-error rate in 71 of 75 settings against four neural baselines. At 5\,dB, its frame-error rate is within one standard deviation of the exact ML reference on three of four evaluated codes with uncensored references and within two on the fourth.
\end{enumerate}

\section{Related Work}
\label{sec:related}

\paragraph{Neural decoders.}
Model-based decoders unroll belief propagation or min-sum into a trainable
network, preserving the Tanner-graph
computation~\citep{nachmani2016,cammerer2017,lugosch2017,nachmani2018}, and
remain tied to the connectivity of the underlying procedure. Model-free
decoders learn the map from channel observation to bits with general
architectures, made practical at moderate block length by the
reliability--syndrome preprocessing of~\citet{bennatan2018}; ECCT~\citep{choukroun2022ecct} built a transformer on that representation with self-attention masked by the parity-check matrix. CrossMPT~\citep{park2024crossmpt} treats magnitude and syndrome as distinct
modalities coupled by masked cross-attention, and we adopt it as our backbone.

\paragraph{Generative decoders.}
DDECC~\citep{choukroun2022ddecc} decodes by reverse denoising under code constraints, SB-ECC~\citep{sbecc2026} integrates a probability-flow ODE with the solver budget as a latency--accuracy knob, and ECCFM~\citep{lei2025} distils that trajectory into a single step. All three share a continuous state in which every coordinate is updated at once, so no
partial solution ever exists until the trajectory ends. Absorbing-state
discrete diffusion~\citep{austin2021,shi2024} commits coordinates
sequentially, so at every step there is a partially determined error vector,
and it is this object that both the beam search and the certificates act on.

\paragraph{List decoding, certificates and stopping rules.}
Successive cancellation list decoding for polar codes~\citep{tal2015} and
ordered-statistics decoding (OSD) for general linear
codes~\citep{fossorier1995} both search a structured candidate set and pick
the most likely valid member~\citep{chase1972}, the former with a fixed list
size and the latter in a fixed reprocessing order, which is what makes OSD's
early-termination conditions computable. LP decoding~\citep{feldman2005} certifies differently, from the integrality of a convex relaxation over the codeword polytope rather than from any search order, and GRAND~\citep{duffy2019} is ML by construction only because its guessing order is the likelihood order, a guarantee its tractable variants give up~\citep{duffy2021,duffy2022}. Each certificate is thus tied to the structure of its decoder, an exactly solved relaxation or a prescribed order; our beam width adapts to the received word, and to our knowledge no neural decoder reports per-frame optimality certificates. Soft-output GRAND~\citep{yuan2023sogrand} is
the closest prior work to the reliability signal of
Section~\ref{subsec:exp-cert}: a soft output is a calibrated forecast,
available everywhere and exact nowhere, whereas a certificate is a proof on
the frames where it is available and no statement elsewhere.
Appendix~\ref{app:related} expands on all of these.

\section{Preliminaries}
\label{sec:prelim}

\paragraph{Syndrome decoding.}
A binary linear $[n,k]$ code $\mathcal{C}\subseteq\mathbb{F}_2^n$ with
parity-check matrix (PCM) $H\in\mathbb{F}_2^{(n-k)\times n}$ satisfies
$\mathcal{C}=\{ \mathbf{c} : H \mathbf{c}^{\mathsf{T}}=\bm{0}\}$. We write $H_r$ for the $r$-th row,
$\mathrm{supp}(H_r)$ for its support, and $d_{\min}$ for the minimum distance.
Under BPSK over AWGN, with $\mathrm{bpsk}(b)=1-2b$, the receiver observes
$\mathbf{y} =\mathrm{bpsk}(\mathbf{c})+\mathbf{z}$, $\mathbf{z}\sim\mathcal{N}(\bm 0,\sigma^2I_n)$, and forms
\begin{equation}
	r_j=\mathbbm{1}[y_j<0],\qquad
	\bm{s}=H\mathbf{r}^{\mathsf{T}}\in\mathbb{F}_2^{\,n-k},\qquad
	\ell_j=2y_j/\sigma^2 ,
	\label{eq:hard-syndrome-llr}
\end{equation}
so $|\ell_j|$ measures bit reliability, and conditioning on $(|\boldsymbol{\ell}|,\bm{s})$ is codeword-invariant under BPSK symmetry~\citep{bennatan2018}. Writing $\mathbf{e} = \mathbf{r}\oplus \mathbf{c}$, the identity $H \mathbf{e}^{\mathsf{T}}=\bm{s}$ turns decoding into estimation of $\mathbf{e}$, with $\hat{\mathbf{c}} = \mathbf{r} \oplus \hat{\mathbf{e}}$~\citep{richardson2008}. 
\begin{lemma}
	\label{lem:ml}
	Over BPSK+AWGN, ML decoding is equivalent to
	\begin{equation}
		\hat{\mathbf{e}}_{\mathrm{ML}}
		=\arg\min_{\,\mathbf{e}\,:\;H\mathbf{e}^{\mathsf{T}}=\bm{s}}\;\Phi(\mathbf{e}),
		\qquad
		\Phi(\mathbf{e}):=\sum\nolimits_{j} e_j\,|\ell_j| ,
		\label{eq:ml}
	\end{equation}
	and $\hat{\mathbf{c}}_{\mathrm{ML}} = \mathbf{r} \oplus\hat{\mathbf{e}}_{\mathrm{ML}}$.
\end{lemma}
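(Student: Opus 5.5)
We start from the ML rule over codewords, $\hat{\mathbf{c}}_{\mathrm{ML}}=\arg\max_{\mathbf{c}\in\mathcal{C}} p(\mathbf{y}\mid\mathbf{c})$, and first turn it into a linear objective in the bits. The channel is memoryless Gaussian, so
\[
\log p(\mathbf{y}\mid\mathbf{c}) = \text{const} - \frac{1}{2\sigma^2}\sum_j \bigl(y_j-\mathrm{bpsk}(c_j)\bigr)^2 .
\]
Because $\mathrm{bpsk}(c_j)^2=1$ and $y_j^2$ does not depend on $\mathbf{c}$, expanding the square leaves $\log p(\mathbf{y}\mid\mathbf{c}) = \text{const} + \sum_j y_j(1-2c_j)/\sigma^2 = \text{const}' - \sum_j c_j\ell_j$, using $\ell_j=2y_j/\sigma^2$. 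Hence ML decoding is equivalent to minimizing $\sum_j c_j\ell_j$ over $\mathbf{c}\in\mathcal{C}$.

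Next we reparametrize by the error pattern $\mathbf{e}=\mathbf{r}\oplus\mathbf{c}$. This map is a bijection of $\mathbb{F}_2^n$. Since $H\mathbf{c}^{\mathsf{T}}=\bm 0$ if and only if $H\mathbf{e}^{\mathsf{T}}=H\mathbf{r}^{\mathsf{T}}=\bm s$, it sends $\mathcal{C}$ exactly onto the coset $\{\mathbf{e}: H\mathbf{e}^{\mathsf{T}}=\bm s\}$. This step is pure linear algebra over $\mathbb{F}_2$.

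The key step is to rewrite the objective in terms of $\mathbf{e}$. For each coordinate we use $c_j=r_j\oplus e_j = r_j + e_j(1-2r_j)$ as integers, which gives
\[
\sum_j c_j\ell_j=\sum_j r_j\ell_j+\sum_j e_j(1-2r_j)\ell_j .
\]
The first sum does not depend on $\mathbf{e}$. In the second sum, $r_j=\mathbbm{1}[y_j<0]$ means $1-2r_j=\mathrm{sign}(y_j)=\mathrm{sign}(\ell_j)$, so $(1-2r_j)\ell_j=|\ell_j|$. When $y_j=0$ the sign convention is irrelevant, because then $\ell_j=0$. The objective therefore equals a constant plus $\Phi(\mathbf{e})$.

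Combining the three steps, the minimizers correspond under the bijection. This gives \eqref{eq:ml} and $\hat{\mathbf{c}}_{\mathrm{ML}}=\mathbf{r}\oplus\hat{\mathbf{e}}_{\mathrm{ML}}$. The only delicate point is this sign identity together with the tie/zero case. Ties in the argmin are inherited identically on both sides, so the equivalence holds as sets of optimizers.
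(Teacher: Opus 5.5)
Your proof is correct and follows essentially the same route as the paper's. The paper decomposes the squared distance $\|\mathbf{y}-\mathrm{bpsk}(\mathbf{c})\|_2^2$ coordinatewise around the hard decision, with each flip costing $4|y_j|=2\sigma^2|\ell_j|$, whereas you decompose the equivalent correlation metric $\sum_j c_j\ell_j$ via $c_j=r_j+e_j(1-2r_j)$ — the same identity up to an affine rescaling — and both arguments finish with the syndrome coset bijection $\mathbf{c}\mapsto\mathbf{r}\oplus\mathbf{c}$.
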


\noindent
The proof is in Appendix~\ref{app:proofs}. We call $\Phi$ the \emph{likelihood cost} and write $\Phi^\star:=\Phi(\hat{\mathbf{e}}_{\mathrm{ML}})$; it is additive over flipped coordinates, so a partially committed assignment has a valid running cost.

\section{Method}
\label{sec:method}
\begin{figure}[t!]
	\centering
	\includegraphics[width=\textwidth]{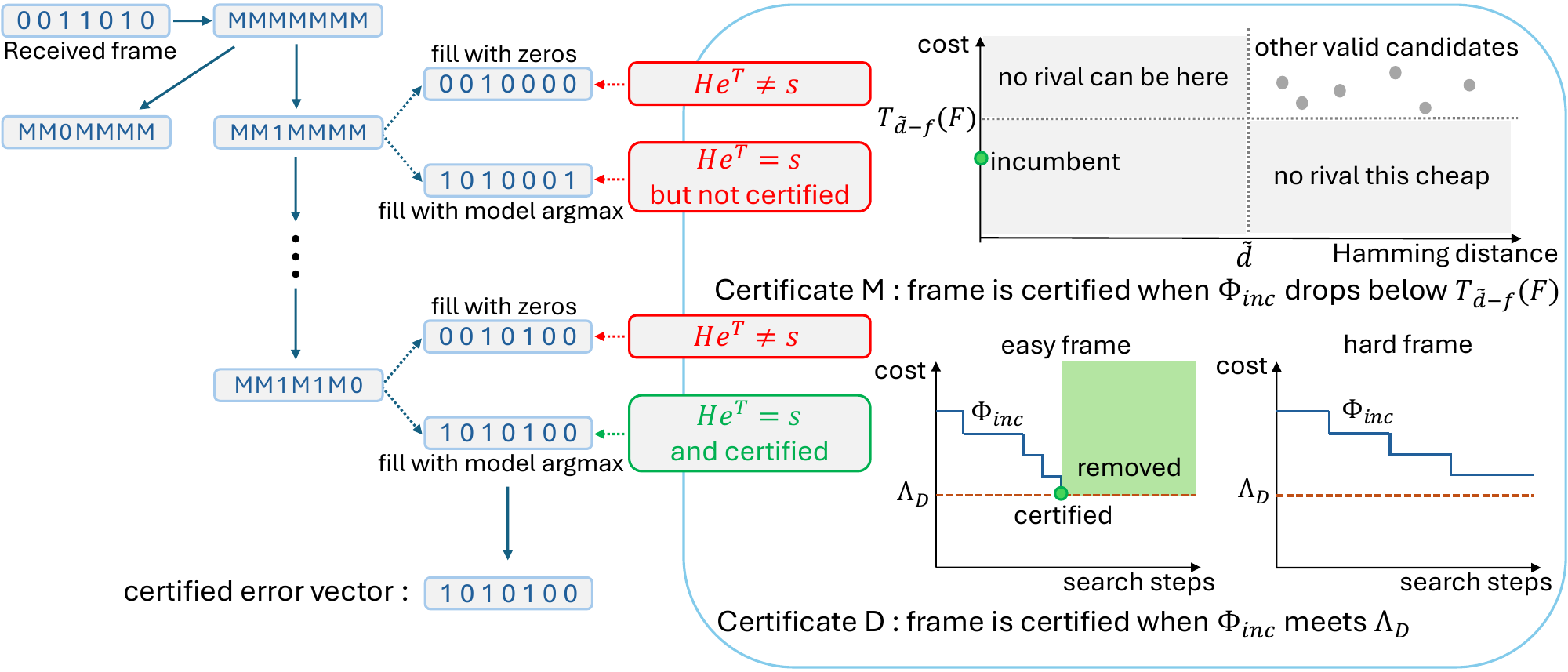}
	\caption{Overview of the proposed CAL-ECC.}
	\label{overview}
	\vspace{-5pt}
\end{figure}

Where prior work draws a single estimate from a generative decoder, we treat
it as a \emph{stochastic candidate generator} and add exact syndrome
filtering, which uses $H$ to verify candidates algebraically, and certified
early termination (Section~\ref{subsec:cert}), which halts a frame as soon as
its best valid candidate is provably the global ML solution.

\paragraph{Selection.}
Given a candidate set $\mathcal{L}_\theta$ we form
$\mathcal{L}^{\mathrm{valid}}=\{\mathbf{e}\in\mathcal{L}_\theta:H\mathbf{e}^{\mathsf{T}}=\bm{s}\}$
and return
\begin{equation}
	\hat{\mathbf{e}} =\arg\min_{\mathbf{e}\in\mathcal{L}^{\mathrm{valid}}}\Phi(\mathbf{e}),
	\qquad \hat{\mathbf{c}} = \mathbf{r} \oplus\hat{\mathbf{e}},
	\label{eq:select}
\end{equation}
falling back to the minimizer of $\Phi$ over all of $\mathcal{L}_\theta$ when
$\mathcal{L}^{\mathrm{valid}}=\emptyset$.

\begin{proposition}[List-restricted ML optimality]
	\label{prop:list-ml}
	If $\mathcal{L}^{\mathrm{valid}}\neq\emptyset$, then $\hat{\mathbf{e}}$ in
	\eqref{eq:select} is the exact ML estimate restricted to
	$\mathcal{L}_\theta$.
\end{proposition}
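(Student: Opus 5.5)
The plan is to reduce the claim to Lemma~\ref{lem:ml} by fixing what ``ML estimate restricted to $\mathcal{L}_\theta$'' means. We interpret it as the maximizer of the likelihood $p(\mathbf{y}\mid \mathbf{c})$ over the codewords $\mathbf{c}=\mathbf{r}\oplus\mathbf{e}$ with $\mathbf{e}\in\mathcal{L}_\theta$. Only syndrome-consistent $\mathbf{e}$ produce codewords, because $H(\mathbf{r}\oplus\mathbf{e})^{\mathsf{T}}=\bm{s}\oplus H\mathbf{e}^{\mathsf{T}}$ vanishes iff $H\mathbf{e}^{\mathsf{T}}=\bm{s}$. So the feasible set for restricted ML is exactly $\mathcal{L}^{\mathrm{valid}}$, which is nonempty by hypothesis and finite since $\mathcal{L}_\theta$ is a finite list. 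Hence a maximizer exists.

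The key step is to reuse the computation behind Lemma~\ref{lem:ml} pointwise, rather than only its argmin statement. Under BPSK+AWGN,
\[
\log p(\mathbf{y}\mid\mathbf{c}) = \text{const} + \tfrac{1}{\sigma^2}\textstyle\sum_j y_j\,\mathrm{bpsk}(c_j).
\]
Writing $c_j=r_j\oplus e_j$, flipping coordinate $j$ relative to the hard decision changes $y_j\,\mathrm{bpsk}(c_j)$ from $|y_j|$ to $-|y_j|$. Summing over coordinates gives
\[
\log p(\mathbf{y}\mid \mathbf{r}\oplus\mathbf{e}) = C(\mathbf{y}) - \Phi(\mathbf{e}),
\]
where $C(\mathbf{y})$ does not depend on $\mathbf{e}$. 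This identity holds for every $\mathbf{e}$, not only over the full syndrome coset. Maximizing likelihood over any subset of the coset, in particular $\mathcal{L}^{\mathrm{valid}}$, is therefore equivalent to minimizing $\Phi$ over that subset. That is exactly the selection rule \eqref{eq:select}.

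Finally, the fallback branch is not used, since $\mathcal{L}^{\mathrm{valid}}\neq\emptyset$. Ties in $\Phi$ correspond to ties in likelihood, so any tie-breaking yields a restricted-ML estimate. The main obstacle is only definitional. Lemma~\ref{lem:ml} is stated as an equivalence of argmins over the whole coset, and we need the stronger pointwise monotone relation between $\Phi$ and the likelihood. I would re-derive that identity from the proof in Appendix~\ref{app:proofs}, rather than cite the lemma as a black box.
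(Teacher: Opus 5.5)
Your proposal is correct and follows essentially the same route as the paper. The paper also argues that on $\{\mathbf{e}:H\mathbf{e}^{\mathsf{T}}=\bm{s}\}$ the negative log-likelihood is an affine, strictly increasing function of $\Phi(\mathbf{e})$ with $\mathbf{e}$-independent coefficients, that $\mathcal{L}^{\mathrm{valid}}$ is exactly the syndrome-consistent part of $\mathcal{L}_\theta$, and hence that minimizing $\Phi$ there maximizes the likelihood; your remark that this pointwise relation comes from the computation in the proof of Lemma~\ref{lem:ml} rather than from its argmin statement is accurate, and the paper simply cites the lemma for it.
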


\noindent
This is immediate from Lemma~\ref{lem:ml} (Appendix~\ref{app:proofs}):
selection is always optimal within the list, so performance is governed
entirely by whether generation produces a set containing
$\hat{\mathbf{e}}_{\mathrm{ML}}$, and closing the remaining gap to
$\Phi^\star$ is what the certificates do.

\subsection{Candidate Generation}
\label{subsec:gen}

We instantiate the generator as absorbing-state
diffusion~\citep{austin2021,shi2024} over error vectors, with alphabet
$\mathcal{A}=\mathbb{F}_2\cup\{\mathsf{m}\}$ where $\mathsf{m}$ marks an
uncommitted coordinate. The forward process replaces each coordinate of the
true error $\mathbf{e}^\star=\mathbf{r}\oplus\mathbf{c}$ independently by
$\mathsf{m}$ with probability $t\in[0,1]$, giving $\mathbf{e}_t$ with mask set
$\mathcal{M}_t$, and the reverse model is trained by cross-entropy over masked
coordinates, a variational bound on the likelihood~\citep{shi2024}:
\begin{equation}
	\mathcal{L}(\theta)
	=-\,\mathbb{E}_{\mathbf{e}^\star,t,\mathbf{e}_t}\!\left[\tfrac{1}{|\mathcal{M}_t|}
	\sum\nolimits_{j\in\mathcal{M}_t}\log p_\theta\!\bigl(e^{\star}_{j}\mid \mathbf{e}_t,|\boldsymbol{\ell}|,\bm{s},t\bigr)\right].
	\label{eq:train}
\end{equation}
The syndrome enters only as conditioning and as the exact verifier in
\eqref{eq:select}, never as a differentiable penalty. At inference we start from
$\mathbf{e}_{(0)}=[\mathsf{m}]^n$ and commit coordinates over $T$ steps, choosing the
commitment order by model confidence. The model is conditioned on $t$, which must track the \emph{actual} fraction of uncommitted coordinates, $t=|\mathcal{M}|/n$, counted from the mask directly because the unit propagation of Section~\ref{subsec:beam} commits several coordinates at once.

\subsection{Probability Beam Search}
\label{subsec:beam}

Sequential commitment makes the generative process a tree. We score
a partial assignment $b$ by its accumulated log-probability
$\mathrm{score}(b)=\sum_{(j,a)\in b_{\mathrm{br}}}\log p_\theta(e_j=a\mid\text{context})$, where $b_{\mathrm{br}}\subseteq b$ are the coordinates committed by branching. At
each step every surviving assignment is branched on the two values of one
uncommitted coordinate, and we prune with a relative threshold
$\tau\in(0,1]$ against the best score in the frame,
\begin{equation}
	\mathrm{score}(b)\;\ge\;\max\nolimits_{b'}\mathrm{score}(b')+\log\tau ,
	\label{eq:beam-prune}
\end{equation}
retaining at most the top $L_{\max}$. The coordinate branched on is the uncommitted one with the largest $\max_a p_\theta(e_j=a\mid b)$ averaged over the surviving beams of the frame, and coordinates committed by unit propagation are forced by the code and add nothing to the score. Pruning adapts to difficulty: on easy frames the beam collapses, on hard frames it expands.

\paragraph{Unit propagation.}
If a parity check has exactly one uncommitted coordinate in its support, that
coordinate is determined by the parity the check requires, so committing it
costs no forward pass and removes no valid completion; we apply this to
exhaustion after every commitment, as a per-frame operation since all beams of
a frame share the mask pattern. The effect is largest for sparse $H$: on
MacKay$(96,48)$ a trajectory that would need $T=96$ forward passes completes
in at most $60$, and on the dense BCH$(63,45)$ the saving is from $63$ to
$59$. All NFE figures we report include this reduction.

\paragraph{Beam exhaustion.}
A lower bound $\Lambda_{\mathrm{B}}$ on the cost of any valid completion of the
\emph{surviving} beam is available from the committed flips and the checks
still unsatisfied (Appendix~\ref{app:beam-exhaust}); if the incumbent $\Pinc$
of Section~\ref{subsec:cert} costs no more, the search may stop
(Proposition~\ref{prop:beam-exhaust}). Because pruned branches lie outside its
scope, $\Lambda_{\mathrm{B}}$ is \emph{not} a lower bound on $\Phi^\star$ and
carries no claim of global optimality; we report the two stopping rates
separately throughout.

\subsection{Certified Early Termination}
\label{subsec:cert}

Viewing the beam search as branch-and-bound for \eqref{eq:ml}, the diffusion
model supplies the branching rule and $\tau$-pruning corresponds to node
elimination. The missing ingredient is a bound on $\Phi^\star$ itself. If the
incumbent cost is no larger, the incumbent is already globally optimal and the
frame may stop. At every step, immediately after the forward pass and
\emph{without any extra forward pass}, we form two completions of each
surviving assignment. $\mathbf{e}^{\mathrm{part}}$ fills uncommitted coordinates with $0$ and $\mathbf{e}^{\mathrm{greedy}}$ fills them with the $\arg\max$ of the current probabilities.

\paragraph{Certificate D: a dual bound.}
Let $\mathcal{R}=\{r:s_r=1\}$ be the checks unsatisfied by the hard decision. Every valid $\mathbf{e}$ flips an odd
number of coordinates in each such support, hence
$\sum_{j\in\mathrm{supp}(H_r)}e_j\ge1$, and the covering LP
$\min_{\mathbf{x}\in[0,1]^n}\sum_j|\ell_j|x_j$ subject to those constraints
lower-bounds $\Phi^\star$. Its dual is the packing LP
\begin{equation}
	\max_{\mathbf{y}\ge\bm{0}}\ \sum\nolimits_{r\in\mathcal{R}} y_r
	\quad\text{such that}\quad
	\sum\nolimits_{r\,:\,j\in\mathrm{supp}(H_r)} y_r \ \le\ |\ell_j|, \ \ \text{ for all }j ,
	\label{eq:lp-dual}
\end{equation}
so by weak duality~\citep{schrijver1998}, $\sum_r y_r\le\Phi^\star$ for
\emph{any} feasible $\mathbf{y}$. We construct one by greedy dual ascent over $\mathcal{R}$ (Appendix~\ref{app:algorithm}), capped at $K=12$ rounds; every intermediate $\mathbf{y}$ is feasible, so
$\Lambda_{\mathrm{D}}:=\sum_r y_r$ is valid \emph{at whatever round the ascent
	is truncated}, and Appendix~\ref{app:lpgap} measures what the exact optimum of
\eqref{eq:lp-dual} would add. Since $\Lambda_{\mathrm{D}}$ bounds $\Phi^\star$, a constant of the frame, it is computed \emph{once} at the fully masked state and reused at every step.

\paragraph{Certificate M: a distance-based test.}
The second certificate uses $d_{\min}$ and depends on the incumbent alone. Let
$F=\mathrm{supp}(\mathbf{e})$, $f=|F|$, and let $T_m(F)$ be the sum of the $m$ smallest
$|\ell_j|$ over $j\notin F$.

\begin{theorem}[Distance-based ML certificate]
	\label{thm:dmin}
	Let $\mathcal{C}$ be a binary $[n,k,d_{\min}]$ linear code, and let $\tilde d\le d_{\min}$. 
	If a valid $\mathbf{e}$ satisfies $f<\tilde d$ and $\Phi(\mathbf{e})\le T_{\tilde d-f}(F)$, then $\Phi(\mathbf{e})=\Phi^\star$.
\end{theorem}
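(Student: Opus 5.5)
Our plan is to argue by contradiction: any other valid error pattern must differ from $\mathbf{e}$ by a nonzero codeword, hence by many coordinates, and this forces it to pay at least $T_{\tilde d-f}(F)$. Let $\mathbf{e}'$ be any valid error vector, so $H\mathbf{e}'^{\mathsf{T}}=\bm{s}=H\mathbf{e}^{\mathsf{T}}$. By Lemma~\ref{lem:ml} it suffices to show $\Phi(\mathbf{e}')\ge\Phi(\mathbf{e})$. If $\mathbf{e}'=\mathbf{e}$ there is nothing to prove. Otherwise $\mathbf{c}':=\mathbf{e}\oplus\mathbf{e}'$ satisfies $H\mathbf{c}'^{\mathsf{T}}=\bm 0$ and is nonzero. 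It is therefore a nonzero codeword, so $|\mathrm{supp}(\mathbf{e})\,\triangle\,\mathrm{supp}(\mathbf{e}')|\ge d_{\min}\ge\tilde d$.

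The key counting step follows. Write $F'=\mathrm{supp}(\mathbf{e}')$. The symmetric difference splits as $(F\setminus F')\cup(F'\setminus F)$, and $|F\setminus F'|\le f$. Hence
\[
|F'\setminus F|\ \ge\ \tilde d-|F\setminus F'|\ \ge\ \tilde d-f\ \ge 1,
\]
where the last inequality uses $f<\tilde d$. Thus $\mathbf{e}'$ flips at least $\tilde d-f$ coordinates outside $F$. Because all costs $|\ell_j|$ are nonnegative, we obtain
\[
\Phi(\mathbf{e}')\ \ge\ \sum_{j\in F'\setminus F}|\ell_j|\ \ge\ T_{\tilde d-f}(F).
\]
Here we drop the contribution of $F'\cap F$, and any $m\ge\tilde d-f$ distinct indices outside $F$ cost at least the $\tilde d-f$ smallest such values. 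Note that $T_{\tilde d-f}(F)$ is well defined, since $|F'\setminus F|\ge\tilde d-f$ shows that at least $\tilde d-f$ coordinates lie outside $F$.

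Combining this with the hypothesis gives $\Phi(\mathbf{e}')\ge T_{\tilde d-f}(F)\ge\Phi(\mathbf{e})$ for every valid $\mathbf{e}'$. Since $\mathbf{e}$ is itself valid, $\Phi(\mathbf{e})=\Phi^\star$. The only delicate point is the counting step: the naive lower bound of $\tilde d$ flips for $\mathbf{e}'$ is false. The correct move is to subtract the at most $f$ positions where $\mathbf{e}'$ "cancels" $\mathbf{e}$ and to charge $\mathbf{e}'$ only for its flips outside $F$. That is exactly why the threshold uses $T_{\tilde d-f}$ over the complement of $F$. The result also does not claim that $\mathbf{e}$ is the unique minimizer, since ties at equality are allowed.
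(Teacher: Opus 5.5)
Your proof is correct and follows essentially the same route as the paper's: both take $\mathbf{c}=\mathbf{e}\oplus\mathbf{e}'$ as a nonzero codeword, use $|\mathrm{supp}(\mathbf{c})\setminus F|=|F'\setminus F|\ge\tilde d-f$, and bound the cost of those flips below by $T_{\tilde d-f}(F)$. The only cosmetic difference is that the paper works with the difference $\Phi(\mathbf{e}')-\Phi(\mathbf{e})$ and bounds the cancelled part $\mathrm{supp}(\mathbf{c})\cap F$ by $\Phi(\mathbf{e})$, whereas you drop the nonnegative cost on $F'\cap F$ directly; both yield $\Phi(\mathbf{e}')\ge T_{\tilde d-f}(F)\ge\Phi(\mathbf{e})$.
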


\noindent
The proof is in Appendix~\ref{app:proof-dmin}. When $f\ge\tilde d$ the test is
vacuous, as $F$ may then contain a whole codeword, and no certification is attempted. $\tilde d$ is a fixed invariant of the code, not a
hyperparameter: we use the BCH bound for BCH codes and the minimum generator
row weight for polar codes (Table~\ref{tab:dmin},
Appendix~\ref{app:code-identity}), and where no lower bound is available the
certificate is disabled, degrading gracefully to Certificate~D alone.

\paragraph{Termination rule.}
A frame is certified, and removed from the search with output
$\einc$, when
\begin{equation}
	\Pinc\;\le\;\Lambda_{\mathrm{D}}
	\qquad\text{or}\qquad
	\Pinc\;\le\;T_{\tilde d-f}(\Finc) .
	\label{eq:stop}
\end{equation}
Frames may additionally stop by beam exhaustion
(Proposition~\ref{prop:beam-exhaust}). Those stops are counted separately and
carry no global claim.

\begin{theorem}[Correctness of certification]
	\label{thm:global-ml}
	A frame terminated by \eqref{eq:stop} outputs
	$\Phi(\einc)=\Phi^\star$, so its output is a global ML solution.
	Consequently its BER and FER are no larger than those obtained by running
	the search to completion.
\end{theorem}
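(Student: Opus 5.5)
The plan is a case split on which disjunct of \eqref{eq:stop} fired. Both cases rest on one preliminary fact: the incumbent $\einc$ is \emph{valid}, i.e.\ $H\einc^{\mathsf{T}}=\bm{s}$. This holds because the incumbent is, by construction, the best candidate surviving the exact syndrome filter of \eqref{eq:select}. Since $\hat{\mathbf{e}}_{\mathrm{ML}}$ minimizes $\Phi$ over the valid set (Lemma~\ref{lem:ml}), validity gives $\Phi^\star\le\Pinc$ in every case. It then suffices to show the reverse inequality $\Pinc\le\Phi^\star$ under each stopping condition.

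\emph{Case D.} Suppose $\Pinc\le\Lambda_{\mathrm{D}}$. The vector $\mathbf{y}$ produced by the truncated greedy dual ascent is feasible for the packing LP \eqref{eq:lp-dual}. The corresponding primal covering LP is a relaxation of \eqref{eq:ml}, for two reasons:
\begin{itemize}
\item[] every valid $\mathbf{e}$ flips an odd number, hence at least one, of the coordinates in $\mathrm{supp}(H_r)$ for each $r\in\mathcal{R}$;
\item[] $\mathbf{e}\in\{0,1\}^n\subseteq[0,1]^n$, and its primal objective $\sum_j|\ell_j|e_j$ equals $\Phi(\mathbf{e})$.
\end{itemize}
Weak duality then gives $\Lambda_{\mathrm{D}}=\sum_r y_r\le\sum_j|\ell_j|e_j=\Phi(\mathbf{e})$ for every valid $\mathbf{e}$, and in particular $\Lambda_{\mathrm{D}}\le\Phi^\star$. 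Chaining, $\Pinc\le\Lambda_{\mathrm{D}}\le\Phi^\star\le\Pinc$, so equality holds throughout.

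Two details need care here. First, $\Lambda_{\mathrm{D}}$ is computed once at the fully masked state. This is harmless because it bounds $\Phi^\star$, which depends only on $(\bm{\ell},\bm{s})$ and not on the search state. Second, the dual ascent must preserve feasibility at every intermediate round, not only at termination. I will check this from the update rule in Appendix~\ref{app:algorithm}: each increment of $y_r$ should be the minimum residual slack $|\ell_j|-\sum_{r'\ni j}y_{r'}$ over $j\in\mathrm{supp}(H_r)$, which keeps all constraints satisfied.

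\emph{Case M.} Suppose instead that $f<\tilde d$ and $\Pinc\le T_{\tilde d-f}(\Finc)$. Applying Theorem~\ref{thm:dmin} to the valid vector $\einc$ gives $\Pinc=\Phi^\star$ directly. The only remaining point is the requirement $f<\tilde d$. The termination rule attempts Certificate~M only when it holds, since the test is declared vacuous otherwise, so the case analysis must record this guard explicitly.

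\emph{Consequence for BER and FER.} On a certified frame the output is a global ML solution $\einc$. Running the search to completion can only replace the incumbent by a valid candidate of strictly smaller cost, and no such candidate exists because $\Pinc=\Phi^\star$. So the completed search returns either the same vector or, under ties in $\Phi$, another minimizer, which has equal likelihood. The comparison in the statement should therefore be read as saying that, frame by frame, certification never returns a worse estimate than the full search, so early stopping incurs no loss relative to it. The error rates are compared in expectation over frames (equivalently, under a fixed tie-breaking rule), and ties have probability zero under continuous AWGN. This last step, making precise what ``no larger BER/FER'' means when ML solutions tie and the full search may itself be suboptimal, is where I expect the main care to be needed. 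I would handle it by noting that ties are a null event, so almost surely the certified output coincides with the unique ML estimate, which any full search can at best match.
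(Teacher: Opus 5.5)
Your proposal is correct and matches the paper's proof. Validity of the incumbent gives $\Pinc\ge\Phi^\star$; in the first branch, weak duality for the greedy dual (feasible at every round) gives $\Pinc\le\Lambda_{\mathrm{D}}\le\Phi^\star$; and the second branch is Theorem~\ref{thm:dmin} applied to $\einc$. Your handling of the BER/FER consequence is a more careful version of the paper's one-line appeal to Proposition~\ref{prop:list-ml}: the completed search retains the incumbent, so it can only return another cost-$\Phi^\star$ vector, and that vector almost surely coincides with $\einc$ because ties in $\Phi$ are null events under AWGN.
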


\begin{proof}
	For the first branch, $\Lambda_{\mathrm{D}}\le\Phi^\star$ by weak duality
	of \eqref{eq:lp-dual}, so $\Pinc\le\Phi^\star$. Since
	$\einc$ is valid we also have $\Pinc\ge\Phi^\star$, and
	equality follows. The second branch is Theorem~\ref{thm:dmin}. The BER/FER
	claim holds because list-restricted optimality
	(Proposition~\ref{prop:list-ml}) is weaker than global optimality.
\end{proof}

\noindent
The floating-point implementation certifies to a tolerance $\varepsilon$ (Appendix~\ref{app:numerics}). The two certificates exploit different
structure and are complementary: Certificate~D is strong for sparse $H$, where
check supports rarely overlap and the feasible region of \eqref{eq:lp-dual} is
large, and Certificate~M for dense $H$ with a large $d_{\min}$, while at low
SNR $f$ frequently exceeds $\tilde d$ and voids Certificate~M by construction
(Section~\ref{subsec:exp-cert}).

\section{Experiments}
\label{sec:exp}

\paragraph{Setup.}
We evaluate on BCH~\citep{bose1960}, polar~\citep{arikan2009},
LDPC~\citep{gallager1962}, MacKay~\citep{mackay1999} and
CCSDS~\citep{ccsds2011} codes, spanning block lengths $31$--$128$ and rates
$0.49$--$0.81$, with parity-check matrices verified against the public
references (Appendix~\ref{app:code-identity}). All model-free entries use the same backbone scale ($N_{\mathrm{dec}}{=}6$, $d_{\mathrm{model}}{=}128$), and our generator uses the CrossMPT backbone so that the comparison isolates the decoding method rather than the architecture; each decoder follows its own published training schedule, ours included (Appendix~\ref{app:impl}). At inference we branch on one coordinate per step with beam width $L_{\max}{=}256$ and report three thresholds $\tau\in\{10^{-1},10^{-2},10^{-3}\}$; each cell is decoded until $500$ frame errors or a frame cap. Baselines are ECCT~\citep{choukroun2022ecct} and
CrossMPT~\citep{park2024crossmpt}, both single-shot, and the iterative
DDECC~\citep{choukroun2022ddecc} and SB-ECC~\citep{sbecc2026}, the latter two
with the standard syndrome stopping rule enabled so that their reported NFE is
realised cost rather than budget. Details are in Appendix~\ref{app:impl}.

\subsection{Accuracy}
\label{subsec:exp-main}

\begin{table}[t]
	\centering\setlength{\tabcolsep}{2.1pt}\renewcommand{\arraystretch}{1.0}
	\caption{Decoding accuracy, $-\ln(\mathrm{BER})$ at $E_b/N_0\in\{4,5,6\}$~dB (higher is better). Per cell the best entry is \underline{\textbf{bold and underlined}} and the runner-up \underline{underlined}. The $2$--$3$~dB columns are in Appendix~\ref{app:full-results}.}
	\label{tab:main}
	\resizebox{\textwidth}{!}{%
		\begin{tabular}{ll ccc ccc ccc ccc ccc ccc ccc}
			\toprule
			\multirow{2}{*}{Code} & \multirow{2}{*}{$(n,k)$} & \multicolumn{3}{c}{ECCT} & \multicolumn{3}{c}{CrossMPT} & \multicolumn{3}{c}{DDECC} & \multicolumn{3}{c}{SB-ECC} & \multicolumn{3}{c}{\textbf{CAL-ECC} $10^{-1}$} & \multicolumn{3}{c}{\textbf{CAL-ECC} $10^{-2}$} & \multicolumn{3}{c}{\textbf{CAL-ECC} $10^{-3}$} \\
			\cmidrule(lr){3-5}\cmidrule(lr){6-8}\cmidrule(lr){9-11}\cmidrule(lr){12-14}\cmidrule(lr){15-17}\cmidrule(lr){18-20}\cmidrule(lr){21-23}
			& & 4 & 5 & 6 & 4 & 5 & 6 & 4 & 5 & 6 & 4 & 5 & 6 & 4 & 5 & 6 & 4 & 5 & 6 & 4 & 5 & 6 \\
			\midrule
			\multirow{4}{*}{BCH} & (31,16) & 5.90 & 7.81 & 10.34 & 7.00 & 9.35 & 12.36 & 4.80 & 6.13 & 7.67 & \underline{7.29} & \underline{9.81} & \underline{12.90} & 6.61 & 8.93 & 11.83 & 7.19 & 9.56 & 12.47 & \underline{\textbf{7.48}} & \underline{\textbf{9.89}} & \underline{\textbf{13.04}} \\
			& (63,36) & 4.66 & 6.51 & 9.18 & 4.79 & 6.75 & 9.45 & 5.30 & 7.29 & 9.98 & 5.52 & 7.55 & 10.49 & 5.36 & 7.62 & 10.71 & \underline{6.44} & \underline{8.78} & \underline{12.00} & \underline{\textbf{7.43}} & \underline{\textbf{10.16}} & \underline{\textbf{13.53}} \\
			& (63,45) & 5.33 & 7.46 & 10.41 & 5.31 & 7.53 & 10.50 & 5.77 & 7.95 & 10.54 & 6.29 & 9.01 & 12.54 & 6.04 & 8.46 & 11.86 & \underline{7.12} & \underline{9.74} & \underline{13.01} & \underline{\textbf{7.82}} & \underline{\textbf{10.86}} & \underline{\textbf{14.49}} \\
			& (63,51) & 5.43 & 7.45 & 10.50 & 5.64 & 7.89 & 11.17 & 5.25 & 7.14 & 9.25 & 5.50 & 7.64 & 10.79 & 6.22 & 8.78 & 12.27 & \underline{6.65} & \underline{9.38} & \underline{12.89} & \underline{\textbf{6.80}} & \underline{\textbf{9.63}} & \underline{\textbf{13.28}} \\
			\midrule
			\multirow{5}{*}{Polar} & (64,32) & 6.63 & 9.04 & 11.89 & 7.63 & 10.26 & 13.29 & 7.05 & 9.50 & 12.23 & 7.52 & 10.15 & 13.26 & 7.61 & 10.30 & 13.67 & \underline{7.89} & \underline{10.58} & \underline{13.89} & \underline{\textbf{7.97}} & \underline{\textbf{10.67}} & \underline{\textbf{13.96}} \\
			& (64,48) & 6.19 & 8.28 & 10.97 & 6.41 & 8.56 & 11.22 & 6.00 & 8.00 & 10.40 & 6.59 & 8.62 & 11.23 & 6.59 & 8.63 & 11.24 & \underline{6.63} & \underline{8.65} & \underline{11.25} & \underline{\textbf{6.66}} & \underline{\textbf{8.67}} & \underline{\textbf{11.25}} \\
			& (128,64) & 4.84 & 7.33 & 10.72 & 5.64 & 8.71 & 13.24 & \underline{\textbf{8.31}} & \underline{\textbf{12.06}} & \underline{15.73} & \underline{7.77} & \underline{11.29} & 15.43 & 6.29 & 9.94 & 14.92 & 6.73 & 10.22 & 15.13 & 7.61 & 11.22 & \underline{\textbf{15.94}} \\
			& (128,86) & 6.31 & 9.10 & 12.48 & 7.62 & 10.94 & 15.25 & 7.84 & 11.18 & 15.25 & 7.87 & 11.07 & 14.98 & 8.04 & 11.77 & 16.09 & \underline{8.51} & \underline{12.14} & \underline{16.54} & \underline{\textbf{9.00}} & \underline{\textbf{12.65}} & \underline{\textbf{17.06}} \\
			& (128,96) & 6.29 & 9.09 & 12.31 & 7.08 & 10.17 & 13.14 & 7.34 & 10.22 & 13.14 & 7.28 & 10.22 & 13.18 & 7.53 & 10.54 & 13.30 & \underline{7.83} & \underline{10.59} & \underline{13.32} & \underline{\textbf{8.11}} & \underline{\textbf{10.67}} & \underline{\textbf{13.33}} \\
			\midrule
			\multirow{4}{*}{LDPC} & (49,24) & 6.10 & 8.58 & 12.16 & 6.70 & 9.52 & 13.30 & 6.00 & 8.48 & 11.78 & 7.15 & 10.34 & 14.18 & 7.38 & 10.35 & 14.32 & \underline{8.21} & \underline{10.95} & \underline{14.65} & \underline{\textbf{8.72}} & \underline{\textbf{11.49}} & \underline{\textbf{14.94}} \\
			& (121,60) & 5.11 & 8.23 & 12.98 & 5.79 & 9.29 & 14.94 & 5.52 & 9.04 & 14.35 & 6.23 & 10.13 & 16.08 & 6.24 & 10.18 & 16.31 & \underline{7.01} & \underline{10.64} & \underline{16.53} & \underline{\textbf{8.25}} & \underline{\textbf{12.24}} & \underline{\textbf{17.91}} \\
			& (121,70) & 6.31 & 10.11 & 15.55 & 7.04 & 11.36 & 17.28 & 6.95 & 11.31 & 17.16 & 7.75 & 12.41 & 18.98 & 7.72 & 12.55 & 19.19 & \underline{8.40} & \underline{13.10} & \underline{19.47} & \underline{\textbf{9.85}} & \underline{\textbf{14.43}} & \underline{\textbf{20.30}} \\
			& (121,80) & 7.19 & 11.40 & 17.12 & 8.00 & 12.66 & 18.24 & 8.16 & 12.69 & 17.74 & 8.77 & 13.89 & 21.39 & 8.77 & 13.81 & 21.09 & \underline{9.45} & \underline{14.30} & \underline{21.61} & \underline{\textbf{10.58}} & \underline{\textbf{15.42}} & \underline{\textbf{22.38}} \\
			\midrule
			\multirow{1}{*}{MacKay} & (96,48) & 7.13 & 10.46 & 14.51 & 7.82 & 11.53 & 15.60 & 8.28 & 12.12 & 15.95 & 8.60 & 12.40 & 16.21 & 8.85 & 12.72 & 16.26 & \underline{9.40} & \underline{13.01} & \underline{16.34} & \underline{\textbf{10.31}} & \underline{\textbf{13.40}} & \underline{\textbf{16.40}} \\
			\midrule
			\multirow{1}{*}{CCSDS} & (128,64) & 6.75 & 10.62 & 15.81 & 7.65 & 12.19 & 17.58 & 7.98 & 12.66 & 17.18 & 8.36 & 13.18 & 19.42 & 8.59 & 13.64 & 21.33 & \underline{8.96} & \underline{13.83} & \underline{21.57} & \underline{\textbf{9.95}} & \underline{\textbf{14.70}} & \underline{\textbf{21.79}} \\
			\bottomrule
	\end{tabular}}
\end{table}

CAL-ECC at $\tau=10^{-3}$ attains the best result in $43$ of the $45$ cells
shown in Table~\ref{tab:main} and in $28$ of the $30$ at $2$--$3$~dB
(Appendix~\ref{app:full-results}), or $71$ of $75$ overall. All four
exceptions are Polar$(128,64)$ at $E_b/N_0\le5$~dB, the cell with the lowest
certification rate in our suite, and Section~\ref{subsec:exp-cert} shows the
two facts have a common cause. The margin over the strongest prior decoder
tracks how much headroom that decoder leaves rather than block length as such.
It is smallest on the length-$64$ polar codes, below $0.7$ in
$-\ln(\mathrm{BER})$ and as little as $0.02$ on Polar$(64,48)$, where prior
decoders already operate near the ML ceiling (Section~\ref{subsec:exp-ml}), and
we take that cell as parity rather than improvement, reached at higher average
cost than SB-ECC (Section~\ref{subsec:exp-nfe}). It exceeds $1.3$ on
LDPC$(121,60/70)$ and reaches $3.04$ on BCH$(63,36)$, the largest in the
suite. In $E_b/N_0$ it widens on the dense-$H$ BCH codes, where a large
$d_{\min}$ keeps Certificate~M firing, and narrows on most sparse-$H$ codes,
where the strongest baseline closes in.

\subsection{Where the Gain Comes From}
\label{subsec:exp-ablation}

The decoder has three parts that could each account for the margins of
Table~\ref{tab:main}: the diffusion generator, the beam search over the tree
it induces, and the certificates. Table~\ref{tab:ablation} separates them on
four codes chosen to cover dense and sparse $H$ and to include both cells we
win and the one we lose; all three configurations use the same trained model.

\begin{table}[t]
	\centering\setlength{\tabcolsep}{4.5pt}\small
	\caption{Ablation at $E_b/N_0=4$~dB, $\tau=10^{-3}$: \textbf{(a)} the generator decoded greedily along a single trajectory ($\tau{=}1$, $L{=}1$); \textbf{(b)} the full beam search with certificates disabled; \textbf{(c)} the complete decoder. ``search $\Delta$BER'' is (b)$-$(a) in $-\ln(\mathrm{BER})$ and ``cert.\ NFE cut'' is (b)/(c) in NFE.}
	\label{tab:ablation}
	\resizebox{\linewidth}{!}{%
		\begin{tabular}{l rr rr rr rr r}
			\toprule
			\multirow{2}{*}{Code}
			& \multicolumn{2}{c}{(a) generator}
			& \multicolumn{2}{c}{(b) $+$ search}
			& \multicolumn{2}{c}{(c) $+$ certificates}
			& \multicolumn{2}{c}{effect} & \multirow{2}{*}{(c)/(a)} \\
			\cmidrule(lr){2-3}\cmidrule(lr){4-5}\cmidrule(lr){6-7}\cmidrule(lr){8-9}
			& $-\ln\mathrm{BER}$ & NFE & $-\ln\mathrm{BER}$ & NFE
			& $-\ln\mathrm{BER}$ & NFE
			& search & cert. & NFE \\
			& & & & & & & $\Delta$BER & NFE cut & ratio \\
			\midrule
			MacKay$(96,48)$ & $8.76$ & $53.8$ & $10.26$ & $54.7$  & $10.31$ & $23.1$  & $+1.50$ & $2.4\times$ & $\bm{0.43\times}$ \\
			LDPC$(121,80)$  & $8.72$ & $88.4$ & $10.65$ & $92.4$  & $10.58$ & $29.0$  & $+1.93$ & $3.2\times$ & $\bm{0.33\times}$ \\
			BCH$(63,45)$    & $5.82$ & $45.2$ & $7.81$  & $149.7$ & $7.82$  & $113.8$ & $+1.99$ & $1.3\times$ & $2.52\times$ \\
			Polar$(128,64)$ & $6.37$ & $82.9$ & $7.67$  & $171.8$ & $7.61$  & $156.4$ & $+1.30$ & $1.1\times$ & $1.89\times$ \\
			\bottomrule
	\end{tabular}}
\end{table}

\paragraph{The search produces the accuracy; the certificates change the cost, not the answer.}
Column (a), the same trained model decoded along a single greedy trajectory,
is \emph{not} competitive with the baselines: against the strongest prior
decoder on the same cell it is behind by $0.47$ on BCH$(63,45)$, $0.05$ on
LDPC$(121,80)$ and $1.94$ on Polar$(128,64)$, and ahead by only $0.16$ on
MacKay$(96,48)$. Adding the beam search moves it by $1.30$ to $1.99$, which
is the whole of our margin over prior work; the generator is a good branching
heuristic and nothing more. The search is almost free on the sparse-$H$ cells and triples or doubles the cost on the dense-$H$ cells, where the beam must spread before any completion satisfies every check. Column (c) then differs from (b) by $-0.07$ to $+0.05$
in $-\ln(\mathrm{BER})$, inside the sampling error, as
Theorem~\ref{thm:global-ml} predicts; what changes is the computation, by a
factor that tracks the certification rate, $3.2\times$ on LDPC$(121,80)$
where $72.4\%$ of frames are certified and $1.1\times$ on Polar$(128,64)$
where $18.7\%$ are. BCH$(63,45)$ is the apparent exception at $1.3\times$
despite $83.7\%$ certification: a certificate cannot fire before a valid
incumbent exists, and on a dense $H$ that takes most of the trajectory, so
there is simply less of the search left to remove.

\paragraph{The accuracy is not bought with computation.}
Only (a) and (c) are configurations one would actually deploy, and against (a)
the complete decoder is \emph{simultaneously more accurate and cheaper} on the
two sparse-$H$ cells, at $0.43\times$ the evaluations on MacKay$(96,48)$ for
$1.55$ and $0.33\times$ on LDPC$(121,80)$ for $1.86$, because the
certificates halt many frames earlier than a greedy trajectory ever finishes.
Moreover $4$~dB is where our computation is at its most expensive relative to
a fixed-budget decoder; the same four cells cost $2.7$, $1.2$, $0.7$ and
$12.1$ evaluations at $6$~dB (Section~\ref{subsec:exp-nfe}). Polar$(128,64)$, the one cell we lose, is weakest in every column, for the reason Section~\ref{subsec:exp-cert} gives.

\subsection{Reaching the Exact ML Ceiling}
\label{subsec:exp-ml}

\begin{figure}[t]
	\centering
	\begin{subfigure}[b]{0.195\textwidth}
		\includegraphics[width=\textwidth]{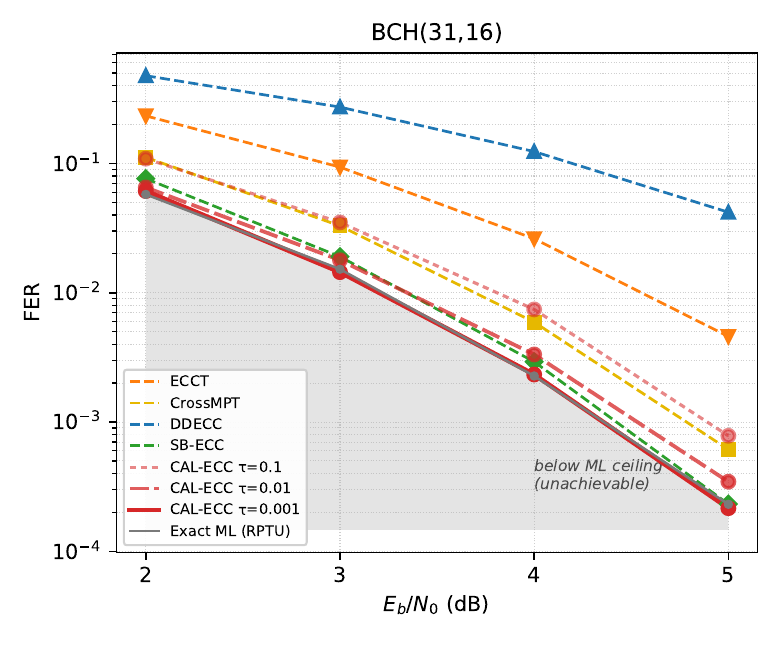}
		\caption{BCH$(31,16)$}
	\end{subfigure}\hfill
	\begin{subfigure}[b]{0.195\textwidth}
		\includegraphics[width=\textwidth]{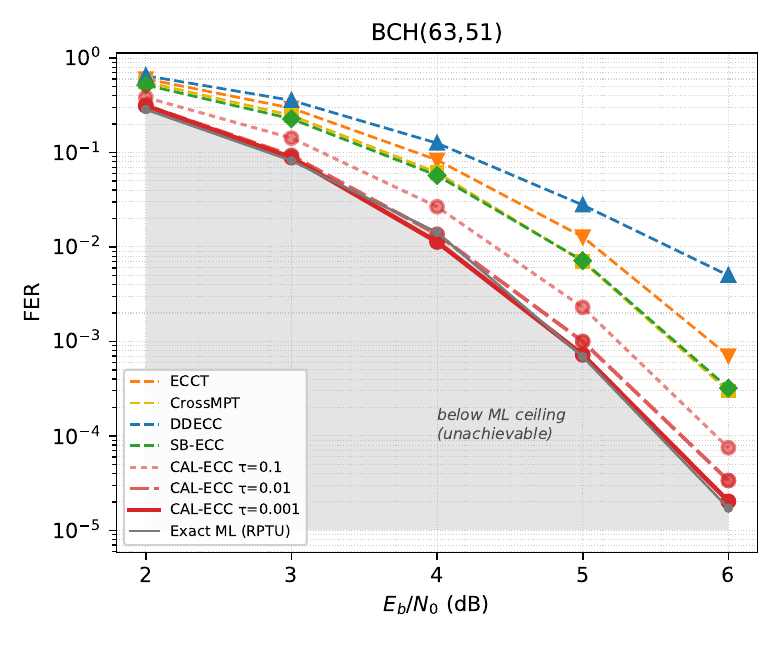}
		\caption{BCH$(63,51)$}
	\end{subfigure}\hfill
	\begin{subfigure}[b]{0.195\textwidth}
		\includegraphics[width=\textwidth]{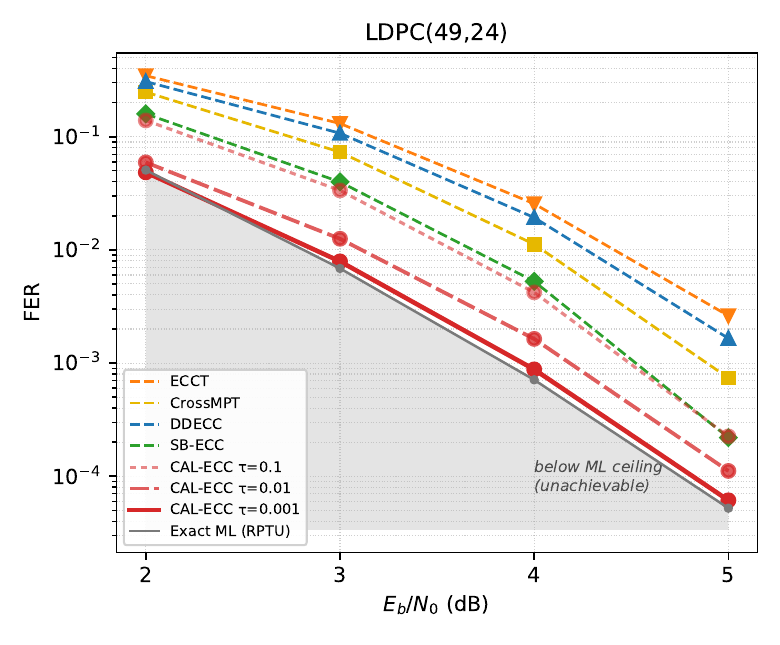}
		\caption{LDPC$(49,24)$}
	\end{subfigure}\hfill
	\begin{subfigure}[b]{0.195\textwidth}
		\includegraphics[width=\textwidth]{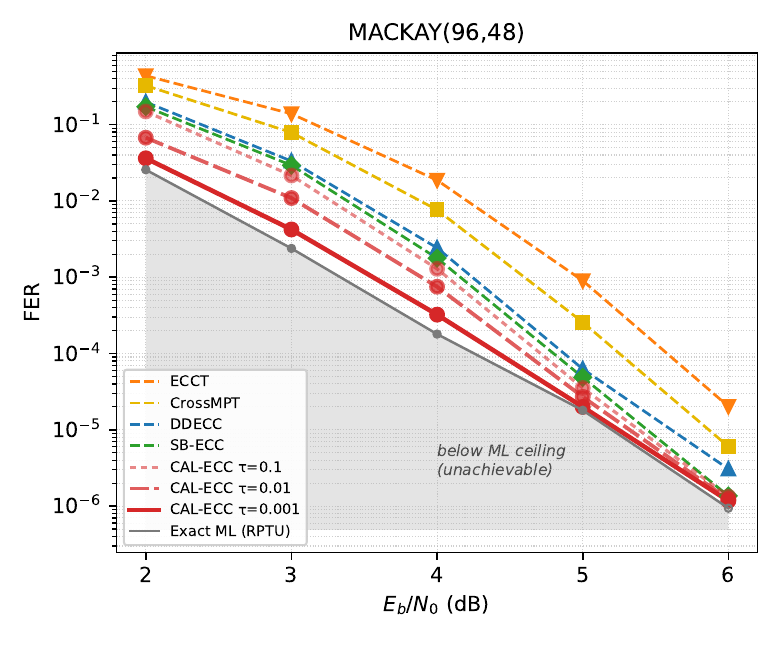}
		\caption{MacKay$(96,48)$}
	\end{subfigure}\hfill
	\begin{subfigure}[b]{0.195\textwidth}
		\includegraphics[width=\textwidth]{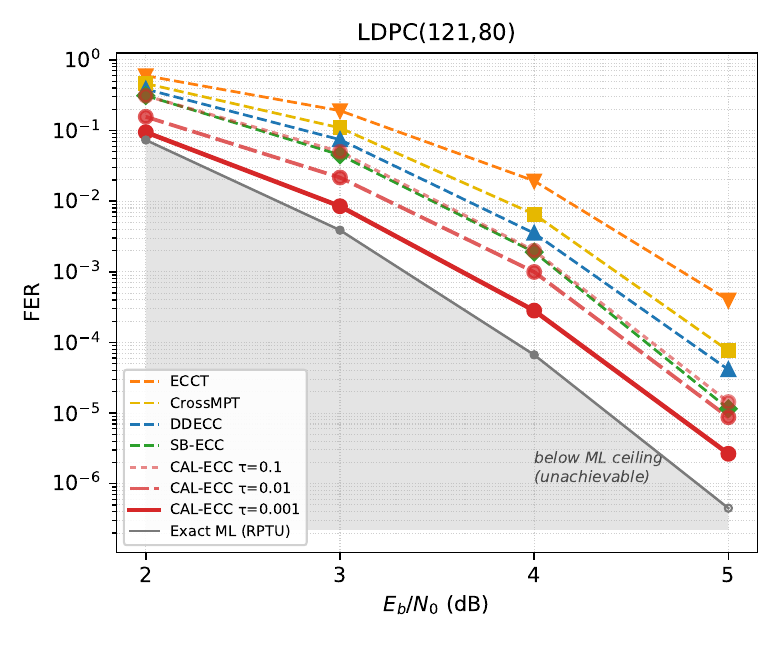}
		\caption{LDPC$(121,80)$}
	\end{subfigure}
	\caption{FER against the exact ML reference (grey) over the $E_b/N_0$ range where it exists; the shaded region below it is unachievable by \emph{any} decoder. CAL-ECC curves darken as $\tau$ decreases.}
	\label{fig:ml-ceiling}
	\vspace{-7pt}
\end{figure}
Exact ML references from integer-programming decoders are published for eleven of our fifteen codes, over the $E_b/N_0$ range where the integer program remains feasible~\citep{helmling2025}, labelled RPTU in the figures after the hosting institution. Figure~\ref{fig:ml-ceiling} shows five representative panels and Appendix~\ref{app:figures} the rest and, since ML is block-optimal, the corresponding FER at $5$~dB (Table~\ref{tab:ml-ceiling}), the highest point at which a reference exists for all five.

\paragraph{On four of the five codes the accuracy axis is closed.}
Taking the ratio to the reference, CAL-ECC shows $0.93$, $1.17$, $1.05$ and
$1.11$ on the four codes whose reference is uncensored. The reference is
measured to $100$ frame errors and ours to about $500$; treating both counts
as Poisson gives a combined relative standard deviation of
$\sqrt{1/100+1/500}=11\%$, so these deviations are $0.6\sigma$, $1.6\sigma$,
$0.5\sigma$ and $1.0\sigma$, within one standard deviation of exact ML on
three of the four codes and within two on the fourth.
Figure~\ref{fig:ml-ceiling} shows the same thing qualitatively: the three
curves obtained by varying $\tau$ separate at low $E_b/N_0$ and then collapse
onto the reference, which is what one expects once the candidate set contains
$\hat{\mathbf{e}}_{\mathrm{ML}}$ on essentially every frame. Two ratios below one are finite-sample fluctuations rather than a true advantage over a frame-optimal decoder: they move between operating points (BCH$(63,51)$ gives $0.79$ at $4$~dB but $1.05$ at $5$~dB), whereas a systematic bias would recur.

\subsection{Anatomy of Certification}
\label{subsec:exp-cert}

Three structural predictions follow from Section~\ref{subsec:cert}:
Certificate~D should be strong for sparse $H$, Certificate~M for codes with a
large known $d_{\min}$, and both should weaken at low $E_b/N_0$ where $f$
exceeds $\tilde d$. Table~\ref{tab:cert} tests all three. Throughout,
$\mathrm{cert}\%$ is the fraction of frames whose output is \emph{provably}
the global ML solution, and frames stopped by beam exhaustion are counted as
uncertified.
\begin{table}[t]
	\centering
	\setlength{\tabcolsep}{4pt}\small
	\caption{Certificate breakdown at $E_b/N_0=4$~dB, $\tau=10^{-3}$, sorted by the row weight of $H$. $[\mathrm{D}]$ and $[\mathrm{M}]$ are the fractions certified by each test; $\mathrm{cert}\%$ additionally includes frames with $\boldsymbol{s}=\boldsymbol{0}$. ``detected'' is the share of frame errors whose output fails the syndrome check, ``errors there'' the share falling in the uncertified fraction, and selectivity the bound $1/(1-\mathrm{cert})$.}
	\label{tab:cert}
	\begin{tabular}{lrrrrrrrr}
		\toprule
		Code & row wt.\ $H$ & $\mathrm{cert}\%$ & $[\mathrm{D}]\%$ & $[\mathrm{M}]\%$
		& uncert.\% & detected & errors there & selectivity \\
		\midrule
		MacKay$(96,48)$ & $6$    & $59.5$ & $59.2$ & $0.0$  & $40.5$ & $38.1\%$  & $98.8\%$  & $2.5\times$ \\
		LDPC$(49,24)$   & $7$    & $81.3$ & $15.1$ & $60.9$ & $18.7$ & $7.0\%$   & $98.9\%$  & $5.3\times$ \\
		BCH$(31,16)$    & $8$    & $89.2$ & $0.5$  & $70.6$ & $10.8$ & $1.7\%$   & $96.7\%$  & $9.3\times$ \\
		CCSDS$(128,64)$ & $8$    & $41.0$ & $40.9$ & $0.0$  & $59.0$ & $100.0\%$ & $100.0\%$ & $1.7\times$ \\
		LDPC$(121,70)$  & $11$   & $63.0$ & $62.5$ & $0.0$  & $37.0$ & $92.8\%$  & $100.0\%$ & $2.7\times$ \\
		LDPC$(121,80)$  & $11$   & $72.4$ & $70.9$ & $0.0$  & $27.6$ & $66.7\%$  & $100.0\%$ & $3.6\times$ \\
		Polar$(128,64)$ & $13.5$ & $18.7$ & $15.3$ & $3.3$  & $81.3$ & $94.5\%$  & $100.0\%$ & $1.2\times$ \\
		BCH$(63,36)$    & $18$   & $88.8$ & $0.0$  & $83.4$ & $11.2$ & $97.2\%$  & $100.0\%$ & $8.9\times$ \\
		BCH$(63,51)$    & $28$   & $79.6$ & $0.7$  & $53.6$ & $20.4$ & $0.0\%$   & $97.3\%$  & $4.9\times$ \\
		\bottomrule
	\end{tabular}
	\vspace{-7pt}
\end{table}

\paragraph{The two certificates are complementary, as predicted.}
For the dense-$H$ BCH codes Certificate~D contributes essentially nothing
($0.0$--$0.7\%$) while Certificate~M carries $53.6$--$83.4\%$, and for the
sparse-$H$ LDPC$(121,80)$, MacKay and CCSDS the roles are exactly reversed,
the zeros in $[\mathrm{M}]$ being the designed fallback where no lower bound
on $d_{\min}$ is available. LDPC$(49,24)$, the only code that is both sparse
and has an exactly known $d_{\min}$, has \emph{both} certificates firing
($15.1\%$ and $60.9\%$), so the complementarity follows from structure rather
than from code family. Polar$(128,64)$ is the predicted failure mode: its $H$
is neither sparse enough for D nor equipped with a large enough $\tilde d$ for
M, and up to $4$~dB it is the only cell where both are simultaneously weak
($15.3\%$ and $3.3\%$), recovering to $\mathrm{cert}=85.2\%$ at $6$~dB. It is
precisely at $E_b/N_0\le5$~dB, where certification is weakest, that this is
the only code on which a prior decoder wins; weak certification is necessary
for that outcome but not sufficient, since Polar$(128,96)$ at $4$~dB
certifies a comparable $21.6\%$ yet we win there by $+0.77$, and what
separates Polar$(128,64)$ is that its generator is also the furthest behind
(Table~\ref{tab:ablation}).

\paragraph{Low SNR voids Certificate~M by construction.}
On BCH$(63,45)$ the certification rate falls monotonically from $99.8\%$ at
$6$~dB to $24.7\%$ at $2$~dB, because $f$ exceeds $\tilde d$ on an increasing
fraction of frames and the test of Theorem~\ref{thm:dmin} cannot be applied
at all. Certification is thus a property of the operating point as much as of
the code, and we report it per cell (Appendix~\ref{app:full-results}).

\paragraph{Certification is a per-frame reliability signal.}
The last columns report a consequence we did not design for. Errors fall
almost entirely in the uncertified fraction: across all nine cells
$96.7$--$100\%$ of frame errors occur on uncertified frames, and on five of
them every observed error did, including all $508$ errors on LDPC$(121,80)$
against $1{,}302{,}404$ certified frames with none. Part of this any decoder
could claim: an output that fails the syndrome check has visibly failed, and
on CCSDS, BCH$(63,36)$, Polar$(128,64)$ and LDPC$(121,70)$ such detected
failures are $93$--$100\%$ of the errors (column ``detected''). The
certificate's own contribution lies in the remainder, the outputs that are
valid codewords and wrong, which no syndrome check can flag: on the five cells
with at least $100$ such errors the certificate captures $96.7$--$100\%$,
against $88.5$--$100\%$ for flagging the same fraction of frames by output
cost $\Phi$ (Appendix~\ref{app:reliability}). The mechanism is immediate in
hindsight, since a certified frame outputs the ML solution and ML rarely errs
at these operating points.

What the certificate adds over such a heuristic is not selectivity but a
statement about what remains. If uncertified frames were flagged for
retransmission, the last column gives $1/(1-\mathrm{cert})$, an upper bound
on the selectivity when every error is uncertified: on BCH$(31,16)$, flagging
$10.8\%$ of frames captures $96.7\%$ of the errors, a lift of $9.0\times$,
while Polar$(128,64)$ marks the limit, where the signal is perfect but
$81.3\%$ of frames would have to be flagged. Used instead as a filter, over
the nine cells we certify $4{,}108{,}339$ frames and observe $46$ errors
among them, an error rate of $1.1\times10^{-5}$, one to two orders of
magnitude below the decoder's own. A certificate proves that our output equals the ML output, not that it equals the transmitted codeword, so the residual $1.1\times10^{-5}$ is the rate at which ML itself fails on those frames, not a miscalibration of ours.

\subsection{Computational Cost}
\label{subsec:exp-nfe}

\paragraph{Two stopping rules.}
SB-ECC halts as soon as the hard decision satisfies every parity check, and we measure its realised cost under that rule rather than its nominal budget (Appendix~\ref{app:impl}). Our decoder can be measured under the same rule, reported as $\mathrm{NFE}_{\mathrm{sd}}$, but it does not stop there:
syndrome consistency says nothing about optimality, so the search continues
until the incumbent is \emph{proved} globally optimal, and the distance
between $\mathrm{NFE}_{\mathrm{sd}}$ and $\mathrm{NFE}$ (Table~\ref{tab:nfe}, Appendix~\ref{app:full-results}) is the price of certification. NFE counts single-input forward passes, a
forward pass over $|\mathcal{B}|$ beams counting as $|\mathcal{B}|$, so the
unit is the same in arithmetic for both decoders; what differs is latency, where our beams run in one batch and SB-ECC's evaluations are sequential, at the price of a peak activation memory that scales with $|\mathcal{B}|\le L_{\max}$ rather than with a single trajectory.

\paragraph{Certification is bought, and its price is set by the certification rate.}
Under the syndrome rule our cost crosses SB-ECC's at roughly $4$~dB: taking
$\tau=10^{-1}$ and the median over the fifteen codes, we are $7.6\times$
dearer at $2$~dB, comparable at $4$~dB ($1.01\times$), and $1.61\times$
cheaper at $6$~dB, where we are cheaper on all fifteen codes, and over all
$75$ cells the median ratio is $1.01\times$. That comparison uses $\tau=10^{-1}$; at the $\tau=10^{-3}$ setting of
Table~\ref{tab:main} the two ends of the range separate sharply, to a median of
$0.62\times$ at $6$~dB, below SB-ECC's cost on all fifteen codes, against
$163\times$ at $2$~dB. The accuracy of Table~\ref{tab:main} is therefore free
at high $E_b/N_0$ and expensive at low. The ordering is not incidental:
SB-ECC's per-frame cost is capped by its budget, whereas ours is the width of
a beam that grows when the received word is ambiguous, and it is exactly at
low $E_b/N_0$ that certification fails. Where certification is high the guarantee is nearly free, as on BCH$(63,45)$ at $6$~dB, where proving optimality raises
the cost from $0.5$ to $0.7$ evaluations, and where it is low it is
expensive. Because our computation adapts to the input, a single average is
not comparable to a fixed budget either: on LDPC$(121,80)$ at $4$~dB the mean
is $29.0$ but $p_{50}=1$, so the mean is set by a tail with $p_{99}=141$, and
we report percentiles for every cell (Appendix~\ref{app:full-results}).

\paragraph{Accuracy becomes free as certification saturates.}
Certified frames stop at the same point regardless of $\tau$, and because the
residual errors live almost entirely in the uncertified fraction, lowering
$\tau$ deepens the search exactly where it can still help. On LDPC$(121,80)$,
moving from $\tau=10^{-1}$ to $10^{-3}$ costs $20.4\times$ at $2$~dB
($\mathrm{cert}=8.9\%$) for a gain of $+0.84$, but is free at $5$~dB
($\mathrm{cert}=94.3\%$), where the average NFE is unchanged to three
significant figures while $-\ln(\mathrm{BER})$ improves by $1.61$, a
$5\times$ reduction in BER at identical cost. Grouping all $75$ cells by
certification rate, the median cost of that move is $1.06\times$ above $95\%$
certification and $8.74\times$ below $50\%$ (Appendix~\ref{app:tau}). CAL-ECC is thus a family indexed by $\tau$, an operating characteristic no fixed-budget decoder offers.

\section{Discussion}
\label{sec:discussion}

\textbf{Conclusion.}
We built a decoder that knows when it is optimal. Absorbing-state discrete
diffusion commits coordinates one at a time, which turns generation into a
search tree; a beam search over that tree supplies candidates, and an LP dual
bound and a distance-based test, both derived from the code, certify per frame that the incumbent is the global ML solution. Empirically the two certificates are complementary in the way their
derivations predict, certification behaves as a per-frame reliability signal,
and on the codes with an exact ML reference the family of curves indexed by
$\tau$ collapses onto it, on BCH$(63,51)$ at an average of $0.6$ function evaluations per frame.

\textbf{Limitations.}
Three limitations remain, and they share a cause. \emph{(i)} Certification degrades at low $E_b/N_0$, where the incumbent weight
exceeds $\tilde d$ and voids Certificate~M, and below about $4$~dB our computation exceeds that of a budget-limited decoder (Section \ref{subsec:exp-nfe}), reaching $591\times$ on LDPC$(121,60)$ at $2$~dB. At $6$~dB the same
comparison is $0.62\times$ at $\tau=10^{-3}$. \emph{(ii)}
Polar$(128,64)$ is the one cell where a prior decoder wins, and it is also the
cell where both certificates are weakest. \emph{(iii)} The gap to the ML
ceiling grows with block length, reaching $4.23\times$ on LDPC$(121,80)$.
Where certification is high the method is accurate, cheap and able to certify
its own output, and where it is low, none of the three holds.

\textbf{Outlook.}
The most direct extension is a tighter relaxation: the Feldman polytope also
constrains satisfied checks, which is exactly the structure missing on
Polar$(128,64)$ and on the BCH codes (Appendix~\ref{app:lpgap}), while on
sparse $H$ the exact LP optimum would certify a further $2.7$--$7.3\%$ of
frames at $4$~dB. Second, even loose lower bounds on $d_{\min}$ would extend
Certificate~M to LDPC$(121,\cdot)$, MacKay and CCSDS. Third, a receiver that
knows which $10$--$40\%$ of its frames are uncertified can route exactly those
to retransmission or to a more expensive decoder; how a certificate, a proof
on a subset of frames, should be combined with the calibrated estimate of
soft-output GRAND~\citep{yuan2023sogrand} is an open question.

\newpage

\subsection*{Reproducibility statement}
The decoder is stated in full as Algorithm~\ref{alg:beam}, and every claim of
Section~\ref{sec:method} is proved in Appendix~\ref{app:proofs}, with the
numerical tolerances under which the certificates are applied in
Appendix~\ref{app:numerics}. Architecture, training and inference
hyperparameters, the measurement protocol and the baselines' stopping rules are
given in Appendix~\ref{app:impl}, and the hardware in the same appendix.
Appendix~\ref{app:code-identity} verifies that our parity-check matrices define
the same codes as the published references by $\mathbb{F}_2$ row-space
equality, and lists the distance bounds used by Certificate~M with their
sources (Table~\ref{tab:dmin}), since an incorrect bound would make the
certificate unsound. Frames are generated from a seeded generator, so every
cell is reproducible from its seed. Code will be
released upon publication.


\subsubsection*{Acknowledgments}
This work was supported by Global - Learning \& Academic research institution for Master's$\cdot$PhD students, and Postdocs (G-LAMP) Program of the National Research Foundation of Korea (NRF) grant funded by the Ministry of Education (No. RS-2025-25442252), the National Research Foundation of Korea (NRF) grant funded by the Korean government (MSIT) (Grant No. RS-2022-NR069041, RS-2025-00561169), Institute of Information \& communications Technology Planning \& Evaluation(IITP) under the Leading Generative AI Human Resources Development(IITP-2027-RS-2026-25546026) grant funded by the Korea government(MSIT), and the Ewha Womans University Research Grant of 2026.

\bibliography{iclr2027_conference}
\bibliographystyle{iclr2027_conference}

\clearpage
\appendix

\section{Deferred Proofs}
\label{app:proofs}

\subsection{Proof of Lemma~\ref{lem:ml}}
\label{app:proof-ml}

\begin{proof}
	ML decoding minimizes $\|\mathbf{y}-\mathrm{bpsk}(\mathbf{c})\|_2^2$ over $\mathbf{c}\in\mathcal{C}$.
	Coordinatewise, $\mathrm{bpsk}(c_j)\in\{\pm1\}$ and
	$(y_j+1)^2-(y_j-1)^2=4y_j$. The hard decision $r_j=\mathbbm{1}[y_j<0]$
	selects whichever modulation value is closer to $y_j$, so flipping it
	(i.e.\ setting $e_j=1$, where $\mathbf{e} = \mathbf{r} \oplus \mathbf{c}$) increases the squared distance
	by exactly $4|y_j|$. Summing over coordinates,
	\[
	\|\mathbf{y}-\mathrm{bpsk}(\mathbf{c})\|_2^2
	=\|\mathbf{y}-\mathrm{bpsk}(\mathbf{r})\|_2^2+\sum\nolimits_{j} 4|y_j|\,e_j .
	\]
	Substituting $|\ell_j|=2|y_j|/\sigma^2$, i.e.\ $4|y_j|=2\sigma^2|\ell_j|$,
	gives $\|\mathbf{y}-\mathrm{bpsk}(\mathbf{c})\|_2^2=\mathrm{const}+2\sigma^2\,\Phi(\mathbf{e})$.
	Neither the constant nor the positive factor $2\sigma^2$ affects the
	minimizer. Finally $\mathbf{c}\in\mathcal{C}\iff H\mathbf{e}^{\mathsf{T}}=\bm{s}$ by the
	syndrome identity $H\mathbf{e}^{\mathsf{T}}=H\mathbf{r}^{\mathsf{T}}\oplus H\mathbf{c}^{\mathsf{T}}$,
	so minimizing the squared distance over $\mathcal{C}$ coincides with
	minimizing $\Phi$ over $\{\mathbf{e}:H\mathbf{e}^{\mathsf{T}}=\bm{s}\}$.
\end{proof}

\subsection{Proof of Proposition~\ref{prop:list-ml}}
\label{app:proof-list}

\begin{proof}
	By Lemma~\ref{lem:ml}, for any $\mathbf{e}$ with $H\mathbf{e}^{\mathsf{T}}=\bm{s}$ the
	negative log-likelihood of $\mathbf{e}$ is an affine, strictly increasing function of
	$\Phi(\mathbf{e})$ with coefficients independent of $\mathbf{e}$. The set
	$\mathcal{L}^{\mathrm{valid}}$ collects exactly the elements of
	$\mathcal{L}_\theta$ satisfying that constraint, and \eqref{eq:select}
	returns the minimizer of $\Phi$ over it. Hence $\hat{\mathbf{e}}$ maximizes the
	likelihood over $\mathcal{L}_\theta\cap\{\mathbf{e}:H\mathbf{e}^{\mathsf{T}}=\bm{s}\}$, which
	is the ML estimate restricted to $\mathcal{L}_\theta$.
\end{proof}

\subsection{Proof of Theorem~\ref{thm:dmin}}
\label{app:proof-dmin}
\begin{proof}
	Let $\mathbf{e}'\neq \mathbf{e}$ be valid. Then $\mathbf{c} = \mathbf{e} \oplus \mathbf{e}'$ satisfies $H\mathbf{c}^{\mathsf{T}}=\bm 0$, so $\mathbf{c}$ is a nonzero codeword and $|\mathrm{supp}(\mathbf{c})|\ge d_{\min}\ge\tilde d$. 
	Since
	$\mathrm{supp}(\mathbf{e}')=(F\setminus\mathrm{supp}(\mathbf{c}))\cup(\mathrm{supp}(\mathbf{c})\setminus F)$,
	\[
	\Phi(\mathbf{e}')-\Phi(\mathbf{e})
	=\!\!\sum_{j\in\mathrm{supp}(\mathbf{c})\setminus F}\!\!|\ell_j|
	\;-\!\!\sum_{j\in\mathrm{supp}(\mathbf{c})\cap F}\!\!|\ell_j| .
	\]
	As $|\mathrm{supp}(\mathbf{c})\cap F|\le f$ we get
	$|\mathrm{supp}(\mathbf{c})\setminus F|\ge\tilde d-f$, so the first sum is at least
	$T_{\tilde d-f}(F)$ and the second at most $\Phi(\mathbf{e})$. Hence
	$\Phi(\mathbf{e}')-\Phi(\mathbf{e})\ge T_{\tilde d-f}(F)-\Phi(\mathbf{e})\ge0$.
\end{proof}

\subsection{Beam Exhaustion: Construction and Proof of Proposition~\ref{prop:beam-exhaust}}
\label{app:proof-beam}
\label{app:beam-exhaust}
Section~\ref{subsec:beam} introduced $\Lambda_{\mathrm{B}}$; formally, let $b$ be a surviving partial assignment, $C(b)\subseteq[n]$ its committed coordinates, and
$F(b)=\{j\in C(b):b_j=1\}$ the coordinates it has already flipped. Let $\mathcal{R}(b)=\{r:\bigoplus_{j\in C(b)\cap\mathrm{supp}(H_r)}b_j\neq s_r\}$ be the checks whose committed parity already disagrees with the syndrome; a check that agrees is completed at no cost by zeros and is excluded. Define
\begin{equation}
	\Lambda_{\mathrm{B}}(b)
	=\underbrace{\sum_{j\in F(b)}|\ell_j|}_{\text{committed cost}}
	\;+\;
	\underbrace{\max_{r\in\mathcal{R}(b)}\ \min_{j\in\mathrm{supp}(H_r)\setminus C(b)}|\ell_j|}_{\text{minimum additional cost}} ,
	\label{eq:lambda-b}
\end{equation}
with the second term $0$ when $\mathcal{R}(b)=\emptyset$ and $+\infty$ when some $r\in\mathcal{R}(b)$ has no free coordinate left, the minimum over an empty set, and $\Lambda_{\mathrm{B}}=\min_{b}\Lambda_{\mathrm{B}}(b)$ over the surviving beam.

\begin{proposition}[Beam exhaustion]
	\label{prop:beam-exhaust}
	Let $\Pinc$ be the cost of the best valid candidate found so far.
	If $\Pinc\le\Lambda_{\mathrm{B}}$, no continuation of any surviving
	assignment yields a valid candidate of cost below $\Pinc$, and
	returning that candidate is therefore no worse than running the beam to
	completion.
\end{proposition}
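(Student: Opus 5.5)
The plan is to show that $\Lambda_{\mathrm{B}}(b)$ lower-bounds the cost of every valid completion of each surviving assignment $b$. Minimizing over the beam then lower-bounds every valid candidate the search could still produce.

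Fix a surviving $b$ and a valid completion $\mathbf{e}$ that agrees with $b$ on $C(b)$ and satisfies $H\mathbf{e}^{\mathsf{T}}=\bm{s}$. Since $\Phi$ is additive over coordinates with nonnegative weights $|\ell_j|$, split $\Phi(\mathbf{e})=\sum_{j\in F(b)}|\ell_j|+\sum_{j\notin C(b)}e_j|\ell_j|$. The first term is exactly the committed cost in \eqref{eq:lambda-b}, because $\mathbf{e}$ agrees with $b$ on $C(b)$.

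For the second term, take any $r\in\mathcal{R}(b)$. Since $\mathbf{e}$ satisfies check $r$ while the committed part of $\mathrm{supp}(H_r)$ has the wrong parity, $\mathbf{e}$ must flip an odd, hence positive, number of coordinates in $\mathrm{supp}(H_r)\setminus C(b)$. So the free cost is at least $\min_{j\in\mathrm{supp}(H_r)\setminus C(b)}|\ell_j|$. If that set is empty, no valid completion exists, which is consistent with the convention $+\infty$. Because this bound holds for every $r\in\mathcal{R}(b)$, it holds for their maximum. If $\mathcal{R}(b)=\emptyset$, the bound $0$ is trivial by nonnegativity. Hence $\Phi(\mathbf{e})\ge\Lambda_{\mathrm{B}}(b)\ge\Lambda_{\mathrm{B}}$.

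Next I have to identify what ``running the beam to completion'' can produce. Every candidate generated later, including the $\mathbf{e}^{\mathrm{part}}$ and $\mathbf{e}^{\mathrm{greedy}}$ completions, is a full assignment extending some currently surviving $b$. Branching only adds commitments, and pruning only removes branches. Unit propagation also only extends assignments, and it never forces a value inconsistent with validity, so any valid output still extends a current survivor. Thus every valid future candidate has cost at least $\Lambda_{\mathrm{B}}\ge\Pinc$, and none is strictly below $\Pinc$.

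Finally, the selection rule \eqref{eq:select} returns the minimum-$\Phi$ valid element of the final list. The final list consists of the current valid candidates plus future ones, and no future one beats the incumbent. So the returned cost equals $\Pinc$; in case of ties the returned candidate may differ, but it has the same cost and hence the same likelihood by Lemma~\ref{lem:ml}. Stopping is therefore no worse.

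The main obstacle is the bookkeeping in the second step: arguing carefully that all future candidates extend current survivors. This includes greedy completions formed at later steps. I would handle it by induction on steps, using that the beam at step $t+1$ consists only of extensions of beams at step $t$. No global-optimality claim is made, since pruned branches are outside this scope.
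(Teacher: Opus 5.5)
Your proof is correct and follows the paper's argument: the committed cost over $F(b)$, plus for each $r\in\mathcal{R}(b)$ at least one forced flip among $\mathrm{supp}(H_r)\setminus C(b)$ by the odd-parity requirement, maximized over $r$ and then minimized over the surviving beam. Your additional step, that every later candidate (including later $\mathbf{e}^{\mathrm{part}}$ and $\mathbf{e}^{\mathrm{greedy}}$ completions) extends a current survivor, spells out what the paper leaves implicit when it concludes that ``no such completion improves on the incumbent.''
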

\begin{proof}
	Fix a surviving $b$ and let $\mathbf{e}$ be any completion of $b$ with
	$H\mathbf{e}^{\mathsf{T}}=\bm{s}$. Since $\Phi$ is additive and nonnegative,
	$\Phi(\mathbf{e})\ge\sum_{j\in F(b)}|\ell_j|$. Suppose
	$\mathcal{R}(b)\neq\emptyset$ and pick any $r\in\mathcal{R}(b)$. Because the committed parity of $b$ on check $r$ disagrees with $s_r$ while $\mathbf{e}$ satisfies it, $\mathbf{e}$ must flip an odd number, hence at least one, of the coordinates of $\mathrm{supp}(H_r)$ outside $C(b)$, and the cheapest such flip
	costs $\min_{j\in\mathrm{supp}(H_r)\setminus C(b)}|\ell_j|$. This holds for
	every $r\in\mathcal{R}(b)$, hence for the maximizing one, and that cost is
	disjoint from the committed sum. Therefore
	$\Phi(\mathbf{e})\ge\Lambda_{\mathrm{B}}(b)$, and if $\mathcal{R}(b)=\emptyset$ the inequality is immediate. Taking the minimum over surviving $b$, every valid
	completion of every surviving assignment has cost at least $\Lambda_{\mathrm{B}}$, so $\Pinc\le\Lambda_{\mathrm{B}}$ implies
	that no such completion improves on the incumbent.
\end{proof}

\paragraph{Scope of the bound.}
The quantifier ranges over completions of \emph{surviving} assignments only. A
branch removed earlier by \eqref{eq:beam-prune} is not a completion of any
surviving $b$, and its descendants are therefore unconstrained by
$\Lambda_{\mathrm{B}}$. Since $\hat{\mathbf{e}}_{\mathrm{ML}}$ may lie in such a discarded
subtree, $\Lambda_{\mathrm{B}}$ is \textbf{not} a lower bound on $\Phi^\star$
and beam exhaustion carries no claim of global optimality. This is why we
report beam-exhaustion stops separately from certified stops. Conflating them
would overstate the certification rate, by up to $70$ percentage points at
$2$~dB, where Certificate~M is voided by $f\ge\tilde d$ and beam exhaustion
accounts for the majority of stops. $\Lambda_{\mathrm{B}}$ is moreover defined only while the surviving beam is nonempty, and the test requires an incumbent: a check whose support is fully committed but unsatisfied leaves $\Lambda_{\mathrm{B}}=\infty$, so Algorithm~\ref{alg:beam} guards the comparison with $\Pinc<\infty$. A frame whose beam has emptied, or which ends without a valid candidate, is handled by \eqref{eq:select} directly.

\subsection{Numerical Safety}
\label{app:numerics}

The dual ascent accumulates in double precision, whereas the reliabilities
themselves are single precision, so the dominant error is the representation
of $|\ell_j|$ rather than the accumulation. The two tests are applied with a
margin $\varepsilon=10^{-4}$, but in opposite directions, because their
structure differs. Certificate~M is a genuine inequality, so we tighten it to
$\Phi(\mathbf{e})\le T_{\tilde d-f}(F)-\varepsilon$; this can only forgo a valid
certification, never issue a false one. Certificate~D is an equality test in
disguise: since $\Lambda_{\mathrm{D}}\le\Phi^\star\le\Pinc$ analytically,
$\Pinc\le\Lambda_{\mathrm{D}}$ can hold only when all three coincide, and a
strict tolerance would never fire under floating point. We therefore implement
it as $\Pinc\le\Lambda_{\mathrm{D}}+\varepsilon$, which is permissive by
$\varepsilon$: a frame certified by D satisfies
$\Pinc\le\Phi^\star+\varepsilon$, so its output is $\varepsilon$-optimal in
likelihood cost rather than exactly optimal. Two distinct valid candidates
differ in $\Phi$ by the signed reliability weight of a nonzero codeword, which
can in principle fall below $\varepsilon$; on such a frame the two candidates
are indistinguishable in likelihood to one part in $10^{4}$, and the exhaustive
check on BCH$(31,16)$ in Appendix~\ref{app:code-identity} finds no frame on
which the tolerance changes the certified output at any tested $E_b/N_0$.

\section{Decoding Algorithm}
\label{app:algorithm}

Algorithm~\ref{alg:beam} states the decoder in full. Two points are worth
highlighting. First, the dual bound $\Lambda_{\mathrm{D}}$ is computed
\emph{once}, before any forward pass, and reused at every step. It lower-bounds
$\Phi^\star$, a constant of the frame rather than of the search, so the
per-step overhead of Certificate~D is a single scalar comparison. Second, the
two stopping tests are kept separate because their guarantees differ.
Lines~\ref{ln:cert}--\ref{ln:certret} implement the certificates, and a frame
leaving there is provably globally ML.
Lines~\ref{ln:exh}--\ref{ln:exhret} implement beam exhaustion, where branches
removed by $\tau$-pruning lie outside the scope of $\Lambda_{\mathrm{B}}$ and
no global claim is made. We track the two exits separately throughout the paper
and report only the first as certified.

\begin{algorithm}[t]
	\caption{Adaptive list decoding with certified early termination.
		Numerical tolerances are discussed in Appendix~\ref{app:numerics}.}
	\label{alg:beam}
	\begin{algorithmic}[1]
		\Require reliabilities $|\boldsymbol{\ell}|$, syndrome $\boldsymbol{s}$, PCM $H$,
		model $p_\theta$, threshold $\tau$, width $L_{\max}$, dual rounds $K$,
		distance bound $\tilde d$ (optional).
		Throughout, $\Finc=\mathrm{supp}\,\einc$ and $f=|\Finc|$.
		\If{$\boldsymbol{s}=\boldsymbol{0}$}
		\Return $\hat{\mathbf{c}}=\mathbf{r}$, \textsc{certified}
		\Comment{$\Phi=0$ trivially optimal, zero NFE}
		\label{ln:skip0}
		\EndIf
		\State $\Lambda_{\mathrm{D}}\leftarrow$
		\Call{DualAscent}{$|\boldsymbol{\ell}|,\boldsymbol{s},H,K$}
		\Comment{\eqref{eq:lp-dual}; once, no forward pass}
		\State $\mathcal{B}\leftarrow\{[\mathsf{m}]^n\}$;\quad
		$\Pinc\leftarrow\infty$;\quad $\einc\leftarrow\bot$
		\While{$\mathcal{B}\neq\emptyset$ and uncommitted coordinates remain}
		\State $P\leftarrow p_\theta(\cdot\mid\mathcal{B},|\boldsymbol{\ell}|,\boldsymbol{s},t)$,
		\ \ $t=|\mathcal{M}|/n$
		\Comment{NFE $\mathrel{+}=|\mathcal{B}|$}
		\ForAll{$b\in\mathcal{B}$}
		\Comment{incumbent; no extra forward pass}
		\State $\mathbf{e}^{\mathrm{part}}\leftarrow b$ with masks set to $0$;\ \
		$\mathbf{e}^{\mathrm{greedy}}\leftarrow b$ with masks set to $\arg\max P$
		\State update $(\einc,\Pinc)$ with whichever of the two
		is valid and has smaller $\Phi$
		\EndFor
		\If{$\Pinc\le\Lambda_{\mathrm{D}}$ \textbf{or}
			$\bigl(\tilde d\text{ given},\ f<\tilde d,\text{ and }
			\Pinc\le T_{\tilde d-f}(\Finc)\bigr)$}
		\label{ln:cert}
		\State \Return $\hat{\mathbf{c}} = \mathbf{r} \oplus\einc$,
		\textsc{certified}
		\Comment{global ML (Thm.~\ref{thm:global-ml})}
		\label{ln:certret}
		\EndIf
		\State $\Lambda_{\mathrm{B}}\leftarrow\min_{b\in\mathcal{B}}
		\Lambda_{\mathrm{B}}(b)$
		\Comment{\eqref{eq:lambda-b}; surviving beam only}
		\If{$\Pinc<\infty$ \textbf{and} $\Pinc\le\Lambda_{\mathrm{B}}$}
		\label{ln:exh}
		\State \Return $\hat{\mathbf{c}} = \mathbf{r} \oplus\einc$,
		\textsc{uncertified}
		\Comment{beam exhausted (Prop.~\ref{prop:beam-exhaust})}
		\label{ln:exhret}
		\EndIf
		\State pick the uncommitted coordinate of highest mean confidence and
		branch every $b\in\mathcal{B}$ on its two values
		\label{ln:branch}
		\State prune by \eqref{eq:beam-prune} and keep at most $L_{\max}$
		\label{ln:prune}
		\State apply unit propagation to exhaustion
		\Comment{forced commitments; lossless, no score change}
		\label{ln:up}
		\EndWhile
		\State $\mathcal{L}^{\mathrm{valid}}\leftarrow
		\{\mathbf{e} \in\mathcal{B}:H\mathbf{e}^{\mathsf{T}}=\boldsymbol{s}\}
		\cup(\{\einc\}\setminus\{\bot\})$
		\Comment{beam candidates and the incumbent}
		\If{$\mathcal{L}^{\mathrm{valid}}\neq\emptyset$}
		\State $\hat{\mathbf{e}} \leftarrow
		\arg\min_{\mathbf{e} \in\mathcal{L}^{\mathrm{valid}}}\Phi(\mathbf{e})$
		\ElsIf{$\mathcal{B}\neq\emptyset$}
		\State $\hat{\mathbf{e}} \leftarrow\arg\min_{\mathbf{e}\in\mathcal{B}}\Phi(\mathbf{e})$
		\Comment{fallback: no valid candidate found}
		\Else
		\State $\hat{\mathbf{e}} \leftarrow\boldsymbol{0}$
		\Comment{empty beam, no incumbent: output the hard decision}
		\EndIf
		\State \Return $\hat{\mathbf{c}} =\mathbf{r}\oplus\hat{\mathbf{e}}$,
		\textsc{uncertified}
	\end{algorithmic}
\end{algorithm}
\paragraph{Dual ascent.}
Starting from slacks $u_j\leftarrow|\ell_j|$, each round picks $r^\ast=\arg\max_{r\in\mathcal{R}}\min_{j\in\mathrm{supp}(H_r)}u_j$ over the unsatisfied checks only, removes it from $\mathcal{R}$, raises $y_{r^\ast}$ by
$v=\min_{j\in\mathrm{supp}(H_{r^\ast})}u_j$, and decrements $u_j$ by $v$ on
that support. Since $\mathbf{u}\ge\bm 0$ is maintained, feasibility holds
throughout. The ascent returns a feasible dual rather than the optimum of
\eqref{eq:lp-dual}, a choice made for batched tensor execution
(Appendix~\ref{app:lpgap}).

\paragraph{Cost and batching.}
The dual ascent is $O(K\,|\mathcal{R}|\,w)$ for maximum row weight $w$ and runs
once. Certificate~M is one sort, $O(n\log n)$, re-evaluated only when the
incumbent changes, and $\Lambda_{\mathrm{B}}$ is a masked minimum over check
supports. All three are negligible against a transformer forward pass over
$|\mathcal{B}|$ beams, accounting in profiling for under $3\%$ of wall-clock
time, with the forward pass at $94$--$96\%$. Frames are decoded in parallel.
Because a frame's mask pattern is shared by all of its beams, the coordinate
chosen for branching is a per-frame quantity and a frame's beams always
terminate at the same step. Certified frames are removed from the batch as they
finish, which shrinks the working tensor as decoding proceeds.

\section{Implementation Details}
\label{app:impl}

\paragraph{Architecture.}
The generator uses the CrossMPT backbone~\citep{park2024crossmpt} with
$N_{\mathrm{dec}}=6$ decoder layers and $d_{\mathrm{model}}=128$, matching the
baselines so that comparisons isolate the decoding method. The output head
produces logits over $\mathcal{A}=\{0,1,\mathsf{m}\}$, and the $\mathsf{m}$ logit
is masked to $-\infty$ at inference, since no coordinate is ever predicted to
remain absorbed. The noise level enters through $\sigma(t)$ with $t=|\mathcal{M}|/n$, the actual fraction of uncommitted coordinates. Because unit propagation commits several coordinates per step (Section~\ref{subsec:beam}), $t$ is computed from the mask count rather than from the step index.

\paragraph{Training.}
Adam~\citep{kingma2015} with initial learning rate $10^{-4}$ and cosine
annealing to $5\times10^{-6}$, for $4000$ epochs of $500$ iterations at batch
size $128$, or $2\times10^{6}$ steps and $2.56\times10^{8}$ samples. Training
samples are drawn on the fly: we transmit the all-zero codeword, without loss
of generality by the linearity of the code and the symmetry of the channel,
over AWGN at an $E_b/N_0$ drawn uniformly from $\{2,\dots,7\}$~dB, and draw the
mask fraction $t$ uniformly on $[0,1]$. The loss is the cross-entropy over
masked coordinates \eqref{eq:train}; the syndrome enters only as conditioning
and as the exact verifier, never as a differentiable penalty. All models are
trained with a single seed ($42$). Training masks are drawn uniformly at random, whereas at inference the mask follows the confidence order and unit propagation, a mismatch shared with
masked diffusion generally; we do not correct for it.

\paragraph{Training budget.}
The baselines follow their published schedules, which differ from ours and from
each other: $10^{6}$ steps for ECCT and CrossMPT, $1.5\times10^{6}$ for SB-ECC
and $2\times10^{6}$ for DDECC, against our $2\times10^{6}$, all at batch size
$128$. Our generator therefore sees twice the optimisation of the two
single-shot baselines. The ablation of Section~\ref{subsec:exp-ablation} shows
that this is not where our margin comes from: decoded along a single greedy
trajectory, the same trained model is behind the strongest prior decoder on
three of the four cells examined, by up to $1.94$ in $-\ln(\mathrm{BER})$, and
the entire margin over prior work appears only once the beam search is added.
A generator trained for longer is not a better decoder; it is a better
branching heuristic.

\paragraph{Certificate parameters.}
The dual ascent is capped at $K=12$ rounds. A sweep over $K\in\{4,6,8,12,\infty\}$ showed that $K=12$ recovers the untruncated greedy bound on every code at $4$ and $6$~dB with no loss in certification rate. The
greedy bound is feasible but not optimal for \eqref{eq:lp-dual}; its gap to
the exact optimum, and the certification that gap costs, are reported in
Appendix~\ref{app:lpgap}. Truncation is safe at any round, as every
intermediate $\mathbf{y}$ is dual feasible. Both tests use margin $\varepsilon=10^{-4}$, applied as Appendix~\ref{app:numerics} describes. The distance bounds $\tilde d$ are the fixed
code invariants of Table~\ref{tab:dmin}, and where no bound is available
Certificate~M is disabled.

\paragraph{Measurement protocol.}
Each cell is decoded until $500$ frame errors or the frame cap, which is $10^{8}$ for
BCH and polar codes and $10^{9}$ for the sparse-$H$ families. Frames are
generated in fixed-size batches with a seeded generator, so the three $\tau$ values within a cell see an identical frame sequence and their comparison is paired.

\paragraph{Baseline stopping rules.}
DDECC and SB-ECC both admit the standard syndrome stopping rule, which halts once the
current hard decision satisfies every parity check, and we enable it for both,
so that the NFE we report for them is realised cost rather than budget. For
SB-ECC this matters considerably. Its nominal budget is $10$ function
evaluations, while the realised mean ranges from $0.36$ to $7.56$ across our
cells (Table~\ref{tab:app-sbnfe}). Using the budget would overstate our
advantage by up to an order of magnitude.

\paragraph{Hardware.}
Measurements were made on a single NVIDIA RTX PRO 6000 Blackwell GPU with 96 GB
of memory. Batch size affects memory and throughput but not results. When a
batch exceeds available memory the run is retried at a smaller size, which
changes the frame sequence and therefore perturbs each cell within its sampling
error, but leaves the paired comparison across $\tau$ intact. Wall-clock
timings are reported for reference only, as several cells were measured on
shared devices, and no conclusion in this paper depends on them.

\section{Code Identity Verification}
\label{app:code-identity}

Our parity-check matrices come from the code database distributed with
ECCT~\citep{choukroun2022ecct} and CrossMPT~\citep{park2024crossmpt}, which all
model-free baselines in our comparison also use, so every decoder here is
evaluated on byte-identical matrices. Comparing against published ML references
additionally requires that these matrices define the same codes as the
references. Two claims depend on it. The ML references of
Section~\ref{subsec:exp-ml} are meaningless if the codes differ, and more
seriously, Certificate~M is unsound if $\tilde d$ exceeds the true $d_{\min}$
of the code actually being decoded (Theorem~\ref{thm:dmin}).

\paragraph{Row-space equality.}
For each code we compute the $\mathbb{F}_2$ row space of our $H$ and of the
reference by Gaussian elimination and check
$\mathrm{rank}(H_{\mathrm{ours}})=\mathrm{rank}(H_{\mathrm{ref}})=
\mathrm{rank}([H_{\mathrm{ours}};H_{\mathrm{ref}}])$, which holds if and only
if the two matrices define the same code. Where a reference is distributed as a
generator matrix we recover $H$ from its null space first. All LDPC, MacKay and
CCSDS matrices match their references.

\paragraph{Polar codes.}
No reference parity-check matrix is distributed for the polar codes, so we
verify them internally. Recovering the information set
$\mathcal{A}=\{i:F^{\otimes m}_i\in\ker H\}$ from $H$, we check two properties.
\emph{(i) Nesting}: polar codes of the same length from a common construction
are nested, $\mathcal{A}_{k_1}\subset\mathcal{A}_{k_2}$ for $k_1<k_2$, and this
holds for $(128,64)\subset(128,86)\subset(128,96)$ and for
$(64,32)\subset(64,48)$, so the codes at each length come from one construction
rather than from mixed sources. \emph{(ii) Construction}: all five information
sets coincide exactly with the Bhattacharyya construction in bit-reversed index
order. That ordering is nearly invariant to the design SNR, so this identifies
the construction family but not the design point.

\begin{table}[t]
	\centering\small
	\caption{Distance bounds $\tilde d\le d_{\min}$ used by Certificate~M.
		Codes not listed have no proven bound available and the certificate is
		disabled for them.}
	\label{tab:dmin}
	\begin{tabular}{lrl}
		\toprule
		Code & $\tilde d$ & Source \\
		\midrule
		BCH$(31,16)$   & $7$  & exhaustive over $2^{16}$ (exact $d_{\min}$) \\
		BCH$(63,36)$   & $11$ & BCH bound \\
		BCH$(63,45)$   & $7$  & BCH bound \\
		BCH$(63,51)$   & $5$  & BCH bound \\
		Polar$(64,32)$ & $8$  & $\min_{i\in\mathcal{A}}2^{\mathrm{wt}(i)}$ \\
		Polar$(64,48)$ & $4$  & $\min_{i\in\mathcal{A}}2^{\mathrm{wt}(i)}$ \\
		Polar$(128,64)$& $8$  & $\min_{i\in\mathcal{A}}2^{\mathrm{wt}(i)}$ \\
		Polar$(128,86)$& $8$  & $\min_{i\in\mathcal{A}}2^{\mathrm{wt}(i)}$ \\
		Polar$(128,96)$& $4$  & $\min_{i\in\mathcal{A}}2^{\mathrm{wt}(i)}$ \\
		LDPC$(49,24)$  & $8$  & exhaustive over $2^{24}$ (exact $d_{\min}$) \\
		\bottomrule
	\end{tabular}
\end{table}

\paragraph{Distance bounds and end-to-end validation.}
The values of $\tilde d$ are code invariants, not tuned parameters.
For BCH codes we use the BCH bound~\citep{bose1960,macwilliams1977}, and for
polar codes $\min_{i\in\mathcal{A}}2^{\mathrm{wt}(i)}$ evaluated on the
recovered information set~\citep{arikan2009,hussami2009}, which equals $d_{\min}$ when the information set is closed under the partial order induced by polarisation, as it is for the Bhattacharyya construction verified above~\citep{hussami2009}. Table~\ref{tab:dmin} lists them
with their sources, and the polar values were recomputed independently from our
matrices and agree with the values used by the decoder in all five cases. We
additionally cross-check every $\tilde d$ against an upper bound obtained by
random codeword sampling, and find the two coincide on most of the ten codes
for which a bound is defined, confirming that it is not merely valid but often
tight. As a direct check of the
whole certification path we compare against exhaustive ML on BCH$(31,16)$,
where the $2^{16}$ codebook is enumerable, verifying that
$\Lambda_{\mathrm{D}}\le\Phi^\star$ for every frame and that every frame
certified by \eqref{eq:stop} attains $\Phi^\star$. Both hold with zero
violations at all tested $E_b/N_0$. A weaker but far larger check is available
on the remaining codes. Certified frames output the ML solution, so their error
rate should track that of ML rather than that of our decoder, and over the
millions of frames of Section~\ref{subsec:exp-cert} we find $96.7$--$100\%$ of
all frame errors falling on uncertified frames, consistent with this prediction.

\section{Extended Related Work}
\label{app:related}

\paragraph{Model-free decoders.} Later work on ECCT refined the masking~\citep{park2023}, learned the code jointly with the decoder~\citep{choukroun2024b}, pursued zero-shot generalization across codes~\citep{choukroun2024a}, and cut cost through quantization~\citep{levy2024} or state-space backbones~\citep{cohen2025}.

\paragraph{One-step generative decoders.}
ECCFM~\citep{lei2025} distils the denoising trajectory into a single forward
pass, the opposite end of the latency axis from ours: it fixes the cost at one
evaluation and makes no statement about the output, whereas our cost adapts per
frame and falls below one evaluation exactly on the frames it can certify. The
two are complementary rather than competing, and a one-step generator could
serve as our branching heuristic; its implementation is not public at the time
of writing, so we do not compare empirically.

\paragraph{Guessing-based universal decoding.}
GRAND~\citep{duffy2019} inverts the usual formulation. Instead of exploiting
code structure it guesses the noise the channel added, in decreasing order of
likelihood, and tests each hypothesis for codebook membership. If the guessing
order is exactly the likelihood order, the first hypothesis that passes the
membership test \emph{is} the ML codeword, so GRAND is an ML decoder by
construction. That guarantee is stronger than ours, since it holds on every frame.
Its cost, however, is the number of guesses, which is exponential in the worst
case. The practical variants relax the order to make it tractable, and in doing
so give the guarantee up. ORBGRAND~\citep{duffy2021} orders by reliability
rank rather than by likelihood, and SRGRAND~\citep{duffy2022} quantizes the
reliabilities. Our setting is the mirror image. The learned branching rule of
Section~\ref{subsec:beam} is emphatically \emph{not} a likelihood ordering, so
membership alone certifies nothing. The certificates of
Section~\ref{subsec:cert} recover a per-frame ML guarantee on a search whose
order is not controlled, at the price that they apply to a fraction of frames
rather than to all of them.

Soft-output GRAND~\citep{yuan2023sogrand} is the closest prior work to the
reliability signal of Section~\ref{subsec:exp-cert}. It accompanies each
decoding with an estimate of the probability that the decoding is correct, at
almost no additional cost, and proposes the same two uses we do, suppressing undetected errors and deciding whether to request a retransmission. The two
signals are different objects. A soft output is a calibrated \emph{forecast},
evaluated by how well its probabilities match observed frequencies. A
certificate is a \emph{proof}, and a certified frame is ML-optimal as a matter
of theorem. Ours is therefore the stronger statement on the frames where it is
available and no statement at all elsewhere, whereas a soft output is available
everywhere and exact nowhere.

\paragraph{Optimality certificates.}
LP decoding~\citep{feldman2005} relaxes ML decoding over the codeword polytope
and possesses the ML certificate property, so an integral optimum of the relaxation is provably the ML codeword. This is the canonical statement that
solving and certifying differ in difficulty. Later work sharpened the
relaxation with adaptive cuts and characterized its
pseudocodewords~\citep{taghavi2008,vontobel2007}. In the ordered-statistics
literature, sufficient conditions comparing a candidate's likelihood cost
against the minimum distance terminate reprocessing
early~\citep{fossorier1995,kaneko1994}. In both lines the certificate is tied to the decoder's own structure, the exactly solved relaxation in one and the prescribed reprocessing order in the other, which is what makes the bound computable. To our knowledge no
neural decoder reports per-frame optimality certificates. We transplant these
ideas onto a learned search and show that the resulting certification rate is
itself informative. It is predicted by code structure and predictive of correctness.

\paragraph{Adaptive computation.}
Early exit and input-adaptive computation are established in deep
learning~\citep{teerapittayanon2016,graves2016} and appear in decoding as
syndrome-based stopping. DDECC, SB-ECC and classical BP all halt once the
current hard decision satisfies every parity check. That test is a necessary
condition for a valid codeword but says nothing about optimality, and a
syndrome-consistent estimate may be far from the ML solution. Our stopping rule
is a sufficient condition for global optimality, and we report both accountings
side by side so that the cost of the stronger guarantee is visible
(Section~\ref{subsec:exp-nfe}).
\section{Full Results}
\label{app:full-results}
Table~\ref{tab:nfe} is the subset of cells discussed in Section~\ref{subsec:exp-nfe}.
\begin{table}[t]
	\centering\setlength{\tabcolsep}{4pt}\small
	\caption{Computation, measured under two stopping rules.
		$\mathrm{NFE}_{\mathrm{sd}}$ counts evaluations until the first
		syndrome-consistent candidate, which is the rule SB-ECC uses, whose realised cost we measure rather than assume. $\mathrm{NFE}$ is our actual cost, which continues until optimality is \emph{proved}. Sorted by $\mathrm{cert}\%$. Under SB-ECC's own rule we are cheaper where certification is high, except on dense $H$, where reaching a syndrome-consistent candidate is itself slow. Table~\ref{tab:app-nfe} gives every cell.}
	\label{tab:nfe}
	\begin{tabular}{lr rrrr rr r}
		\toprule
		\multirow{2}{*}{Code} & \multirow{2}{*}{dB}
		& \multicolumn{4}{c}{$\mathrm{NFE}_{\mathrm{sd}}$}
		& \multicolumn{2}{c}{ours, $\tau{=}10^{-3}$} & \multirow{2}{*}{$\mathrm{cert}\%$} \\
		\cmidrule(lr){3-6}\cmidrule(lr){7-8}
		& & SB-ECC & $10^{-1}$ & $10^{-2}$ & $10^{-3}$ & NFE & $p_{99}$ & \\
		\midrule
		BCH$(63,45)$ & 6 & $0.55$ & $0.4$ & $0.5$ & $0.5$ & $0.7$ & $1$ & $99.8$ \\
		LDPC$(121,80)$ & 6 & $1.12$ & $0.7$ & $0.7$ & $0.7$ & $1.2$ & $1$ & $99.5$ \\
		LDPC$(121,70)$ & 6 & $1.38$ & $0.9$ & $0.9$ & $0.9$ & $1.6$ & $1$ & $99.2$ \\
		Polar$(64,48)$ & 6 & $0.47$ & $0.4$ & $0.4$ & $0.4$ & $1.9$ & $58$ & $97.3$ \\
		MacKay$(96,48)$ & 6 & $1.66$ & $0.9$ & $0.9$ & $0.9$ & $2.7$ & $55$ & $96.6$ \\
		LDPC$(121,80)$ & 5 & $1.83$ & $0.9$ & $0.9$ & $0.9$ & $5.9$ & $89$ & $94.3$ \\
		LDPC$(121,70)$ & 5 & $2.09$ & $1.0$ & $1.0$ & $1.0$ & $8.5$ & $88$ & $91.3$ \\
		BCH$(63,36)$ & 4 & $2.72$ & $7.7$ & $28.6$ & $95.7$ & $191.1$ & $3062$ & $88.8$ \\
		CCSDS$(128,64)$ & 5 & $2.65$ & $1.0$ & $1.0$ & $1.0$ & $19.9$ & $80$ & $75.4$ \\
		Polar$(128,86)$ & 5 & $1.67$ & $1.1$ & $1.1$ & $1.1$ & $31.4$ & $122$ & $73.9$ \\
		LDPC$(121,80)$ & 4 & $2.69$ & $1.6$ & $1.8$ & $3.4$ & $29.0$ & $141$ & $72.4$ \\
		Polar$(128,64)$ & 3 & $4.82$ & $37.5$ & $149.8$ & $853.5$ & $1174.2$ & $7094$ & $3.5$ \\
		LDPC$(121,60)$ & 2 & $7.56$ & $70.8$ & $824.2$ & $3395.5$ & $4470.1$ & $8998$ & $2.0$ \\
		\bottomrule
	\end{tabular}
\end{table}

Tables~\ref{tab:app-ber} and~\ref{tab:app-fer} give every measured cell in BER and FER respectively, covering $15$ codes at $5$ operating points and $3$ thresholds for CAL-ECC against four baselines.
Table~\ref{tab:app-cert} decomposes the certification rate of every cell into
its two certificates and Table~\ref{tab:app-nfe} reports the corresponding
computational cost, both at $\tau=10^{-3}$. The certification rate is nearly
independent of $\tau$, since the dual bound $\Lambda_{\mathrm{D}}$ is a
constant of the frame, varying by under one percentage point across the three
thresholds on every cell. Table~\ref{tab:app-sbnfe} gives the realised NFE of
SB-ECC under its own stopping rule, which is what
Section~\ref{subsec:exp-nfe} compares against.

Three patterns in Table~\ref{tab:app-cert} hold across all cells rather than
the nine of Table~\ref{tab:cert}. \emph{(i)} The $[\mathrm{D}]$ and
$[\mathrm{M}]$ figures are almost disjoint. Two codes are the exception, and
both are informative. LDPC$(49,24)$, the one code that is simultaneously sparse
and has an exactly known $d_{\min}$, has both certificates contributing
double-digit shares at $2$--$4$~dB ($15.1\%$ and $60.9\%$ at $4$~dB), and
Polar$(128,64)$ enters the same situation at $5$--$6$~dB, once its error weight
falls below $\tilde d$ often enough for Certificate~M to apply. Everywhere else
one of the two carries essentially all of the certification.
\emph{(ii)} Certification rises monotonically with
$E_b/N_0$ on every code without exception, from below $1\%$ at $2$~dB on the
length-$128$ polar codes to above $99\%$ at $6$~dB on BCH. \emph{(iii)} The
complement, meaning frames stopped by beam exhaustion, mirrors this. It dominates at
low $E_b/N_0$, where $f\ge\tilde d$ voids Certificate~M and the dual bound is
loose, and falls below $1\%$ at high $E_b/N_0$. Reporting beam exhaustion as
certification, as a single ``stopped early'' figure would, would overstate the
certified fraction by up to $70$ percentage points at $2$~dB.

Table~\ref{tab:app-nfe} supports the computational claims of
Section~\ref{subsec:exp-nfe}. Measured under the syndrome rule,
$\mathrm{NFE}_{\mathrm{sd}}$ is at or near $1$ on every code at $6$~dB and on all but the dense-$H$ BCH codes and Polar$(128,64)$ at $5$~dB. The distance from $\mathrm{NFE}_{\mathrm{sd}}$ to
$\mathrm{NFE}$ is the cost of proving optimality and is highly non-uniform, being
negligible on BCH at $6$~dB and a factor of $20$ or more on the length-$128$ polar
codes at $4$~dB. The $p_{99}$ figures show why an average alone is not
comparable to a fixed budget. Here $p_{50}=1$ on every cell at $E_b/N_0\ge5$~dB, so
the mean is set by a tail rather than by the typical frame.

\begin{table}[t]
	\centering\setlength{\tabcolsep}{5pt}\small
	\caption{Realised NFE of SB-ECC with its own syndrome stopping rule enabled. The nominal budget of $10$ is an upper bound that is rarely reached, and we use these values wherever SB-ECC's cost appears.}
	\label{tab:app-sbnfe}
	\begin{tabular}{l ccccc}
		\toprule
		Code & $2$~dB & $3$~dB & $4$~dB & $5$~dB & $6$~dB \\
		\midrule
		BCH$(31,16)$ & $4.41$ & $3.27$ & $2.29$ & $1.48$ & $0.86$ \\
		BCH$(63,36)$ & $6.14$ & $4.28$ & $2.72$ & $1.68$ & $0.99$ \\
		BCH$(63,45)$ & $5.17$ & $3.19$ & $1.83$ & $1.05$ & $0.55$ \\
		BCH$(63,51)$ & $5.63$ & $3.42$ & $1.70$ & $0.81$ & $0.36$ \\
		\addlinespace[2pt]
		Polar$(64,32)$ & $5.88$ & $4.50$ & $3.31$ & $2.32$ & $1.50$ \\
		Polar$(64,48)$ & $3.68$ & $2.46$ & $1.57$ & $0.91$ & $0.47$ \\
		Polar$(128,64)$ & $6.61$ & $4.82$ & $3.57$ & $2.65$ & $1.88$ \\
		Polar$(128,86)$ & $5.70$ & $3.68$ & $2.47$ & $1.67$ & $1.03$ \\
		Polar$(128,96)$ & $5.91$ & $3.56$ & $2.13$ & $1.34$ & $0.77$ \\
		\addlinespace[2pt]
		LDPC$(49,24)$ & $5.25$ & $3.87$ & $2.78$ & $1.92$ & $1.21$ \\
		LDPC$(121,60)$ & $7.56$ & $5.35$ & $3.62$ & $2.65$ & $1.86$ \\
		LDPC$(121,70)$ & $6.46$ & $4.24$ & $2.93$ & $2.09$ & $1.38$ \\
		LDPC$(121,80)$ & $5.99$ & $3.88$ & $2.69$ & $1.83$ & $1.12$ \\
		\addlinespace[2pt]
		MacKay$(96,48)$ & $5.92$ & $4.41$ & $3.31$ & $2.43$ & $1.66$ \\
		\addlinespace[2pt]
		CCSDS$(128,64)$ & $6.54$ & $4.71$ & $3.53$ & $2.65$ & $1.87$ \\
		\bottomrule
	\end{tabular}
\end{table}

\begin{table}[t]
	\centering\setlength{\tabcolsep}{2pt}\renewcommand{\arraystretch}{1.05}
	\caption{Certification across all cells at $\tau=10^{-3}$. $\mathrm{cert}$ is the fraction of frames proved globally ML, while $[\mathrm{D}]$ and $[\mathrm{M}]$ are the parts attributed to each certificate, and the remainder of $\mathrm{cert}$ is the $\mathrm{skip}_0$ fraction with $\boldsymbol{s}=\boldsymbol{0}$. Frames stopped by beam exhaustion are not counted. All figures are percentages.}
	\label{tab:app-cert}
	\resizebox{\textwidth}{!}{%
		\begin{tabular}{l ccc ccc ccc ccc ccc}
			\toprule
			\multirow{2}{*}{Code} & \multicolumn{3}{c}{$2$~dB} & \multicolumn{3}{c}{$3$~dB} & \multicolumn{3}{c}{$4$~dB} & \multicolumn{3}{c}{$5$~dB} & \multicolumn{3}{c}{$6$~dB} \\
			\cmidrule(lr){2-4}\cmidrule(lr){5-7}\cmidrule(lr){8-10}\cmidrule(lr){11-13}\cmidrule(lr){14-16}
			& cert & $[\mathrm{D}]$ & $[\mathrm{M}]$ & cert & $[\mathrm{D}]$ & $[\mathrm{M}]$ & cert & $[\mathrm{D}]$ & $[\mathrm{M}]$ & cert & $[\mathrm{D}]$ & $[\mathrm{M}]$ & cert & $[\mathrm{D}]$ & $[\mathrm{M}]$ \\
			\midrule
			BCH$(31,16)$ & $50.4$ & $1.0$ & $45.6$ & $72.5$ & $0.9$ & $62.8$ & $89.2$ & $0.5$ & $70.6$ & $97.5$ & $0.1$ & $64.6$ & $99.7$ & $0.0$ & $48.4$ \\
			BCH$(63,36)$ & $31.4$ & $0.0$ & $31.1$ & $62.4$ & $0.0$ & $60.9$ & $88.8$ & $0.0$ & $83.4$ & $98.6$ & $0.0$ & $82.5$ & $99.9$ & $0.0$ & $64.8$ \\
			BCH$(63,45)$ & $24.7$ & $0.1$ & $23.2$ & $54.5$ & $0.1$ & $49.4$ & $83.7$ & $0.1$ & $68.0$ & $97.3$ & $0.0$ & $62.8$ & $99.8$ & $0.0$ & $41.6$ \\
			BCH$(63,51)$ & $22.3$ & $0.7$ & $18.7$ & $49.9$ & $1.0$ & $39.0$ & $79.6$ & $0.7$ & $53.6$ & $95.7$ & $0.3$ & $48.2$ & $99.6$ & $0.0$ & $29.2$ \\
			\addlinespace[2pt]
			Polar$(64,32)$ & $5.1$ & $0.1$ & $4.9$ & $18.4$ & $0.1$ & $17.8$ & $45.7$ & $0.1$ & $43.1$ & $77.4$ & $0.1$ & $68.8$ & $95.4$ & $0.0$ & $72.9$ \\
			Polar$(64,48)$ & $11.6$ & $2.2$ & $7.7$ & $31.3$ & $4.0$ & $20.8$ & $60.5$ & $4.4$ & $37.5$ & $85.9$ & $2.6$ & $44.6$ & $97.3$ & $0.7$ & $33.9$ \\
			Polar$(128,64)$ & $0.3$ & $0.3$ & $0.0$ & $3.5$ & $3.2$ & $0.2$ & $18.7$ & $15.3$ & $3.3$ & $52.5$ & $31.2$ & $20.6$ & $85.2$ & $23.1$ & $57.0$ \\
			Polar$(128,86)$ & $0.6$ & $0.0$ & $0.6$ & $6.4$ & $0.1$ & $6.2$ & $31.6$ & $0.3$ & $30.0$ & $73.9$ & $0.2$ & $65.7$ & $96.6$ & $0.0$ & $70.2$ \\
			Polar$(128,96)$ & $0.4$ & $0.2$ & $0.2$ & $4.2$ & $1.9$ & $1.9$ & $21.6$ & $6.3$ & $11.9$ & $57.0$ & $8.9$ & $33.1$ & $87.7$ & $4.6$ & $43.9$ \\
			\addlinespace[2pt]
			LDPC$(49,24)$ & $31.4$ & $15.8$ & $15.3$ & $56.9$ & $19.6$ & $35.7$ & $81.3$ & $15.1$ & $60.9$ & $95.1$ & $6.3$ & $74.8$ & $99.4$ & $1.2$ & $68.0$ \\
			LDPC$(121,60)$ & $2.0$ & $2.0$ & $0.0$ & $14.3$ & $14.3$ & $0.0$ & $46.2$ & $46.1$ & $0.0$ & $82.8$ & $81.9$ & $0.0$ & $97.8$ & $92.2$ & $0.0$ \\
			LDPC$(121,70)$ & $4.9$ & $4.9$ & $0.0$ & $25.3$ & $25.2$ & $0.0$ & $63.0$ & $62.5$ & $0.0$ & $91.3$ & $88.0$ & $0.0$ & $99.2$ & $84.8$ & $0.0$ \\
			LDPC$(121,80)$ & $8.9$ & $8.9$ & $0.0$ & $35.0$ & $34.9$ & $0.0$ & $72.4$ & $70.9$ & $0.0$ & $94.3$ & $86.1$ & $0.0$ & $99.5$ & $72.9$ & $0.0$ \\
			\addlinespace[2pt]
			MacKay$(96,48)$ & $9.5$ & $9.5$ & $0.0$ & $29.3$ & $29.2$ & $0.0$ & $59.5$ & $59.2$ & $0.0$ & $85.0$ & $82.5$ & $0.0$ & $96.6$ & $85.9$ & $0.0$ \\
			\addlinespace[2pt]
			CCSDS$(128,64)$ & $1.9$ & $1.9$ & $0.0$ & $12.5$ & $12.5$ & $0.0$ & $41.0$ & $40.9$ & $0.0$ & $75.4$ & $74.7$ & $0.0$ & $94.2$ & $89.1$ & $0.0$ \\
			\bottomrule
	\end{tabular}}
	\par\smallskip\parbox{\textwidth}{\footnotesize \textit{Note.} Zeros in $[\mathrm{M}]$ for LDPC$(121,\cdot)$, MacKay and CCSDS are the designed fallback, since no proven $d_{\min}$ bound exists and the test is disabled (Appendix~\ref{app:code-identity}). Certification rises monotonically with $E_b/N_0$ on every code. All figures are percentages rounded to one decimal place. $[\mathrm{D}]$ and $[\mathrm{M}]$ are near-disjoint except on LDPC$(49,24)$, the only code that is both sparse and has an exactly known $d_{\min}$, and on Polar$(128,64)$ at $5$--$6$~dB, where the error weight finally falls below $\tilde d$ often enough for Certificate~M to apply.}
\end{table}

\begin{table}[t]
	\centering\setlength{\tabcolsep}{2.5pt}\renewcommand{\arraystretch}{1.05}
	\caption{Computation across all cells at $\tau=10^{-3}$. $\mathrm{NFE}_{\mathrm{sd}}$ counts evaluations until the first syndrome-consistent candidate, which is the stopping rule SB-ECC uses. $\mathrm{NFE}$ continues until optimality is proved, and $p_{99}$ is the $99$th percentile of the per-frame NFE.}
	\label{tab:app-nfe}
	\resizebox{\textwidth}{!}{%
		\begin{tabular}{l ccc ccc ccc ccc ccc}
			\toprule
			\multirow{2}{*}{Code} & \multicolumn{3}{c}{$2$~dB} & \multicolumn{3}{c}{$3$~dB} & \multicolumn{3}{c}{$4$~dB} & \multicolumn{3}{c}{$5$~dB} & \multicolumn{3}{c}{$6$~dB} \\
			\cmidrule(lr){2-4}\cmidrule(lr){5-7}\cmidrule(lr){8-10}\cmidrule(lr){11-13}\cmidrule(lr){14-16}
			& $\mathrm{sd}$ & NFE & $p_{99}$ & $\mathrm{sd}$ & NFE & $p_{99}$ & $\mathrm{sd}$ & NFE & $p_{99}$ & $\mathrm{sd}$ & NFE & $p_{99}$ & $\mathrm{sd}$ & NFE & $p_{99}$ \\
			\midrule
			BCH$(31,16)$ & $14.8$ & $213.8$ & $1733$ & $6.6$ & $61.4$ & $1109$ & $2.2$ & $11.8$ & $346$ & $0.8$ & $2.0$ & $22$ & $0.5$ & $0.6$ & $1$ \\
			BCH$(63,36)$ & $1599.8$ & $2521.7$ & $5728$ & $519.1$ & $963.8$ & $4432$ & $95.7$ & $191.1$ & $3062$ & $9.4$ & $17.4$ & $545$ & $1.0$ & $1.1$ & $1$ \\
			BCH$(63,45)$ & $572.3$ & $1960.6$ & $4651$ & $203.6$ & $657.0$ & $3298$ & $37.3$ & $113.8$ & $2086$ & $3.6$ & $9.7$ & $168$ & $0.5$ & $0.7$ & $1$ \\
			BCH$(63,51)$ & $70.8$ & $1439.1$ & $3798$ & $28.4$ & $421.2$ & $2503$ & $7.1$ & $65.3$ & $1300$ & $1.2$ & $6.2$ & $72$ & $0.3$ & $0.6$ & $1$ \\
			\addlinespace[2pt]
			Polar$(64,32)$ & $56.6$ & $375.8$ & $5578$ & $11.1$ & $89.1$ & $1032$ & $2.1$ & $32.6$ & $109$ & $1.0$ & $13.1$ & $59$ & $0.8$ & $3.3$ & $58$ \\
			Polar$(64,48)$ & $26.6$ & $722.7$ & $4613$ & $7.0$ & $140.3$ & $2136$ & $1.7$ & $28.9$ & $176$ & $0.7$ & $8.6$ & $61$ & $0.4$ & $1.9$ & $58$ \\
			Polar$(128,64)$ & $3985.7$ & $4650.8$ & $10581$ & $853.5$ & $1174.2$ & $7094$ & $68.1$ & $156.4$ & $2547$ & $2.7$ & $39.8$ & $89$ & $1.0$ & $12.1$ & $87$ \\
			Polar$(128,86)$ & $1347.8$ & $2835.4$ & $13151$ & $155.8$ & $447.5$ & $7212$ & $8.7$ & $93.1$ & $407$ & $1.1$ & $31.4$ & $122$ & $0.7$ & $4.7$ & $120$ \\
			Polar$(128,96)$ & $975.4$ & $3560.6$ & $11461$ & $138.7$ & $604.4$ & $6019$ & $9.2$ & $112.0$ & $761$ & $1.1$ & $51.0$ & $124$ & $0.6$ & $15.0$ & $124$ \\
			\addlinespace[2pt]
			LDPC$(49,24)$ & $77.2$ & $329.1$ & $2121$ & $17.4$ & $67.9$ & $1066$ & $2.6$ & $10.9$ & $196$ & $0.9$ & $2.4$ & $30$ & $0.7$ & $0.9$ & $1$ \\
			LDPC$(121,60)$ & $3395.5$ & $4470.1$ & $8998$ & $699.5$ & $1194.8$ & $6039$ & $45.8$ & $130.3$ & $2528$ & $1.7$ & $16.2$ & $85$ & $0.9$ & $2.7$ & $83$ \\
			LDPC$(121,70)$ & $1877.0$ & $2712.3$ & $7561$ & $240.5$ & $470.8$ & $4301$ & $9.7$ & $47.6$ & $675$ & $1.0$ & $8.5$ & $88$ & $0.9$ & $1.6$ & $1$ \\
			LDPC$(121,80)$ & $977.3$ & $1709.0$ & $6277$ & $93.6$ & $225.8$ & $2737$ & $3.4$ & $29.0$ & $141$ & $0.9$ & $5.9$ & $89$ & $0.7$ & $1.2$ & $1$ \\
			\addlinespace[2pt]
			MacKay$(96,48)$ & $245.3$ & $428.0$ & $2946$ & $24.7$ & $72.9$ & $754$ & $1.9$ & $23.1$ & $64$ & $1.0$ & $8.9$ & $56$ & $0.9$ & $2.7$ & $55$ \\
			\addlinespace[2pt]
			CCSDS$(128,64)$ & $1245.4$ & $1595.9$ & $6938$ & $123.3$ & $224.3$ & $2939$ & $4.4$ & $49.3$ & $81$ & $1.0$ & $19.9$ & $80$ & $0.9$ & $5.4$ & $80$ \\
			\bottomrule
	\end{tabular}}
	\par\smallskip\parbox{\textwidth}{\footnotesize \textit{Note.} $\mathrm{NFE}_{\mathrm{sd}}\le1$ at $6$~dB on every code and at $5$~dB on all but four, and the gap to $\mathrm{NFE}$ is the price of certification. $p_{50}\le1$ on every cell at $E_b/N_0\ge5$~dB, so the mean is set by the tail
		rather than by the typical frame. At $4$~dB the length-$128$ polar codes,
		Polar$(64,32)$, CCSDS and LDPC$(121,60)$ still have $p_{50}$ between $49$ and
		$115$, and those are the cells where the beam has not yet collapsed.}
\end{table}

\begin{sidewaystable}[p]
	\centering\setlength{\tabcolsep}{3pt}\renewcommand{\arraystretch}{1.05}
	\caption{Complete BER results, $-\ln(\mathrm{BER})$ at $E_b/N_0\in\{2,\dots,6\}$~dB, where higher is better.}
	\label{tab:app-ber}
	\textbf{\small (a) Baselines}\par\smallskip
	\resizebox{!}{0.13\textwidth}{%
		\begin{tabular}{l ccccc ccccc ccccc ccccc}
			\toprule
			\multirow{2}{*}{Code} & \multicolumn{5}{c}{ECCT} & \multicolumn{5}{c}{CrossMPT} & \multicolumn{5}{c}{DDECC} & \multicolumn{5}{c}{SB-ECC} \\
			\cmidrule(lr){2-6}\cmidrule(lr){7-11}\cmidrule(lr){12-16}\cmidrule(lr){17-21}
			& 2 & 3 & 4 & 5 & 6 & 2 & 3 & 4 & 5 & 6 & 2 & 3 & 4 & 5 & 6 & 2 & 3 & 4 & 5 & 6 \\
			\midrule
			BCH$(31,16)$ & $3.42$ & $4.49$ & $5.90$ & $7.81$ & $10.34$ & $3.93$ & $5.21$ & $7.00$ & $9.35$ & $12.36$ & $2.96$ & $3.75$ & $4.80$ & $6.13$ & $7.67$ & $4.03$ & $5.41$ & $7.29$ & $9.81$ & $12.90$ \\
			BCH$(63,36)$ & $2.70$ & $3.46$ & $4.66$ & $6.51$ & $9.18$ & $2.74$ & $3.52$ & $4.79$ & $6.75$ & $9.45$ & $2.98$ & $3.92$ & $5.30$ & $7.29$ & $9.98$ & $3.03$ & $4.04$ & $5.52$ & $7.55$ & $10.49$ \\
			BCH$(63,45)$ & $3.02$ & $3.89$ & $5.33$ & $7.46$ & $10.41$ & $3.02$ & $3.89$ & $5.31$ & $7.53$ & $10.50$ & $3.16$ & $4.18$ & $5.77$ & $7.95$ & $10.54$ & $3.29$ & $4.48$ & $6.29$ & $9.01$ & $12.54$ \\
			BCH$(63,51)$ & $3.15$ & $4.02$ & $5.43$ & $7.45$ & $10.50$ & $3.19$ & $4.14$ & $5.64$ & $7.89$ & $11.17$ & $3.10$ & $3.94$ & $5.25$ & $7.14$ & $9.25$ & $3.14$ & $4.07$ & $5.50$ & $7.64$ & $10.79$ \\
			\addlinespace[2pt]
			Polar$(64,32)$ & $3.68$ & $4.94$ & $6.63$ & $9.04$ & $11.89$ & $4.06$ & $5.59$ & $7.63$ & $10.26$ & $13.29$ & $3.80$ & $5.18$ & $7.05$ & $9.50$ & $12.23$ & $4.07$ & $5.54$ & $7.52$ & $10.15$ & $13.26$ \\
			Polar$(64,48)$ & $3.44$ & $4.57$ & $6.19$ & $8.28$ & $10.97$ & $3.53$ & $4.74$ & $6.41$ & $8.56$ & $11.22$ & $3.34$ & $4.44$ & $6.00$ & $8.00$ & $10.40$ & $3.57$ & $4.85$ & $6.59$ & $8.62$ & $11.23$ \\
			Polar$(128,64)$ & $2.55$ & $3.34$ & $4.84$ & $7.33$ & $10.72$ & $2.65$ & $3.69$ & $5.64$ & $8.71$ & $13.24$ & $3.60$ & $5.51$ & $8.31$ & $12.06$ & $15.73$ & $3.52$ & $5.24$ & $7.77$ & $11.29$ & $15.43$ \\
			Polar$(128,86)$ & $3.28$ & $4.43$ & $6.31$ & $9.10$ & $12.48$ & $3.53$ & $5.09$ & $7.62$ & $10.94$ & $15.25$ & $3.70$ & $5.41$ & $7.84$ & $11.18$ & $15.25$ & $3.67$ & $5.39$ & $7.87$ & $11.07$ & $14.98$ \\
			Polar$(128,96)$ & $3.22$ & $4.35$ & $6.29$ & $9.09$ & $12.31$ & $3.38$ & $4.79$ & $7.08$ & $10.17$ & $13.14$ & $3.42$ & $4.96$ & $7.34$ & $10.22$ & $13.14$ & $3.40$ & $4.91$ & $7.28$ & $10.22$ & $13.18$ \\
			\addlinespace[2pt]
			LDPC$(49,24)$ & $3.14$ & $4.31$ & $6.10$ & $8.58$ & $12.16$ & $3.35$ & $4.71$ & $6.70$ & $9.52$ & $13.30$ & $3.10$ & $4.23$ & $6.00$ & $8.48$ & $11.78$ & $3.66$ & $5.11$ & $7.15$ & $10.34$ & $14.18$ \\
			LDPC$(121,60)$ & $2.49$ & $3.35$ & $5.11$ & $8.23$ & $12.98$ & $2.58$ & $3.64$ & $5.79$ & $9.29$ & $14.94$ & $2.47$ & $3.52$ & $5.52$ & $9.04$ & $14.35$ & $2.62$ & $3.85$ & $6.23$ & $10.13$ & $16.08$ \\
			LDPC$(121,70)$ & $2.87$ & $4.10$ & $6.31$ & $10.11$ & $15.55$ & $3.00$ & $4.44$ & $7.04$ & $11.36$ & $17.28$ & $2.90$ & $4.35$ & $6.95$ & $11.31$ & $17.16$ & $3.10$ & $4.75$ & $7.75$ & $12.41$ & $18.98$ \\
			LDPC$(121,80)$ & $3.21$ & $4.67$ & $7.19$ & $11.40$ & $17.12$ & $3.35$ & $5.02$ & $8.00$ & $12.66$ & $18.24$ & $3.31$ & $5.05$ & $8.16$ & $12.69$ & $17.74$ & $3.52$ & $5.53$ & $8.77$ & $13.89$ & $21.39$ \\
			\addlinespace[2pt]
			MacKay$(96,48)$ & $3.43$ & $4.89$ & $7.13$ & $10.46$ & $14.51$ & $3.59$ & $5.27$ & $7.82$ & $11.53$ & $15.60$ & $3.67$ & $5.55$ & $8.28$ & $12.12$ & $15.95$ & $3.83$ & $5.68$ & $8.60$ & $12.40$ & $16.21$ \\
			\addlinespace[2pt]
			CCSDS$(128,64)$ & $2.96$ & $4.34$ & $6.75$ & $10.62$ & $15.81$ & $3.13$ & $4.78$ & $7.65$ & $12.19$ & $17.58$ & $3.14$ & $4.97$ & $7.98$ & $12.66$ & $17.18$ & $3.29$ & $5.23$ & $8.36$ & $13.18$ & $19.42$ \\
			\bottomrule
	\end{tabular}}
	\par\medskip
	\textbf{\small (b) CAL-ECC}\par\smallskip
	\resizebox{!}{0.13\textwidth}{%
		\begin{tabular}{l ccccc ccccc ccccc}
			\toprule
			\multirow{2}{*}{Code} & \multicolumn{5}{c}{$\tau{=}10^{-1}$} & \multicolumn{5}{c}{$\tau{=}10^{-2}$} & \multicolumn{5}{c}{$\tau{=}10^{-3}$} \\
			\cmidrule(lr){2-6}\cmidrule(lr){7-11}\cmidrule(lr){12-16}
			& 2 & 3 & 4 & 5 & 6 & 2 & 3 & 4 & 5 & 6 & 2 & 3 & 4 & 5 & 6 \\
			\midrule
			BCH$(31,16)$ & $3.78$ & $5.00$ & $6.61$ & $8.93$ & $11.83$ & $4.11$ & $5.45$ & $7.19$ & $9.56$ & $12.47$ & $4.17$ & $5.63$ & $7.48$ & $9.89$ & $13.04$ \\
			BCH$(63,36)$ & $3.00$ & $3.92$ & $5.36$ & $7.62$ & $10.71$ & $3.53$ & $4.70$ & $6.44$ & $8.78$ & $12.00$ & $3.80$ & $5.32$ & $7.43$ & $10.16$ & $13.53$ \\
			BCH$(63,45)$ & $3.32$ & $4.37$ & $6.04$ & $8.46$ & $11.86$ & $3.68$ & $5.09$ & $7.12$ & $9.74$ & $13.01$ & $3.77$ & $5.40$ & $7.82$ & $10.86$ & $14.49$ \\
			BCH$(63,51)$ & $3.29$ & $4.42$ & $6.22$ & $8.78$ & $12.27$ & $3.37$ & $4.65$ & $6.65$ & $9.38$ & $12.89$ & $3.37$ & $4.69$ & $6.80$ & $9.63$ & $13.28$ \\
			\addlinespace[2pt]
			Polar$(64,32)$ & $4.00$ & $5.57$ & $7.61$ & $10.30$ & $13.67$ & $4.25$ & $5.85$ & $7.89$ & $10.58$ & $13.89$ & $4.34$ & $5.95$ & $7.97$ & $10.67$ & $13.96$ \\
			Polar$(64,48)$ & $3.57$ & $4.81$ & $6.59$ & $8.63$ & $11.24$ & $3.66$ & $4.91$ & $6.63$ & $8.65$ & $11.25$ & $3.66$ & $4.94$ & $6.66$ & $8.67$ & $11.25$ \\
			Polar$(128,64)$ & $2.76$ & $4.03$ & $6.29$ & $9.94$ & $14.92$ & $3.06$ & $4.45$ & $6.73$ & $10.22$ & $15.13$ & $3.36$ & $5.10$ & $7.61$ & $11.22$ & $15.94$ \\
			Polar$(128,86)$ & $3.54$ & $5.34$ & $8.04$ & $11.77$ & $16.09$ & $3.86$ & $5.80$ & $8.51$ & $12.14$ & $16.54$ & $4.06$ & $6.17$ & $9.00$ & $12.65$ & $17.06$ \\
			Polar$(128,96)$ & $3.43$ & $5.06$ & $7.53$ & $10.54$ & $13.30$ & $3.63$ & $5.42$ & $7.83$ & $10.59$ & $13.32$ & $3.70$ & $5.63$ & $8.11$ & $10.67$ & $13.33$ \\
			\addlinespace[2pt]
			LDPC$(49,24)$ & $3.75$ & $5.24$ & $7.38$ & $10.35$ & $14.32$ & $4.35$ & $6.02$ & $8.21$ & $10.95$ & $14.65$ & $4.53$ & $6.40$ & $8.72$ & $11.49$ & $14.94$ \\
			LDPC$(121,60)$ & $2.65$ & $3.86$ & $6.24$ & $10.18$ & $16.31$ & $3.07$ & $4.53$ & $7.01$ & $10.64$ & $16.53$ & $3.36$ & $5.26$ & $8.25$ & $12.24$ & $17.91$ \\
			LDPC$(121,70)$ & $3.14$ & $4.76$ & $7.72$ & $12.55$ & $19.19$ & $3.70$ & $5.54$ & $8.40$ & $13.10$ & $19.47$ & $4.14$ & $6.47$ & $9.85$ & $14.43$ & $20.30$ \\
			LDPC$(121,80)$ & $3.53$ & $5.46$ & $8.77$ & $13.81$ & $21.09$ & $4.05$ & $6.23$ & $9.45$ & $14.30$ & $21.61$ & $4.37$ & $6.96$ & $10.58$ & $15.42$ & $22.38$ \\
			\addlinespace[2pt]
			MacKay$(96,48)$ & $3.88$ & $5.92$ & $8.85$ & $12.72$ & $16.26$ & $4.56$ & $6.57$ & $9.40$ & $13.01$ & $16.34$ & $5.09$ & $7.43$ & $10.31$ & $13.40$ & $16.40$ \\
			\addlinespace[2pt]
			CCSDS$(128,64)$ & $3.32$ & $5.23$ & $8.59$ & $13.64$ & $21.33$ & $3.78$ & $5.72$ & $8.96$ & $13.83$ & $21.57$ & $4.35$ & $6.72$ & $9.95$ & $14.70$ & $21.79$ \\
			\bottomrule
	\end{tabular}}
	\par\smallskip\parbox{\textheight}{\footnotesize \textit{Note.} Within every block the five columns are $E_b/N_0=2,3,4,5,6$~dB, and codes appear in the same order in (a) and (b).}
\end{sidewaystable}

\begin{sidewaystable}[p]
	\centering\setlength{\tabcolsep}{3pt}\renewcommand{\arraystretch}{1.05}
	\caption{Complete FER results at $E_b/N_0\in\{2,\dots,6\}$~dB, where lower is better.}
	\label{tab:app-fer}
	\textbf{\small (a) Baselines}\par\smallskip
	\resizebox{!}{0.13\textwidth}{%
		\begin{tabular}{l ccccc ccccc ccccc ccccc}
			\toprule
			\multirow{2}{*}{Code} & \multicolumn{5}{c}{ECCT} & \multicolumn{5}{c}{CrossMPT} & \multicolumn{5}{c}{DDECC} & \multicolumn{5}{c}{SB-ECC} \\
			\cmidrule(lr){2-6}\cmidrule(lr){7-11}\cmidrule(lr){12-16}\cmidrule(lr){17-21}
			& 2 & 3 & 4 & 5 & 6 & 2 & 3 & 4 & 5 & 6 & 2 & 3 & 4 & 5 & 6 & 2 & 3 & 4 & 5 & 6 \\
			\midrule
			BCH$(31,16)$ & $2.3$e$-1$ & $9.4$e$-2$ & $2.6$e$-2$ & $4.6$e$-3$ & $4.2$e$-4$ & $1.1$e$-1$ & $3.3$e$-2$ & $5.9$e$-3$ & $6.1$e$-4$ & $3.3$e$-5$ & $4.8$e$-1$ & $2.7$e$-1$ & $1.2$e$-1$ & $4.2$e$-2$ & $1.2$e$-2$ & $7.6$e$-2$ & $1.9$e$-2$ & $2.9$e$-3$ & $2.3$e$-4$ & $1.1$e$-5$ \\
			BCH$(63,36)$ & $6.7$e$-1$ & $3.9$e$-1$ & $1.4$e$-1$ & $2.8$e$-2$ & $2.2$e$-3$ & $6.3$e$-1$ & $3.4$e$-1$ & $1.1$e$-1$ & $1.8$e$-2$ & $1.3$e$-3$ & $4.9$e$-1$ & $2.3$e$-1$ & $7.0$e$-2$ & $1.3$e$-2$ & $1.3$e$-3$ & $4.1$e$-1$ & $1.7$e$-1$ & $4.6$e$-2$ & $6.6$e$-3$ & $3.8$e$-4$ \\
			BCH$(63,45)$ & $6.0$e$-1$ & $3.0$e$-1$ & $8.5$e$-2$ & $1.2$e$-2$ & $7.0$e$-4$ & $5.9$e$-1$ & $2.9$e$-1$ & $7.7$e$-2$ & $9.4$e$-3$ & $5.8$e$-4$ & $5.1$e$-1$ & $2.3$e$-1$ & $5.8$e$-2$ & $9.7$e$-3$ & $1.2$e$-3$ & $3.7$e$-1$ & $1.2$e$-1$ & $2.1$e$-2$ & $1.5$e$-3$ & $4.9$e$-5$ \\
			BCH$(63,51)$ & $6.0$e$-1$ & $3.0$e$-1$ & $8.3$e$-2$ & $1.3$e$-2$ & $7.1$e$-4$ & $5.6$e$-1$ & $2.5$e$-1$ & $6.1$e$-2$ & $7.0$e$-3$ & $3.0$e$-4$ & $6.5$e$-1$ & $3.6$e$-1$ & $1.3$e$-1$ & $2.8$e$-2$ & $5.0$e$-3$ & $5.2$e$-1$ & $2.3$e$-1$ & $5.7$e$-2$ & $7.2$e$-3$ & $3.2$e$-4$ \\
			\addlinespace[2pt]
			Polar$(64,32)$ & $3.4$e$-1$ & $1.2$e$-1$ & $2.6$e$-2$ & $2.7$e$-3$ & $1.8$e$-4$ & $1.8$e$-1$ & $4.5$e$-2$ & $6.9$e$-3$ & $5.5$e$-4$ & $2.7$e$-5$ & $2.2$e$-1$ & $6.4$e$-2$ & $1.2$e$-2$ & $1.5$e$-3$ & $1.9$e$-4$ & $1.4$e$-1$ & $3.2$e$-2$ & $4.5$e$-3$ & $3.3$e$-4$ & $1.5$e$-5$ \\
			Polar$(64,48)$ & $4.7$e$-1$ & $2.0$e$-1$ & $4.7$e$-2$ & $6.6$e$-3$ & $4.6$e$-4$ & $4.0$e$-1$ & $1.5$e$-1$ & $3.2$e$-2$ & $4.0$e$-3$ & $2.7$e$-4$ & $4.8$e$-1$ & $2.0$e$-1$ & $5.3$e$-2$ & $9.4$e$-3$ & $1.1$e$-3$ & $3.0$e$-1$ & $9.8$e$-2$ & $2.0$e$-2$ & $2.8$e$-3$ & $2.1$e$-4$ \\
			Polar$(128,64)$ & $9.3$e$-1$ & $6.6$e$-1$ & $2.5$e$-1$ & $3.3$e$-2$ & $1.6$e$-3$ & $7.9$e$-1$ & $4.0$e$-1$ & $8.2$e$-2$ & $4.6$e$-3$ & $7.2$e$-5$ & $2.1$e$-1$ & $3.6$e$-2$ & $2.3$e$-3$ & $6.8$e$-5$ & $4.0$e$-6$ & $2.6$e$-1$ & $5.4$e$-2$ & $4.6$e$-3$ & $1.5$e$-4$ & $6.0$e$-6$ \\
			Polar$(128,86)$ & $7.7$e$-1$ & $3.9$e$-1$ & $8.7$e$-2$ & $7.2$e$-3$ & $3.0$e$-4$ & $5.5$e$-1$ & $1.7$e$-1$ & $1.8$e$-2$ & $7.7$e$-4$ & $1.5$e$-5$ & $3.3$e$-1$ & $7.3$e$-2$ & $7.1$e$-3$ & $2.6$e$-4$ & $6.6$e$-6$ & $3.3$e$-1$ & $7.1$e$-2$ & $6.9$e$-3$ & $3.1$e$-4$ & $6.7$e$-6$ \\
			Polar$(128,96)$ & $7.8$e$-1$ & $3.8$e$-1$ & $8.0$e$-2$ & $6.4$e$-3$ & $2.8$e$-4$ & $6.3$e$-1$ & $2.2$e$-1$ & $3.0$e$-2$ & $1.7$e$-3$ & $9.3$e$-5$ & $4.6$e$-1$ & $1.2$e$-1$ & $1.4$e$-2$ & $1.1$e$-3$ & $6.8$e$-5$ & $4.6$e$-1$ & $1.3$e$-1$ & $1.5$e$-2$ & $9.9$e$-4$ & $6.0$e$-5$ \\
			\addlinespace[2pt]
			LDPC$(49,24)$ & $3.5$e$-1$ & $1.3$e$-1$ & $2.6$e$-2$ & $2.6$e$-3$ & $8.9$e$-5$ & $2.5$e$-1$ & $7.3$e$-2$ & $1.1$e$-2$ & $7.4$e$-4$ & $1.9$e$-5$ & $3.1$e$-1$ & $1.1$e$-1$ & $1.9$e$-2$ & $1.7$e$-3$ & $6.2$e$-5$ & $1.6$e$-1$ & $4.0$e$-2$ & $5.3$e$-3$ & $2.2$e$-4$ & $4.6$e$-6$ \\
			LDPC$(121,60)$ & $8.5$e$-1$ & $4.9$e$-1$ & $1.2$e$-1$ & $7.0$e$-3$ & $8.7$e$-5$ & $7.2$e$-1$ & $3.1$e$-1$ & $4.2$e$-2$ & $1.5$e$-3$ & $7.4$e$-6$ & $6.7$e$-1$ & $2.6$e$-1$ & $3.8$e$-2$ & $1.2$e$-3$ & $7.1$e$-6$ & $6.1$e$-1$ & $2.0$e$-1$ & $1.9$e$-2$ & $4.0$e$-4$ & $1.1$e$-6$ \\
			LDPC$(121,70)$ & $7.2$e$-1$ & $2.9$e$-1$ & $4.1$e$-2$ & $1.2$e$-3$ & $7.2$e$-6$ & $5.6$e$-1$ & $1.6$e$-1$ & $1.4$e$-2$ & $2.3$e$-4$ & $1.6$e$-6$ & $4.9$e$-1$ & $1.3$e$-1$ & $1.0$e$-2$ & $1.4$e$-4$ & $1.5$e$-6$ & $4.1$e$-1$ & $8.6$e$-2$ & $4.5$e$-3$ & $4.4$e$-5$ & $6.5$e$-8$ \\
			LDPC$(121,80)$ & $6.0$e$-1$ & $1.9$e$-1$ & $1.9$e$-2$ & $4.0$e$-4$ & $2.3$e$-6$ & $4.6$e$-1$ & $1.1$e$-1$ & $6.5$e$-3$ & $7.6$e$-5$ & $9.8$e$-7$ & $3.9$e$-1$ & $7.4$e$-2$ & $3.5$e$-3$ & $4.1$e$-5$ & $9.8$e$-7$ & $3.1$e$-1$ & $4.5$e$-2$ & $1.9$e$-3$ & $1.1$e$-5$ & $7.0$e$-9$ \\
			\addlinespace[2pt]
			MacKay$(96,48)$ & $4.4$e$-1$ & $1.4$e$-1$ & $1.9$e$-2$ & $9.0$e$-4$ & $2.0$e$-5$ & $3.2$e$-1$ & $7.9$e$-2$ & $7.6$e$-3$ & $2.6$e$-4$ & $6.0$e$-6$ & $2.0$e$-1$ & $3.3$e$-2$ & $2.4$e$-3$ & $6.3$e$-5$ & $3.1$e$-6$ & $1.7$e$-1$ & $2.9$e$-2$ & $1.8$e$-3$ & $4.9$e$-5$ & $1.4$e$-6$ \\
			\addlinespace[2pt]
			CCSDS$(128,64)$ & $6.7$e$-1$ & $2.5$e$-1$ & $3.5$e$-2$ & $1.1$e$-3$ & $9.3$e$-6$ & $5.0$e$-1$ & $1.3$e$-1$ & $9.7$e$-3$ & $1.4$e$-4$ & $2.2$e$-6$ & $3.2$e$-1$ & $5.5$e$-2$ & $2.8$e$-3$ & $3.1$e$-5$ & $2.5$e$-6$ & $2.9$e$-1$ & $4.5$e$-2$ & $2.1$e$-3$ & $1.8$e$-5$ & $6.3$e$-8$ \\
			\bottomrule
	\end{tabular}}
	\par\medskip
	\textbf{\small (b) CAL-ECC}\par\smallskip
	\resizebox{!}{0.13\textwidth}{%
		\begin{tabular}{l ccccc ccccc ccccc}
			\toprule
			\multirow{2}{*}{Code} & \multicolumn{5}{c}{$\tau{=}10^{-1}$} & \multicolumn{5}{c}{$\tau{=}10^{-2}$} & \multicolumn{5}{c}{$\tau{=}10^{-3}$} \\
			\cmidrule(lr){2-6}\cmidrule(lr){7-11}\cmidrule(lr){12-16}
			& 2 & 3 & 4 & 5 & 6 & 2 & 3 & 4 & 5 & 6 & 2 & 3 & 4 & 5 & 6 \\
			\midrule
			BCH$(31,16)$ & $1.1$e$-1$ & $3.5$e$-2$ & $7.4$e$-3$ & $7.8$e$-4$ & $4.6$e$-5$ & $6.5$e$-2$ & $1.8$e$-2$ & $3.3$e$-3$ & $3.5$e$-4$ & $2.1$e$-5$ & $6.1$e$-2$ & $1.4$e$-2$ & $2.3$e$-3$ & $2.2$e$-4$ & $9.6$e$-6$ \\
			BCH$(63,36)$ & $4.4$e$-1$ & $2.0$e$-1$ & $5.4$e$-2$ & $6.3$e$-3$ & $3.1$e$-4$ & $2.3$e$-1$ & $8.4$e$-2$ & $1.7$e$-2$ & $1.9$e$-3$ & $8.6$e$-5$ & $1.6$e$-1$ & $4.0$e$-2$ & $5.6$e$-3$ & $4.3$e$-4$ & $1.7$e$-5$ \\
			BCH$(63,45)$ & $3.9$e$-1$ & $1.6$e$-1$ & $3.5$e$-2$ & $3.3$e$-3$ & $1.2$e$-4$ & $2.0$e$-1$ & $5.9$e$-2$ & $9.8$e$-3$ & $8.6$e$-4$ & $3.7$e$-5$ & $1.6$e$-1$ & $3.5$e$-2$ & $3.6$e$-3$ & $2.2$e$-4$ & $7.3$e$-6$ \\
			BCH$(63,51)$ & $3.8$e$-1$ & $1.4$e$-1$ & $2.7$e$-2$ & $2.3$e$-3$ & $7.5$e$-5$ & $3.1$e$-1$ & $9.3$e$-2$ & $1.4$e$-2$ & $1.0$e$-3$ & $3.4$e$-5$ & $3.1$e$-1$ & $8.9$e$-2$ & $1.1$e$-2$ & $7.2$e$-4$ & $2.0$e$-5$ \\
			\addlinespace[2pt]
			Polar$(64,32)$ & $1.2$e$-1$ & $2.7$e$-2$ & $3.8$e$-3$ & $2.7$e$-4$ & $9.7$e$-6$ & $9.3$e$-2$ & $2.0$e$-2$ & $2.9$e$-3$ & $2.0$e$-4$ & $7.5$e$-6$ & $8.7$e$-2$ & $1.9$e$-2$ & $2.6$e$-3$ & $1.8$e$-4$ & $6.9$e$-6$ \\
			Polar$(64,48)$ & $2.9$e$-1$ & $9.9$e$-2$ & $1.9$e$-2$ & $2.8$e$-3$ & $2.1$e$-4$ & $2.7$e$-1$ & $9.3$e$-2$ & $1.9$e$-2$ & $2.7$e$-3$ & $2.1$e$-4$ & $2.7$e$-1$ & $9.1$e$-2$ & $1.9$e$-2$ & $2.7$e$-3$ & $2.1$e$-4$ \\
			Polar$(128,64)$ & $5.4$e$-1$ & $1.7$e$-1$ & $2.0$e$-2$ & $5.9$e$-4$ & $4.7$e$-6$ & $3.8$e$-1$ & $1.1$e$-1$ & $1.3$e$-2$ & $4.4$e$-4$ & $3.8$e$-6$ & $2.8$e$-1$ & $5.4$e$-2$ & $5.1$e$-3$ & $1.6$e$-4$ & $1.8$e$-6$ \\
			Polar$(128,86)$ & $3.2$e$-1$ & $6.3$e$-2$ & $4.8$e$-3$ & $1.3$e$-4$ & $1.9$e$-6$ & $2.2$e$-1$ & $3.8$e$-2$ & $2.9$e$-3$ & $8.7$e$-5$ & $1.1$e$-6$ & $1.8$e$-1$ & $2.7$e$-2$ & $1.8$e$-3$ & $5.1$e$-5$ & $6.3$e$-7$ \\
			Polar$(128,96)$ & $4.1$e$-1$ & $9.9$e$-2$ & $1.1$e$-2$ & $7.5$e$-4$ & $5.3$e$-5$ & $3.1$e$-1$ & $7.0$e$-2$ & $8.3$e$-3$ & $7.2$e$-4$ & $5.2$e$-5$ & $2.9$e$-1$ & $5.8$e$-2$ & $6.7$e$-3$ & $6.9$e$-4$ & $5.1$e$-5$ \\
			\addlinespace[2pt]
			LDPC$(49,24)$ & $1.4$e$-1$ & $3.4$e$-2$ & $4.2$e$-3$ & $2.2$e$-4$ & $4.2$e$-6$ & $5.9$e$-2$ & $1.3$e$-2$ & $1.6$e$-3$ & $1.1$e$-4$ & $2.9$e$-6$ & $4.9$e$-2$ & $7.9$e$-3$ & $8.8$e$-4$ & $6.1$e$-5$ & $2.0$e$-6$ \\
			LDPC$(121,60)$ & $5.9$e$-1$ & $2.0$e$-1$ & $2.0$e$-2$ & $4.1$e$-4$ & $9.5$e$-7$ & $3.7$e$-1$ & $9.6$e$-2$ & $9.0$e$-3$ & $2.6$e$-4$ & $7.7$e$-7$ & $2.6$e$-1$ & $4.3$e$-2$ & $2.5$e$-3$ & $5.1$e$-5$ & $1.9$e$-7$ \\
			LDPC$(121,70)$ & $4.1$e$-1$ & $8.9$e$-2$ & $4.9$e$-3$ & $4.3$e$-5$ & $6.1$e$-8$ & $2.2$e$-1$ & $4.0$e$-2$ & $2.5$e$-3$ & $2.5$e$-5$ & $4.6$e$-8$ & $1.3$e$-1$ & $1.4$e$-2$ & $5.7$e$-4$ & $6.5$e$-6$ & $2.0$e$-8$ \\
			LDPC$(121,80)$ & $3.1$e$-1$ & $5.0$e$-2$ & $2.0$e$-3$ & $1.4$e$-5$ & $1.2$e$-8$ & $1.6$e$-1$ & $2.2$e$-2$ & $9.9$e$-4$ & $8.7$e$-6$ & $7.0$e$-9$ & $9.5$e$-2$ & $8.5$e$-3$ & $2.8$e$-4$ & $2.6$e$-6$ & $3.0$e$-9$ \\
			\addlinespace[2pt]
			MacKay$(96,48)$ & $1.5$e$-1$ & $2.2$e$-2$ & $1.3$e$-3$ & $3.5$e$-5$ & $1.3$e$-6$ & $6.7$e$-2$ & $1.1$e$-2$ & $7.5$e$-4$ & $2.7$e$-5$ & $1.2$e$-6$ & $3.6$e$-2$ & $4.2$e$-3$ & $3.2$e$-4$ & $2.0$e$-5$ & $1.2$e$-6$ \\
			\addlinespace[2pt]
			CCSDS$(128,64)$ & $2.8$e$-1$ & $4.5$e$-2$ & $1.7$e$-3$ & $1.2$e$-5$ & $6.0$e$-9$ & $1.7$e$-1$ & $2.8$e$-2$ & $1.2$e$-3$ & $1.0$e$-5$ & $4.0$e$-9$ & $9.2$e$-2$ & $9.8$e$-3$ & $4.4$e$-4$ & $4.2$e$-6$ & $3.0$e$-9$ \\
			\bottomrule
	\end{tabular}}
	\par\smallskip\parbox{\textheight}{\footnotesize \textit{Note.} Within every block the five columns are $E_b/N_0=2,3,4,5,6$~dB, and codes appear in the same order in (a) and (b).}
\end{sidewaystable}


\section{The $\tau$ Axis}
\label{app:tau}

Section~\ref{subsec:exp-nfe} illustrated the interaction between $\tau$ and
cost on a single code. Table~\ref{tab:app-tau} summarises it over all cells,
grouped by certification rate.

The relationship is strong, with the median cost of moving from $\tau=10^{-1}$ to
$10^{-3}$ at $1.06\times$ above $95\%$ certification and $8.74\times$ below
$50\%$. The mechanism is the one described in
Section~\ref{subsec:exp-cert}. Certified frames stop at the same point
regardless of $\tau$, so lowering it can only deepen the search on the
uncertified fraction, and because the residual errors live almost entirely in
that fraction the search is deepened exactly where it can still help.

The ranges in the last column show the limit of this reasoning. Certification
bounds the \emph{fraction} of frames on which $\tau$ acts, but not the
\emph{depth} to which those frames search. Three cells, BCH$(63,36)$,
$(63,45)$ and $(63,51)$ at $5$~dB, have certification above $95\%$ yet cost
$8.6\times$, $4.4\times$ and $2.1\times$. On BCH$(63,36)$ the $1.4\%$
uncertified frames have $p_{99}$ growing from $52$ to $545$ as $\tau$ falls. A
high certification rate therefore guarantees that additional accuracy is
confined to a small set of frames, not that it is free. It is nearly free when those
frames are also shallow, which is the situation at $6$~dB. At the other end the
same accounting explains why $\tau$ is an expensive axis at low $E_b/N_0$. On
Polar$(128,64)$ at $2$~dB certification is $0.3\%$, so essentially every frame
is searched to full depth and the cost multiplier reaches $46.5\times$ for a
gain of $+0.60$ in $-\ln(\mathrm{BER})$. We do not recommend operating there.

Finally, a smaller $\tau$ is wasted where the candidate set already contains
the optimum. On Polar$(64,48)$, MacKay$(96,48)$ and Polar$(128,96)$ at $6$~dB
the gain from $10^{-1}$ to $10^{-3}$ is $+0.01$, $+0.14$ and $+0.03$ at no
additional cost. The search has nothing left to find. These cells are
identifiable in advance from the certification rate together with the spacing
of the FER curves in Figure~\ref{fig:ml-ceiling}, which is the diagnostic value
of reporting the $\tau$ family rather than a single operating point.

\begin{table}[t]
	\centering\small
	\caption{Cost of lowering $\tau$ from $10^{-1}$ to $10^{-3}$, grouped by certification rate over all $75$ cells.}
	\label{tab:app-tau}
	\begin{tabular}{lrrr}
		\toprule
		$\mathrm{cert}\%$ at $\tau=10^{-3}$ & cells & median cost & range \\
		\midrule
		$>95\%$ & $17$ & $1.06\times$ & $1.00$--$8.61\times$ \\
		$80$--$95\%$ & $12$ & $1.02\times$ & $1.00$--$19.87\times$ \\
		$50$--$80\%$ & $15$ & $1.28\times$ & $1.00$--$27.93\times$ \\
		$<50\%$ & $31$ & $8.74\times$ & $1.08$--$48.85\times$ \\
		\bottomrule
	\end{tabular}
\end{table}

\section{Reliability Signal against Heuristic Baselines}
\label{app:reliability}

Section~\ref{subsec:exp-cert} separates frame errors into detected failures,
whose output is not a codeword, and undetected errors, whose output is a
valid codeword. Detected failures arise when the search finds no valid
candidate and falls back to the hard decision, so their output has cost
$\Phi=0$ and fails the syndrome check; any decoder that checks its syndrome
flags them, and they are not evidence for the certificate. On CCSDS,
BCH$(63,36)$, Polar$(128,64)$ and LDPC$(121,70)$ at $4$~dB they are
$93$--$100\%$ of all errors (Table~\ref{tab:cert}). LDPC$(49,24)$ is the
opposite case, with $7\%$ detected and a median $\Phi$ of $12.39$ on
erroneous outputs against $3.28$ on correct ones.

Table~\ref{tab:app-reliability} restricts to undetected errors and compares
three ways of flagging the same fraction of valid outputs: the uncertified
frames, the top fraction by output cost $\Phi$, and the top fraction by NFE.
On the five cells with at least $100$ undetected errors the certificate
captures $96.7$--$100\%$ and the $\Phi$ threshold $88.5$--$100\%$, the
certificate ahead by $3$--$8$ points on four of the five and tied on
LDPC$(121,80)$. The NFE threshold tracks the certificate almost exactly,
because a certified frame stops immediately and its NFE is a shadow of the
certificate; but its budget is the certification rate itself, which without
the certificate would have to be tuned per cell. The remaining four cells
have $0$--$36$ undetected errors and support no comparison.

\begin{table}[t]
	\centering\setlength{\tabcolsep}{5pt}\small
	\caption{Capture of \emph{undetected} errors at $E_b/N_0=4$~dB, $\tau=10^{-3}$, when the same fraction of valid outputs is flagged by three rules: uncertified frames (the certificate), highest output cost $\Phi$, and highest NFE. Cells are ordered by the number of undetected errors; those below $100$ are listed for completeness only.}
	\label{tab:app-reliability}
	\begin{tabular}{lrrrrr}
		\toprule
		Code & undetected errors & flag\% & certificate & $\Phi$ threshold & NFE threshold \\
		\midrule
		BCH$(31,16)$    & $573$ & $10.8$ & $96.7\%$  & $88.5\%$  & $96.9\%$  \\
		BCH$(63,51)$    & $565$ & $20.4$ & $97.3\%$  & $91.7\%$  & $97.3\%$  \\
		LDPC$(49,24)$   & $490$ & $18.7$ & $98.8\%$  & $95.7\%$  & $98.8\%$  \\
		MacKay$(96,48)$ & $310$ & $40.5$ & $98.1\%$  & $91.6\%$  & $98.4\%$  \\
		LDPC$(121,80)$  & $169$ & $27.6$ & $100.0\%$ & $100.0\%$ & $100.0\%$ \\
		\addlinespace[2pt]
		LDPC$(121,70)$  & $36$  & $37.0$ & $100.0\%$ & $91.7\%$  & $100.0\%$ \\
		Polar$(128,64)$ & $28$  & $81.2$ & $100.0\%$ & $100.0\%$ & $100.0\%$ \\
		BCH$(63,36)$    & $14$  & $10.7$ & $100.0\%$ & $100.0\%$ & $50.0\%$  \\
		CCSDS$(128,64)$ & $0$   & ---    & ---       & ---       & ---       \\
		\bottomrule
	\end{tabular}
\end{table}

\section{Greedy Dual Ascent against the Exact LP Optimum}
\label{app:lpgap}

The greedy ascent of Section~\ref{subsec:cert} returns a feasible point of
\eqref{eq:lp-dual}, so $\Lambda_{\mathrm{D}}\le\mathrm{LP}^\star\le\Phi^\star$
where $\mathrm{LP}^\star$ is the optimum of the relaxation. To measure the
gap we regenerate $20{,}000$ frames per cell from the seeded frame sequence,
compute $\Lambda_{\mathrm{D}}$ and $\mathrm{LP}^\star$ (solved exactly with
HiGHS) for each, and join them to the per-frame output of the decoder at
$\tau=10^{-3}$. Table~\ref{tab:app-lpgap} reports the fraction of frames with
$\mathrm{LP}^\star>\Lambda_{\mathrm{D}}$, the mean and $90$th percentile of
$\mathrm{LP}^\star/\Lambda_{\mathrm{D}}$ over those frames, and the frames the
exact bound would additionally have certified, namely uncertified frames with
a valid output satisfying $\Lambda_{\mathrm{D}}<\Pinc\le\mathrm{LP}^\star$.
This last figure is a lower bound on the gain, since it counts only frames
whose final output was already ML-optimal and omits earlier certification of
an intermediate incumbent.

The greedy bound is close to optimal in value, with a mean gap of $1$--$9\%$.
On sparse $H$ that gap falls where incumbents concentrate, and the exact
optimum would certify a further $2.7$--$7.3\%$ of all frames, or $7$--$20\%$
of the uncertified ones, raising LDPC$(121,70)$ from $63.0\%$ to about
$70\%$. On the dense BCH codes the gap is the largest ($p_{90}$ up to $1.27$)
and the additional certification is zero, so the limit there is the
relaxation, which uses only the covering constraints of unsatisfied checks,
and not the ascent. Polar$(128,64)$ lies between at $4.4\%$. We keep the
greedy ascent because it runs as a batched tensor operation and the exact LP
costs milliseconds per frame on a CPU; the next lever on dense $H$ is the
Feldman polytope rather than the exact optimum.

\begin{table}[t]
	\centering\setlength{\tabcolsep}{5pt}\small
	\caption{Greedy dual bound $\Lambda_{\mathrm{D}}$ against the exact LP optimum $\mathrm{LP}^\star$ at $E_b/N_0=4$~dB, $\tau=10^{-3}$, $20{,}000$ frames per cell. ``gap'' is the fraction of frames with $\mathrm{LP}^\star>\Lambda_{\mathrm{D}}$; mean and $p_{90}$ are of $\mathrm{LP}^\star/\Lambda_{\mathrm{D}}$ over those frames. ``added cert.'' is the fraction of all frames (and of uncertified frames) that the exact optimum would additionally have certified, a lower bound on its gain.}
	\label{tab:app-lpgap}
	\begin{tabular}{lrrrrr}
		\toprule
		Code & gap & mean & $p_{90}$ & added cert.\ (all) & (of uncert.) \\
		\midrule
		MacKay$(96,48)$ & $51.4\%$ & $1.012$ & $1.039$ & $2.94\%$ & $7.3\%$  \\
		LDPC$(49,24)$   & $41.3\%$ & $1.021$ & $1.078$ & $2.70\%$ & $14.5\%$ \\
		BCH$(31,16)$    & $27.7\%$ & $1.046$ & $1.173$ & $0.19\%$ & $1.8\%$  \\
		CCSDS$(128,64)$ & $62.1\%$ & $1.030$ & $1.101$ & $6.10\%$ & $10.4\%$ \\
		LDPC$(121,70)$  & $57.4\%$ & $1.032$ & $1.110$ & $7.25\%$ & $19.7\%$ \\
		LDPC$(121,80)$  & $47.8\%$ & $1.020$ & $1.074$ & $3.94\%$ & $14.3\%$ \\
		Polar$(128,64)$ & $70.5\%$ & $1.055$ & $1.154$ & $4.39\%$ & $5.4\%$  \\
		BCH$(63,36)$    & $49.9\%$ & $1.088$ & $1.272$ & $0.00\%$ & $0.0\%$  \\
		BCH$(63,51)$    & $25.8\%$ & $1.065$ & $1.229$ & $0.29\%$ & $1.4\%$  \\
		\bottomrule
	\end{tabular}
\end{table}

\section{Additional Figures}
\label{app:figures}

\begin{table}[t]
	\centering\setlength{\tabcolsep}{4pt}\small
	\caption{FER at $E_b/N_0=5$~dB against the exact ML reference. CAL-ECC at $\tau=10^{-3}$, with codes sorted by block length. No decoder can go below the reference beyond finite-sample fluctuation (Section~\ref{subsec:exp-ml}). $^\ddagger$The reference for LDPC$(121,80)$ is censored at this operating point and is quoted for scale only. The uncensored comparison for that code is at $4$~dB, where it is $2.82{\times}10^{-4}$ against $6.67{\times}10^{-5}$.}
	\label{tab:ml-ceiling}
	\begin{tabular}{lccccc|c}
		\toprule
		Code & ECCT & CrossMPT & DDECC & SB-ECC & \textbf{CAL-ECC} & exact ML \\
		\midrule
		BCH$(31,16)$    & $4.59{\times}10^{-3}$ & $6.13{\times}10^{-4}$ & $4.20{\times}10^{-2}$ & $2.32{\times}10^{-4}$ & $\bm{2.16{\times}10^{-4}}$ & $2.31{\times}10^{-4}$ \\
		LDPC$(49,24)$   & $2.62{\times}10^{-3}$ & $7.38{\times}10^{-4}$ & $1.65{\times}10^{-3}$ & $2.19{\times}10^{-4}$ & $\bm{6.11{\times}10^{-5}}$ & $5.21{\times}10^{-5}$ \\
		BCH$(63,51)$    & $1.28{\times}10^{-2}$ & $7.02{\times}10^{-3}$ & $2.78{\times}10^{-2}$ & $7.16{\times}10^{-3}$ & $\bm{7.24{\times}10^{-4}}$ & $6.89{\times}10^{-4}$ \\
		MacKay$(96,48)$ & $9.02{\times}10^{-4}$ & $2.57{\times}10^{-4}$ & $6.28{\times}10^{-5}$ & $4.87{\times}10^{-5}$ & $\bm{2.00{\times}10^{-5}}$ & $1.80{\times}10^{-5}$ \\
		LDPC$(121,80)$  & $3.97{\times}10^{-4}$ & $7.64{\times}10^{-5}$ & $4.12{\times}10^{-5}$ & $1.15{\times}10^{-5}$ & $\bm{2.65{\times}10^{-6}}$ & $4.50{\times}10^{-7}$$^\ddagger$ \\
		\bottomrule
	\end{tabular}
\end{table}

\paragraph{What that leaves to optimize.}
No decoder, whether neural, algebraic or exhaustive, can enter the shaded
region of Figure~\ref{fig:app-fer}, so on these codes the only axis on
which a decoder can still improve is computation. On BCH$(63,51)$ we attain
the ceiling at an average of $0.6$ function evaluations per frame at $6$~dB,
below the $\mathrm{NFE}=1$ of a single forward pass. Part of that figure is
structural rather than earned: at $6$~dB the syndrome of BCH$(63,51)$ is
already zero on $70.4\%$ of frames, which cost no forward pass under any
decoder's accounting, and the remaining frames average $2.0$ evaluations,
still an order of magnitude below the $T\times L_{\max}$ upper bound of the
generator. The margin over prior work on these codes is not uniform. On
BCH$(31,16)$ at $5$~dB SB-ECC also attains the ceiling, at $1.00\times$
against our $0.93\times$, and the shortest code in our suite leaves no room
for either decoder to distinguish itself. The gap opens on the longer codes:
on BCH$(63,51)$ the strongest baseline sits at $10.2\times$ the ceiling while
we sit at $1.05\times$, and on LDPC$(121,80)$ at $4$~dB, where the reference
is uncensored, the figures are $28.5\times$ and $4.23\times$, a residual gap
roughly one seventh of the previous state of the art.

\paragraph{The remaining gap grows with block length.}
LDPC$(121,80)$ is the one code in Table~\ref{tab:ml-ceiling} that has not
converged, and the ordering follows block length rather than code family,
which is consistent with the mechanism: the candidate set is generated by a
fixed number of sequential commitments, and the probability of missing
$\hat{\mathbf{e}}_{\mathrm{ML}}$ compounds with $n$. Every grey curve stops at
$5$ or $6$~dB, where the integer program becomes infeasible at the frame
counts needed, and four of our fifteen codes have no published reference at
all. Above those cut-offs no external check on optimality is available, which
is exactly where the certificates apply.

Figures~\ref{fig:app-ber} and~\ref{fig:app-fer} extend the comparison of
Section~\ref{subsec:exp-main} to the full code suite in BER and FER
respectively. CAL-ECC appears as a family of three curves rather than as a
single point, which no fixed-budget decoder offers. The family is widest at low
$E_b/N_0$ and narrows as certification saturates, closing entirely on the cells
identified in Appendix~\ref{app:tau}.

Figure~\ref{fig:app-fer} carries the exact ML reference wherever one is
published, which is the case for eleven of the fifteen codes, and the shaded region
below it is unachievable by any decoder. Five of those panels are reproduced in
Figure~\ref{fig:ml-ceiling} of the main text. Seeing all eleven together makes
the pattern visible in one view. On BCH$(31,16)$ and BCH$(63,51)$ the CAL-ECC
family has collapsed onto the ceiling and no accuracy remains to be won,
whereas on the longer codes the family is still separated. Every reference
terminates at $5$ or $6$~dB, above which no external check on optimality
exists and the certificates supply the per-frame optimality statement in its
place.

\newcommand{\gridfig}[2]{%
	\begin{subfigure}[b]{0.3\textwidth}
		#1
		\caption{#2}
\end{subfigure}}

\begin{figure}[p]
	\centering
	\gridfig{\includegraphics[width=\linewidth]{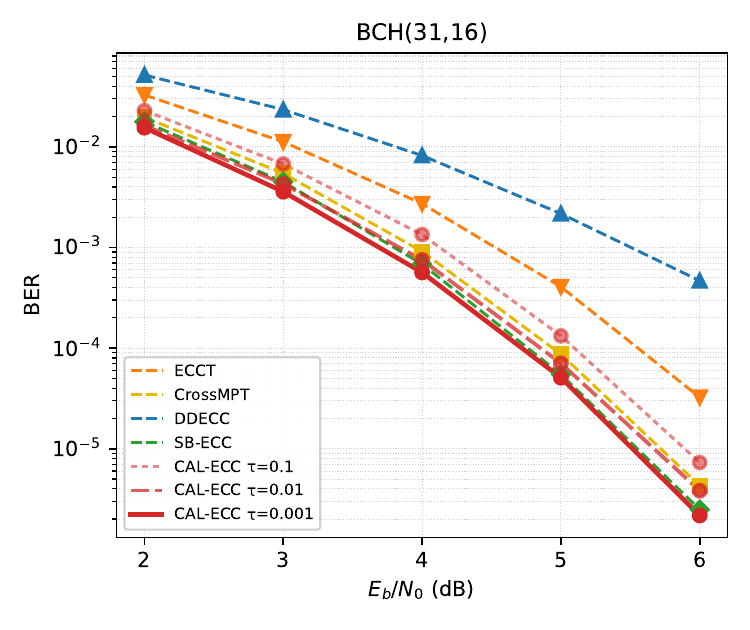}}{BCH$(31,16)$}\hfill
	\gridfig{\includegraphics[width=\linewidth]{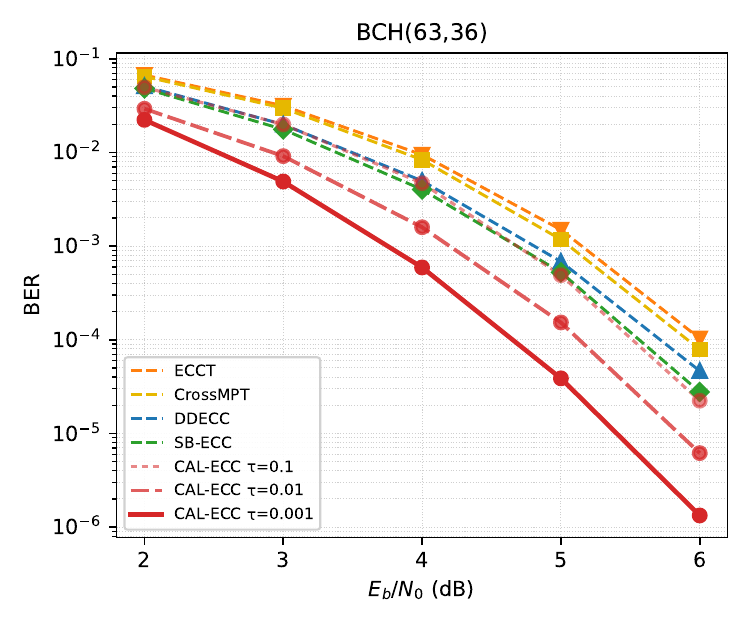}}{BCH$(63,36)$}\hfill
	\gridfig{\includegraphics[width=\linewidth]{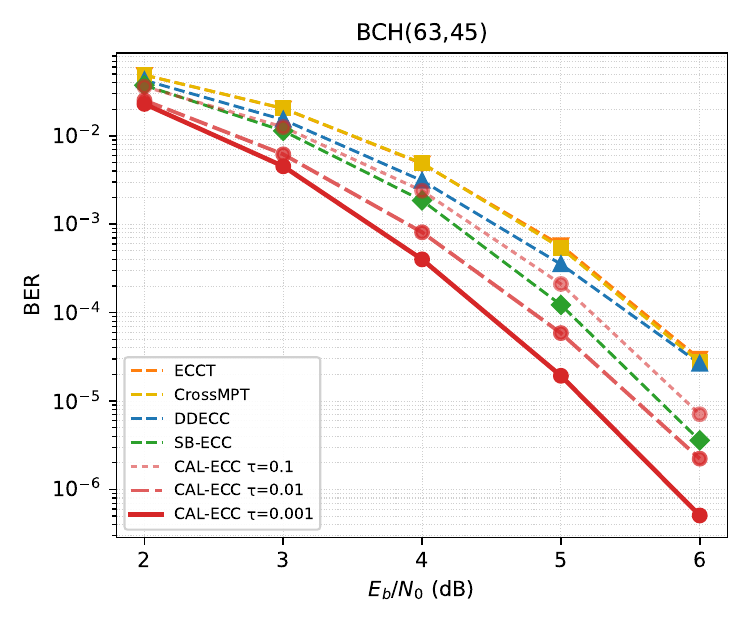}}{BCH$(63,45)$}\\[3pt]
	\gridfig{\includegraphics[width=\linewidth]{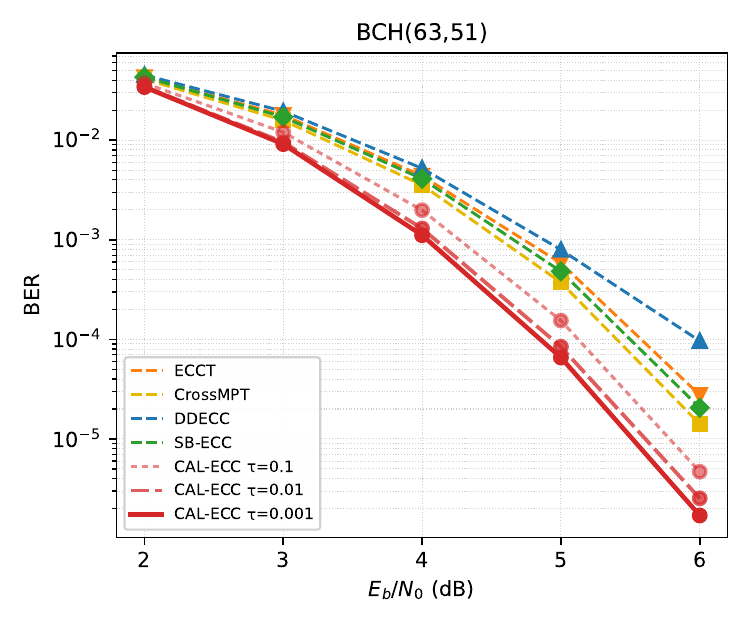}}{BCH$(63,51)$}\hfill
	\gridfig{\includegraphics[width=\linewidth]{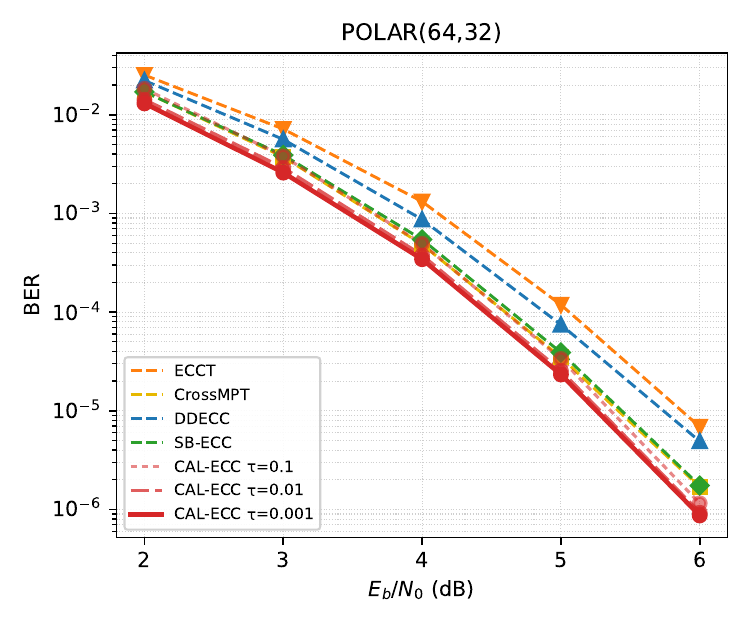}}{Polar$(64,32)$}\hfill
	\gridfig{\includegraphics[width=\linewidth]{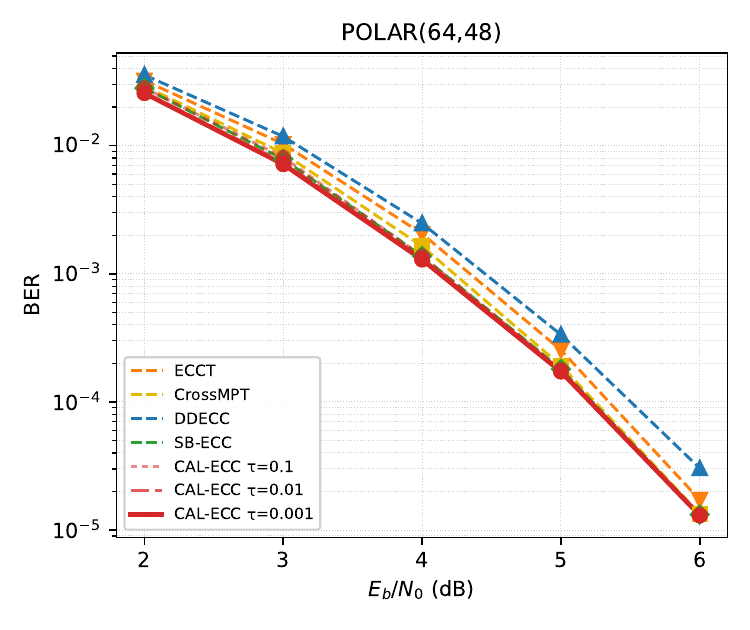}}{Polar$(64,48)$}\\[3pt]
	\gridfig{\includegraphics[width=\linewidth]{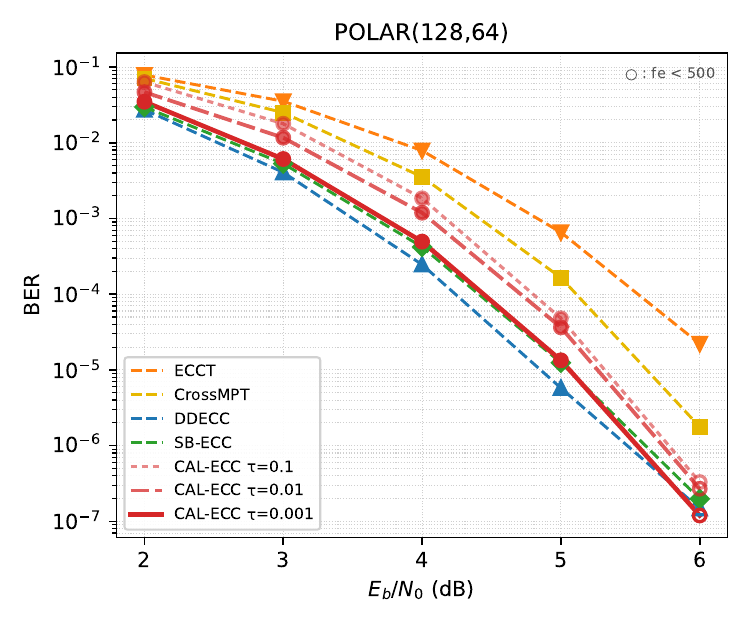}}{Polar$(128,64)$}\hfill
	\gridfig{\includegraphics[width=\linewidth]{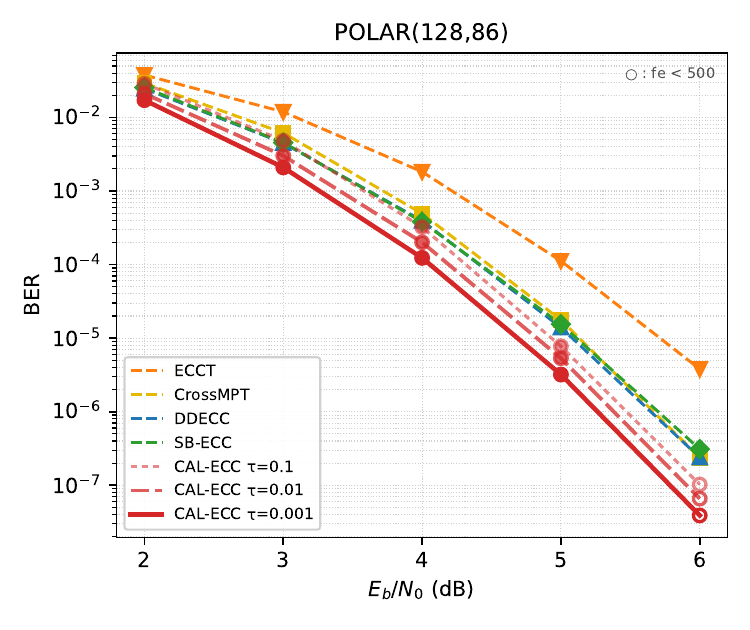}}{Polar$(128,86)$}\hfill
	\gridfig{\includegraphics[width=\linewidth]{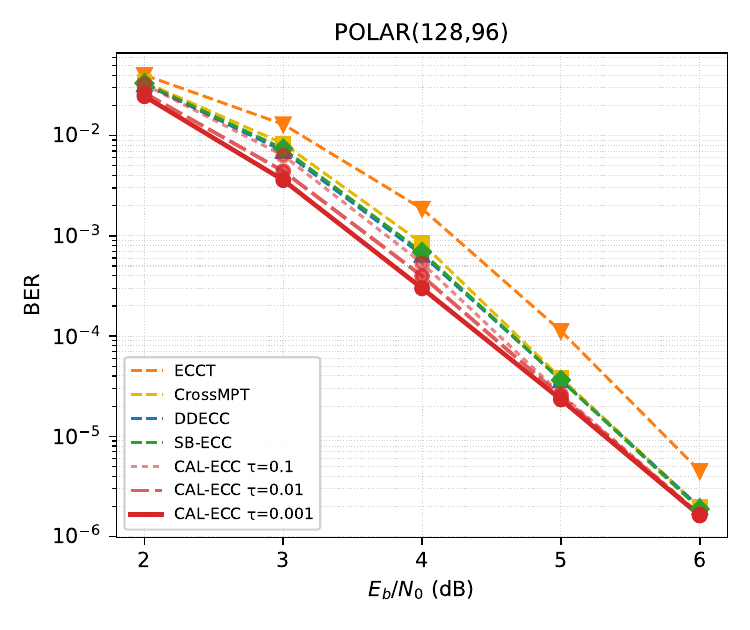}}{Polar$(128,96)$}\\[3pt]
	\gridfig{\includegraphics[width=\linewidth]{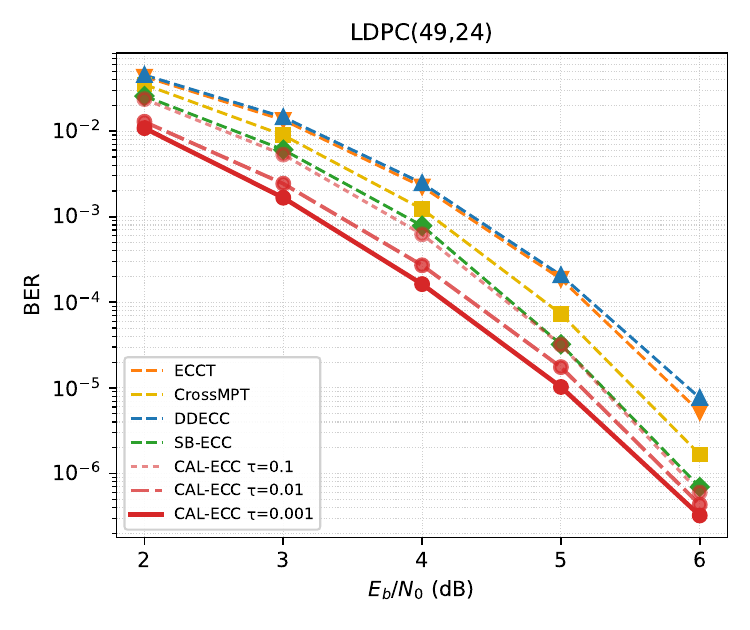}}{LDPC$(49,24)$}\hfill
	\gridfig{\includegraphics[width=\linewidth]{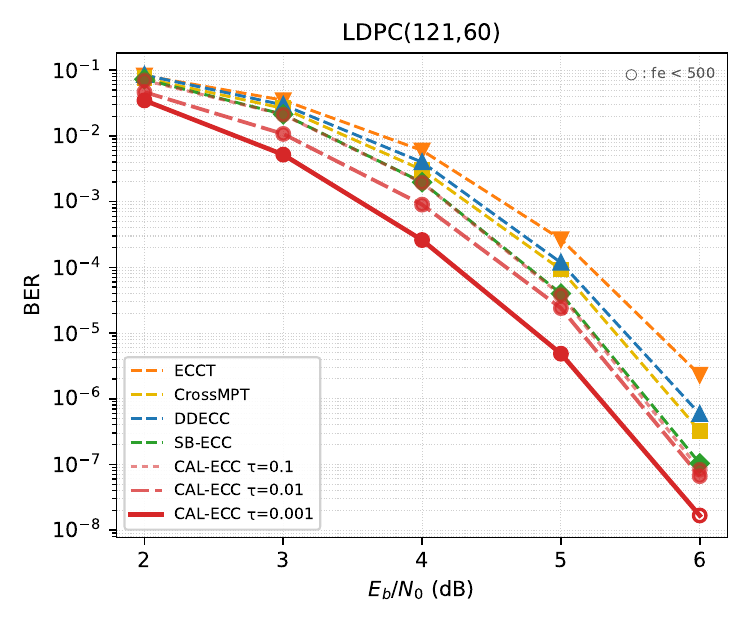}}{LDPC$(121,60)$}\hfill
	\gridfig{\includegraphics[width=\linewidth]{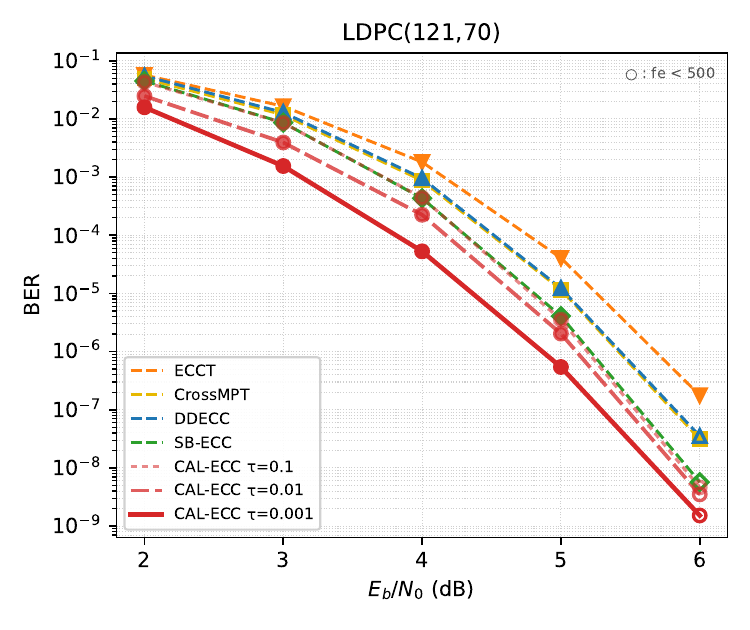}}{LDPC$(121,70)$}\\[3pt]
	\gridfig{\includegraphics[width=\linewidth]{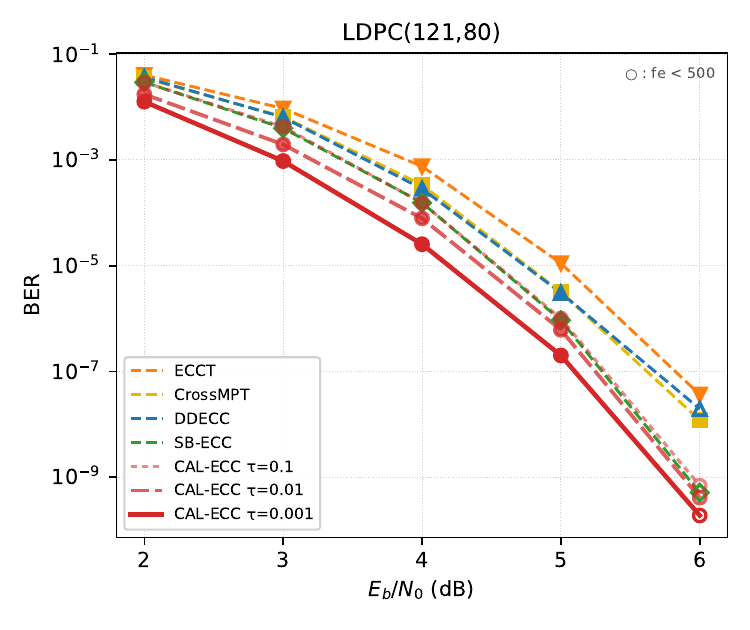}}{LDPC$(121,80)$}\hfill
	\gridfig{\includegraphics[width=\linewidth]{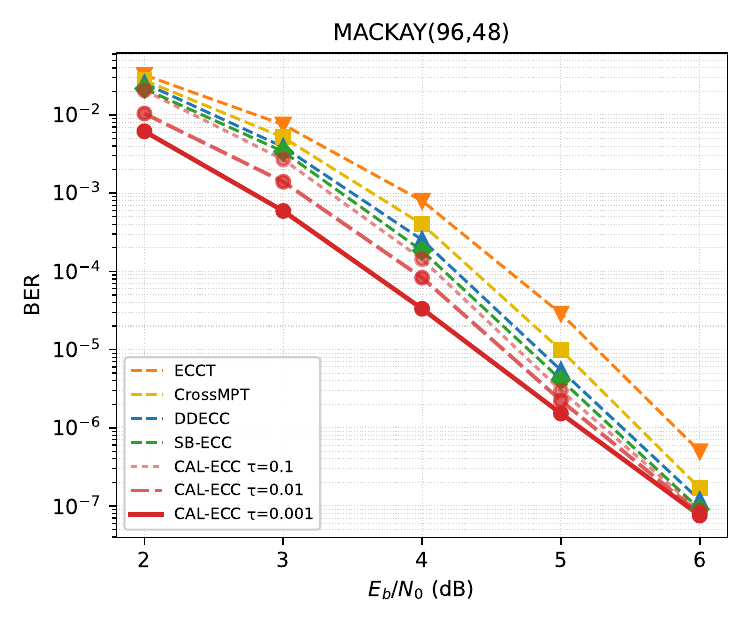}}{MacKay$(96,48)$}\hfill
	\gridfig{\includegraphics[width=\linewidth]{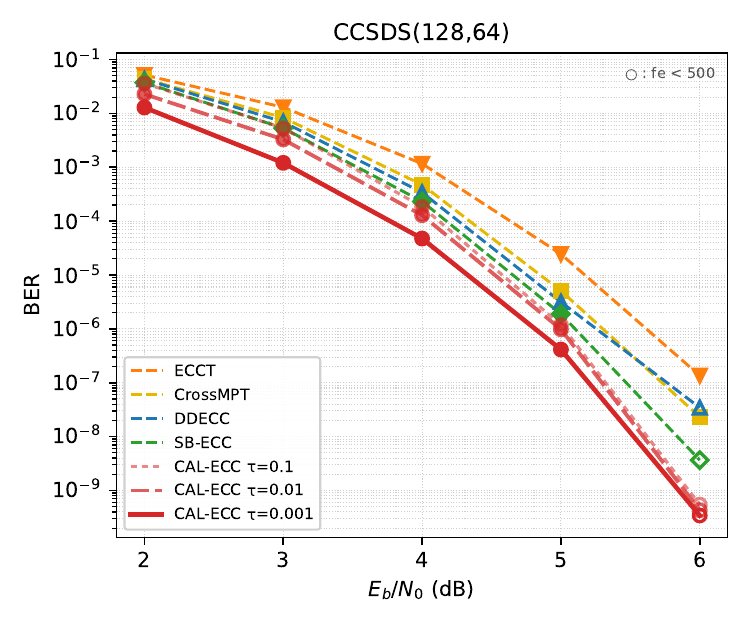}}{CCSDS$(128,64)$}
	\caption{BER on all fifteen codes. CAL-ECC curves darken as $\tau$
		decreases, and baselines are dashed. The CAL-ECC family is nested, with lower $\tau$ shifting the curve down without changing its slope, which is the signature of a decoder approaching a fixed ceiling rather than one whose
		error exponent has changed.}
	\label{fig:app-ber}
\end{figure}

\begin{figure}[p]
	\centering
	\gridfig{\includegraphics[width=\linewidth]{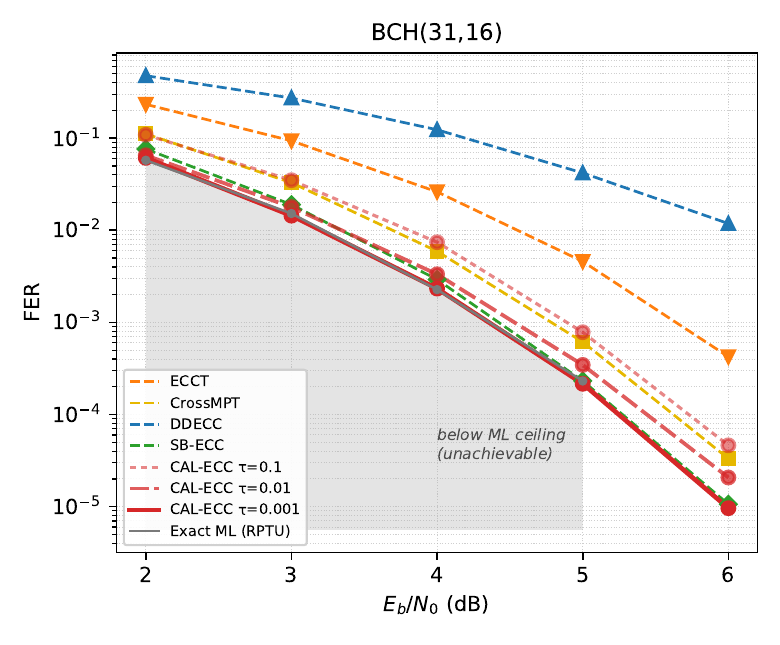}}{BCH$(31,16)^\ast$}\hfill
	\gridfig{\includegraphics[width=\linewidth]{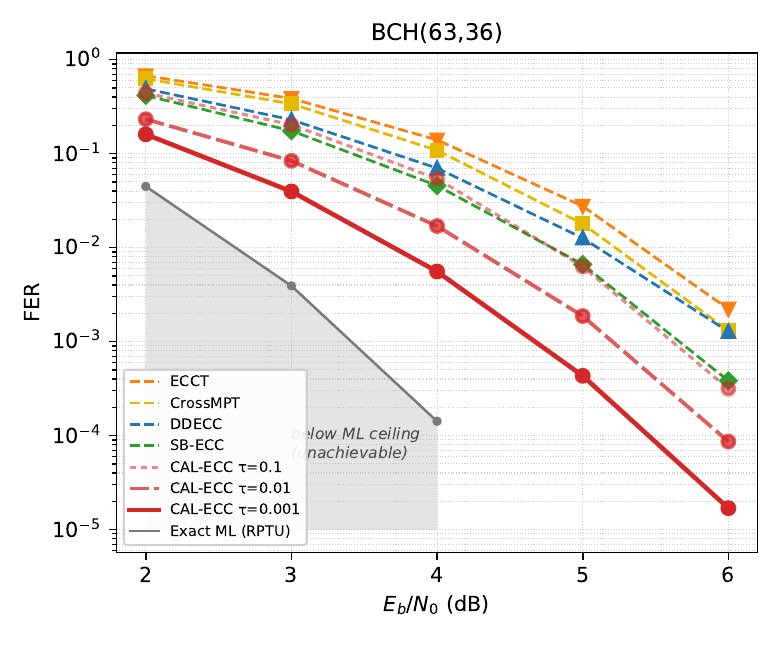}}{BCH$(63,36)^\ast$}\hfill
	\gridfig{\includegraphics[width=\linewidth]{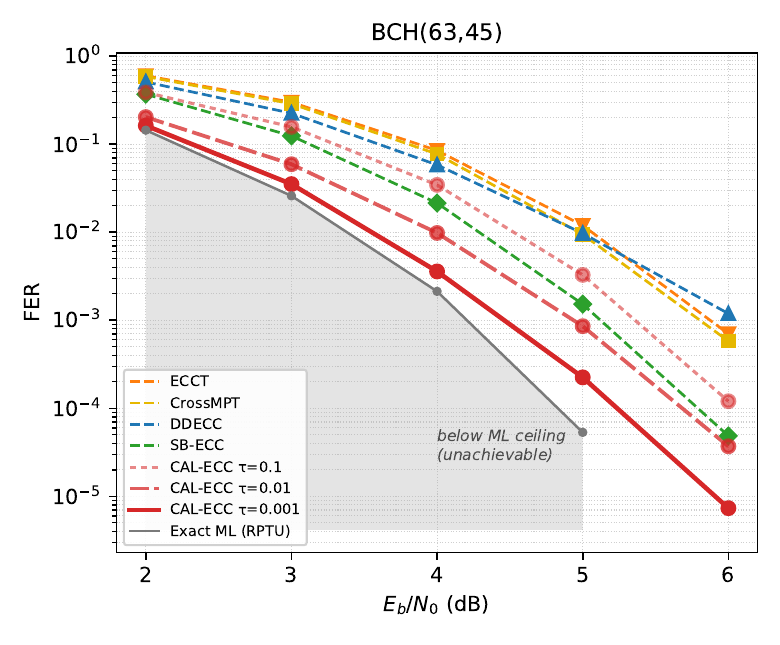}}{BCH$(63,45)^\ast$}\\[3pt]
	\gridfig{\includegraphics[width=\linewidth]{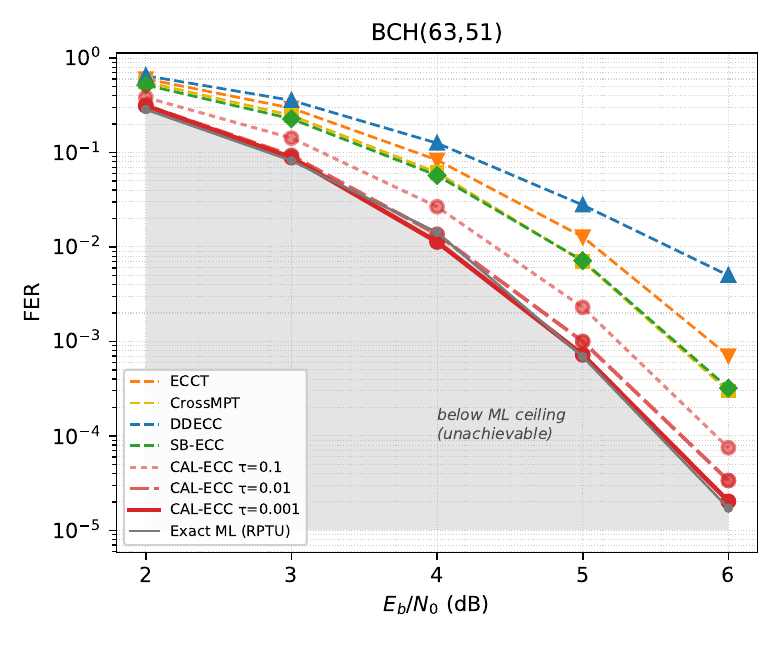}}{BCH$(63,51)^\ast$}\hfill
	\gridfig{\includegraphics[width=\linewidth]{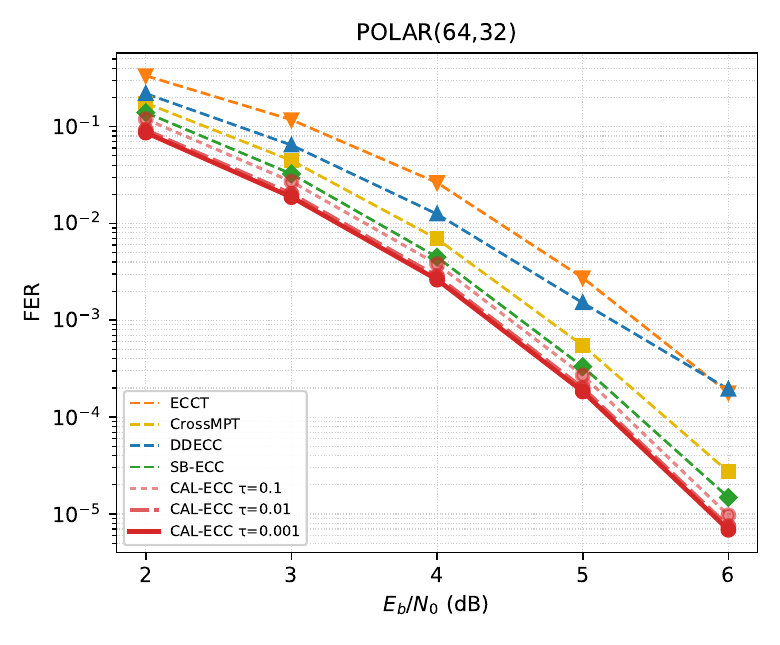}}{Polar$(64,32)$}\hfill
	\gridfig{\includegraphics[width=\linewidth]{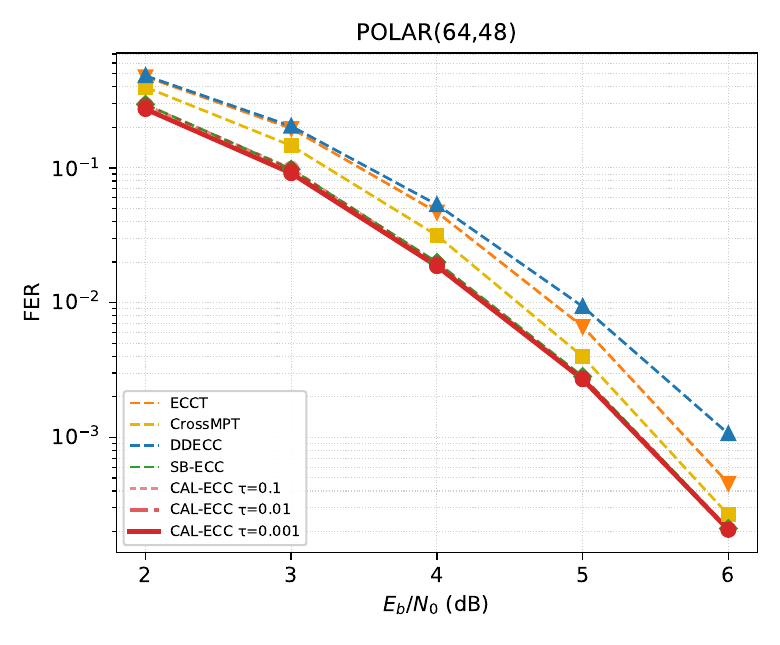}}{Polar$(64,48)$}\\[3pt]
	\gridfig{\includegraphics[width=\linewidth]{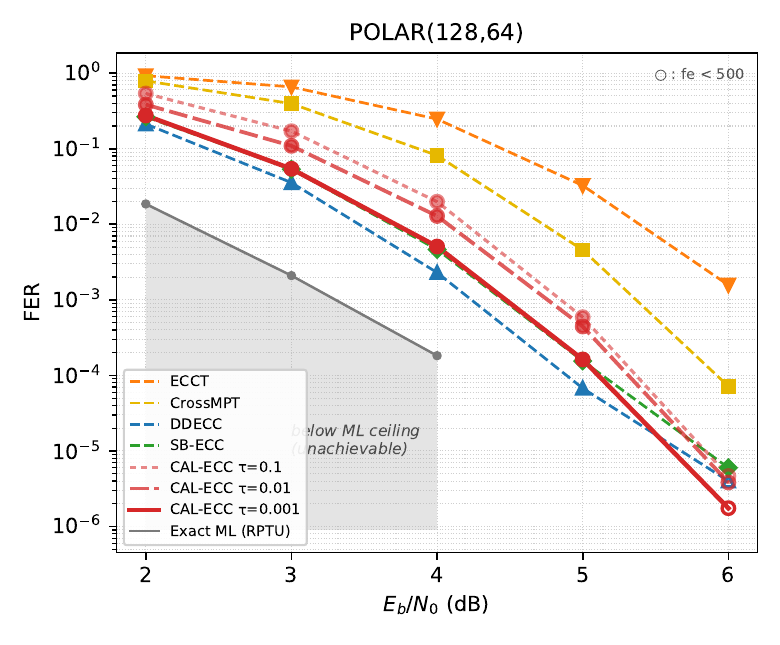}}{Polar$(128,64)^\ast$}\hfill
	\gridfig{\includegraphics[width=\linewidth]{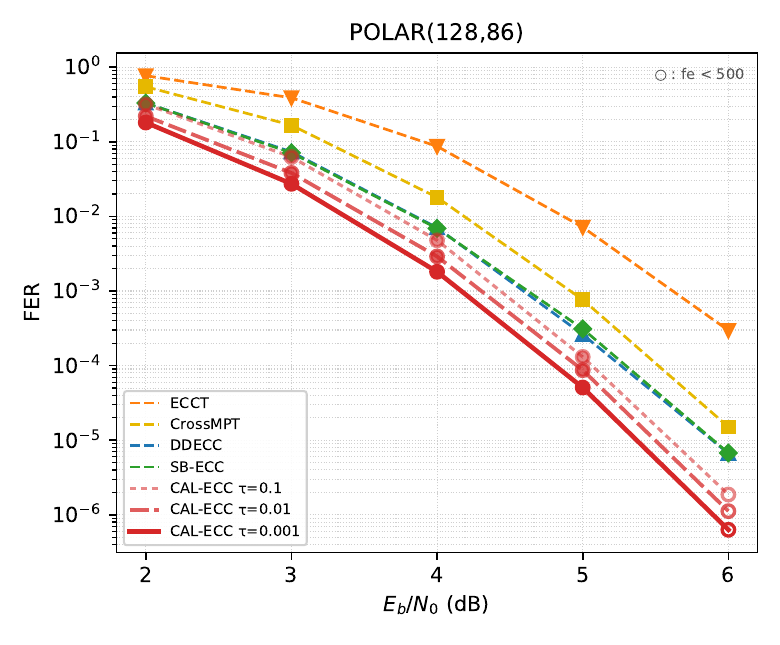}}{Polar$(128,86)$}\hfill
	\gridfig{\includegraphics[width=\linewidth]{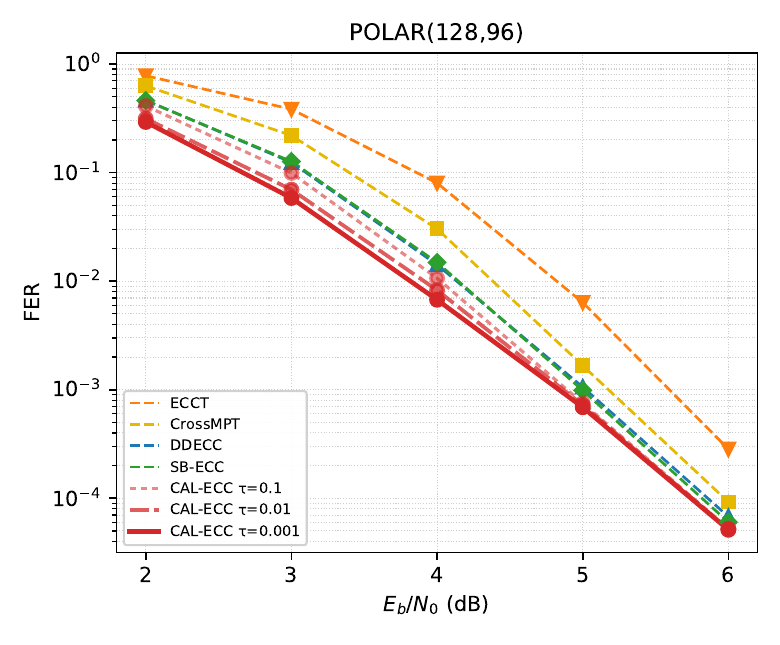}}{Polar$(128,96)$}\\[3pt]
	\gridfig{\includegraphics[width=\linewidth]{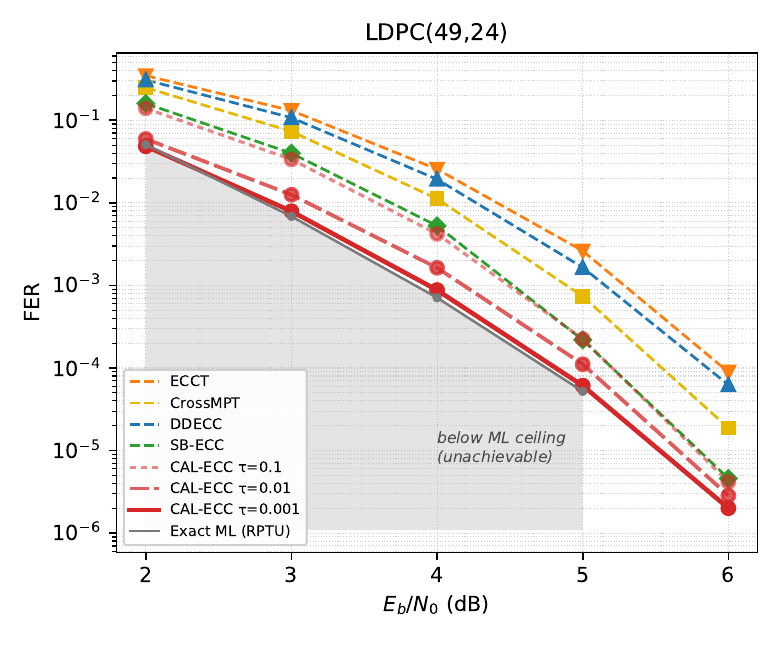}}{LDPC$(49,24)^\ast$}\hfill
	\gridfig{\includegraphics[width=\linewidth]{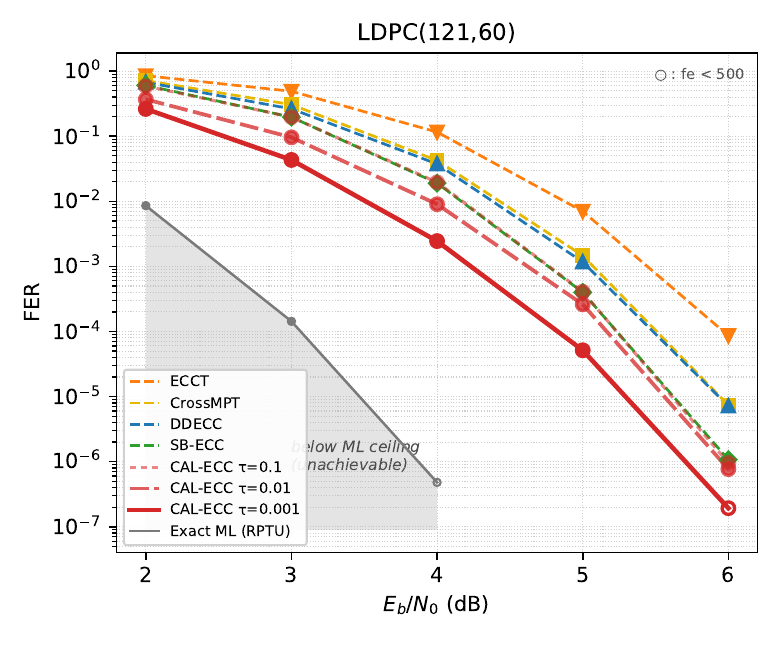}}{LDPC$(121,60)^\ast$}\hfill
	\gridfig{\includegraphics[width=\linewidth]{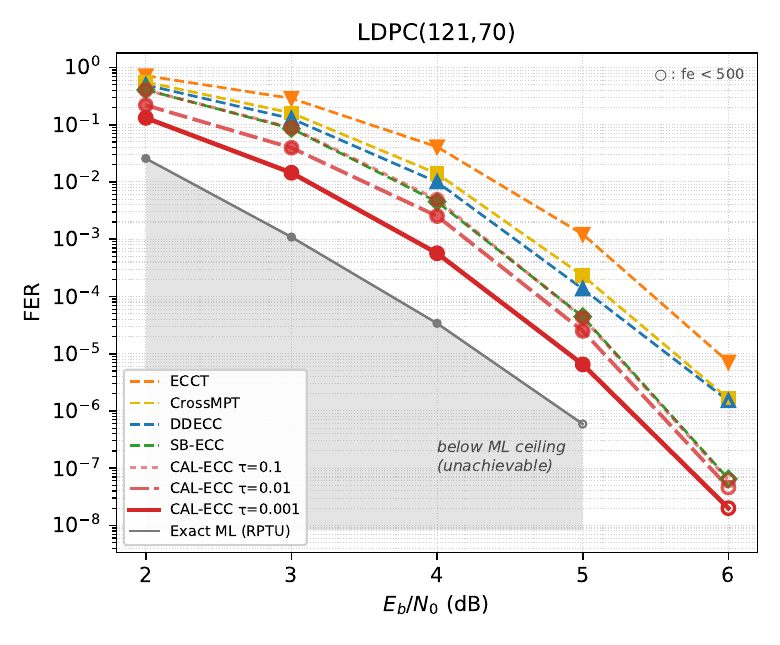}}{LDPC$(121,70)^\ast$}\\[3pt]
	\gridfig{\includegraphics[width=\linewidth]{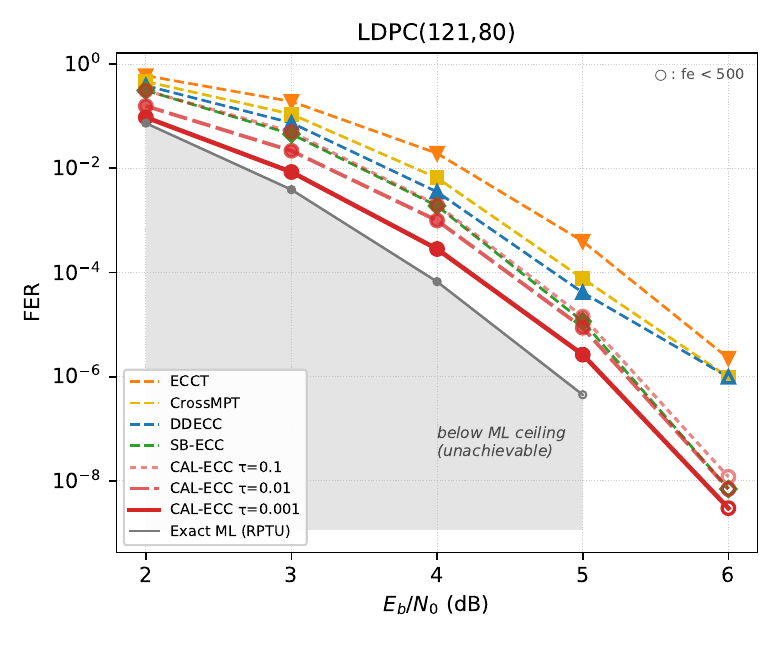}}{LDPC$(121,80)^\ast$}\hfill
	\gridfig{\includegraphics[width=\linewidth]{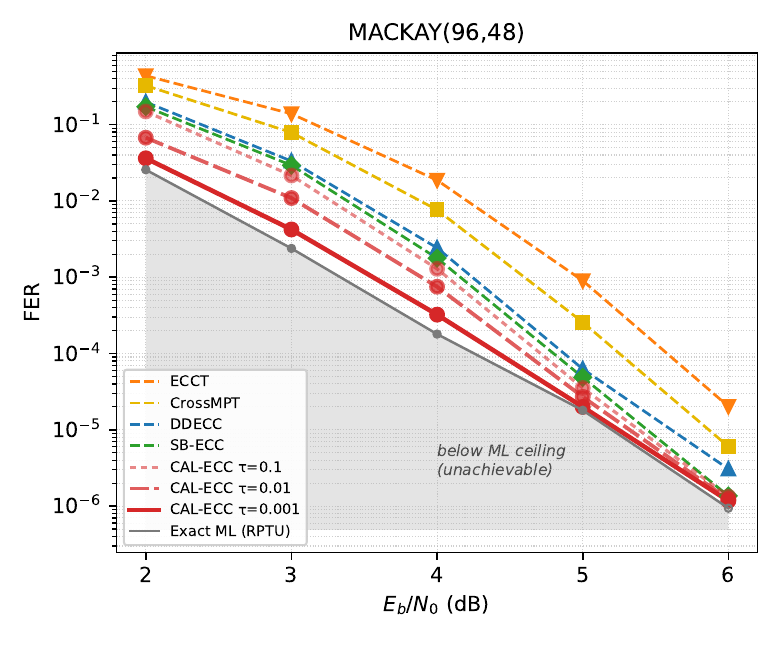}}{MacKay$(96,48)^\ast$}\hfill
	\gridfig{\includegraphics[width=\linewidth]{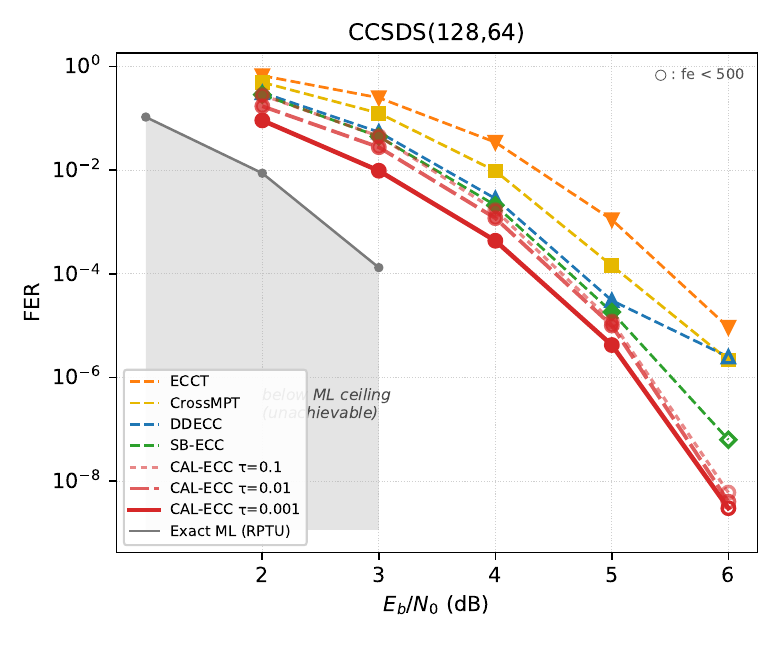}}{CCSDS$(128,64)^\ast$}
	\caption{FER on all fifteen codes, same conventions as
		Figure~\ref{fig:app-ber}. Panels marked $^\ast$ carry the exact ML reference (grey) with the unachievable region shaded. A reference is published for eleven of the fifteen codes~\citep{helmling2025}, and five of those are reproduced in
		Figure~\ref{fig:ml-ceiling}. The reference terminates at $5$ or $6$~dB
		because the integer program becomes infeasible at the frame counts needed
		below that FER.}
	\label{fig:app-fer}
\end{figure}


\end{document}

%% file: math_commands.tex
\usepackage{amsmath,amsfonts,bm}

\def\eqref#1{equation~\ref{#1}}

\def\1{\bm{1}}

\DeclareMathAlphabet{\mathsfit}{\encodingdefault}{\sfdefault}{m}{sl}
\SetMathAlphabet{\mathsfit}{bold}{\encodingdefault}{\sfdefault}{bx}{n}

